\documentclass[envcountsect,envcountsame,orivec]{llncs}
\usepackage[T1]{fontenc}
\usepackage{graphicx}
\usepackage{amsmath,amssymb,mathtools}
\usepackage{booktabs,tabularx,array}
\usepackage{xcolor}
\usepackage[hypertexnames=false]{hyperref}
\usepackage{bm}
\usepackage{algorithm}
\usepackage[noend]{algpseudocode}
\usepackage{tikz}
\usetikzlibrary{arrows.meta,calc,positioning}
\definecolor{samplerYellow}{RGB}{255,230,35}
\definecolor{listBlue}{RGB}{235,242,252}
\definecolor{solutionRed}{RGB}{248,225,225}
\definecolor{solutionGreen}{RGB}{225,242,225}
\definecolor{guideGray}{RGB}{185,185,185}
\hypersetup{colorlinks=true,linkcolor=blue!45!black,citecolor=blue!45!black,urlcolor=blue!45!black,
 pdftitle={Smooth Sailing through Spherical Shells: Provable Random-Lattice Sieving in Time 2\string^(0.292n)},
 pdfsubject={Provable Random-Lattice Sieving}}

\newcommand{\R}{\mathbb R}
\newcommand{\Z}{\mathbb Z}
\newcommand{\E}{\mathbb E}
\newcommand{\Prb}{\mathbb P}
\newcommand{\vol}{\operatorname{vol}}
\newcommand{\cL}{\mathcal{L}}
\newcommand{\Rlambda}{R_{\lambda_1}}
\newcommand{\Unif}{\operatorname{Unif}}
\newcommand{\1}{\mathbf 1}
\newcommand{\poly}{\operatorname{poly}}

\newcommand{\Kpre}{K_{\rm pre}}
\newcommand{\Rcat}{R_{\mathrm{cat}}}

\title{Smooth Sailing through Spherical Shells: Provable Random-Lattice Sieving in Time \texorpdfstring{$2^{0.292n}$}{2\string^(0.292n)}}
\titlerunning{Smooth Sailing through Spherical Shells}
\author{Emmanouil Doulgerakis \and Thijs Laarhoven}
\authorrunning{Emmanouil Doulgerakis \and Thijs Laarhoven}
\institute{}

\begin{document}
\mainmatter
\maketitle
\pagestyle{plain}
\thispagestyle{plain}

\begin{abstract}
In an attempt to close the gap between the best provable and heuristic algorithms for hard lattice problems, such as the shortest (SVP) and closest vector problem (CVP), we analyze lattice sieving on Haar-random unimodular lattices. With well-chosen modifications to heuristic sieving, we show that the heuristic assumptions are no longer necessary, and we can provably achieve the same complexities as heuristic sieving on Haar-random lattices for various lattice problems. Concretely, with probability $1 - o(1)$ over the randomness of the Haar-random lattice and the algorithmic randomness, we show how to:
\begin{enumerate}
    \item[1.] Solve SVP in time $2^{0.2924\ldots n + o(n)}$ and space $2^{0.2075\ldots n + o(n)}$;
    \item[2.] Solve CVP for \textit{random} targets with the same complexities;
    \item[3.] Produce $2^{0.2075\ldots n + o(n)}$ discrete Gaussian samples at \textit{any} width with these complexities, up to a $2^{-\Omega(n)}$ error in the joint distribution.
\end{enumerate}
This improves on the SVP complexities for Haar-random lattices of Pouly--Shen [Eurocrypt, 2026] running in time $2^{0.633n + o(n)}$ and space $2^{0.5n + o(n)}$, as well as the recent worst-case SVP (and average-case CVP) improvements of Gao--Feng--Hu and Hhan [Cryptology ePrint Archive, 2026], both running in time and space $2^{0.5n + o(n)}$ or higher.

Similar to standard sieving methods, our approach proceeds through a series of thin spherical shells, starting from a large radius and iteratively combining vectors to obtain vectors from shells with smaller radius. Our main technical contribution is making a series of adjustments to guarantee that for each sieve list generated at each spherical shell, each list vector is \textit{independent} and \textit{uniformly random} over all lattice points within this shell. Once this invariant is satisfied, it is a matter of smooth sailing through the spherical shells until we find a solution.

\keywords{Lattice sieving \and shortest vector problem \and closest vector problem \and random lattices \and average-case lattice algorithms}

\end{abstract}


\section{Introduction}
\label{sec:intro}

Lattice sieving is a well-studied method for solving hard lattice problems, such as the shortest vector problem (SVP) and the closest vector problem (CVP). After decades of refinements since Ajtai--Kumar--Sivakumar's seminal work~\cite{AKS2001}, the best sieving methods dominate the SVP challenge database~\cite{G6K2019,svpchallenge}, and post-quantum cryptography standards vitally rely on cost estimates of lattice sieving to select their parameters~\cite{mlkem,mldsa}. These cost estimates are based on \textit{heuristic} analyses of lattice sieving, where \textit{heuristic assumptions} are made on the behavior of the algorithm and the intermediate outputs. Although these assumptions can more-or-less be validated with experiments, they remain unproven, and therefore the exact costs remain a mathematical conjecture. The current best asymptotic costs date back to Becker--Ducas--Gama--Laarhoven~\cite{BDGL2016} who showed that, under heuristic assumptions, a time complexity of $2^{0.292n + o(n)}$ and space complexity $2^{0.208n + o(n)}$ suffices to solve SVP (and CVP~\cite{LaarhovenCVP2016}) for lattices in dimension $n$.\footnote{In the introduction, we round constants in the exponents to 3 decimals.}

For those who do not wish to rely on unproven assumptions, there is an equally long line of work focusing on provable, worst-case complexities for solving hard lattice problems. Provable sieves based on random perturbations reached a time complexity of $2^{2.25n + o(n)}$ for solving SVP~\cite{NguyenVidick2008,MicciancioVoulgaris09,PujolStehle09,Mukhopadhyay2021}. The $2^{n + o(n)}$ algorithm for SVP of Aggarwal--Dadush--Regev--Stephens-Davidowitz~\cite{ADRS2015SVP} held the worst-case SVP record for over a decade, before several recent approaches improved it. Currently the best asymptotic complexity is due to Hhan, solving SVP in time and space $2^{n/2 + o(n)}$~\cite{HhanCoset2026}. Note that these algorithms must account for \textit{any} type of lattice that may possibly exist, including exotic worst-case lattices with (potentially) extremely high kissing constants. As a result, there is a natural gap between worst-case analyses and average-case cost estimates, which a tighter analysis of these algorithms likely will not overcome.

To overcome the gap between heuristic and worst-case algorithms, a separate line of work has recently gained traction: algorithms which provably solve hard lattice problems efficiently on \textit{random} lattices, following the definition of random lattices due to Siegel~\cite{Siegel1945}. The aim is to get the best of both worlds: provable correctness and complexity estimates, while not having to account for rare (conjectured) worst-case lattices. For random-lattice SVP, Pouly--Shen obtained a time complexity of $2^{0.633n+o(n)}$~\cite{PoulyShen2026}, which was beaten by the recent worst-case SVP algorithm of Hhan~\cite{HhanCoset2026}. We remark that these average-case results are not based on standard sieving approaches, but are mostly based on discrete Gaussian samplers in the spirit of~\cite{ADRS2015SVP} and properties of discrete Gaussian masses on random lattices. As lattice sieving has the best heuristic complexities, a natural question to ask is: Can heuristic lattice sieving be made provable on random lattices? If so, will this give faster algorithms for SVP/CVP on random lattices?


\subsection{Contributions}

In this work we make the main ingredients of heuristic lattice sieving provable on random lattices. Our first main result constructs complete catalogues around the origin or a uniformly random target, at the heuristic SVP cost.

\begin{theorem}[Complete centered and affine catalogues]
\label{thm:intro-catalogue}
Let $\Rlambda$ be the radius of a Euclidean ball of volume one. Sample a Haar-random unimodular lattice $\cL$, and either take $\mathbf t=\mathbf0$ or sample a uniform target class $\mathbf t\bmod\cL$. Our randomized algorithm returns the complete catalogue $\mathcal C_{\mathbf t}:=\{\mathbf v\in\cL:\|\mathbf t-\mathbf v\|\le(\sqrt{4/3}-o(1))\Rlambda\}$ of size $|\mathcal{C}_{\mathbf t}| = (4/3)^{n/2 + o(n)}$ with probability $1-e^{-\Omega(n/\log n)}$ over the input and algorithmic randomness, in time $\mathsf{T} = (3/2)^{n/2+o(n)}$ and space $\mathsf{S} = (4/3)^{n/2+o(n)}$.
\end{theorem}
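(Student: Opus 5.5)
The plan is to run a BDGL-style sieve through a geometric sequence of thin spherical shells around $\mathbf t$. The sieve starts at a large radius $R_0=\poly(n)\cdot\Rlambda$, where lattice points in the shell can be sampled (nearly) uniformly and independently by a discrete Gaussian or coset sampler. It finishes at radius $R_{\rm fin}=(\sqrt{4/3}-o(1))\Rlambda$. Throughout, every list vector must be an independent, uniform sample from $\cL\cap S_i$, where $S_i=\{\mathbf v:\|\mathbf t-\mathbf v\|\in[r_i(1-\epsilon),r_i]\}$.

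The first ingredient is a counting statement, obtained from Siegel's mean value theorem and Rogers' higher moment formulas. With probability $1-e^{-\Omega(n/\log n)}$ over $\cL$ (and over $\mathbf t$ in the affine case), every shell $S_i$ with $r_i\ge\sqrt{4/3}\,\Rlambda\cdot(1-o(1))$ contains $(1\pm o(1))\vol(S_i)$ lattice points. In addition, for a typical pair of list points $\mathbf u,\mathbf w$, the number of lattice vectors at the correct radius and angle concentrates around its Gaussian-heuristic value. I would prove this by a second-moment (Chebyshev) bound on the counting function, followed by a union bound over the $\poly(n)$ shells. Here is where the $\log n$ loss enters: shells must be of width $1/\log n$ relative, or similar, for the union bound to hold while concentration is retained.

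The inductive step takes $N=(4/3)^{n/2+o(n)}$ independent uniform points of shell $S_i$ and produces pairwise differences $\mathbf u-\mathbf w$ (or, in the affine case, $\mathbf t-(\mathbf u-\mathbf w)$-type combinations) that land in $S_{i+1}$, with $r_{i+1}=\gamma r_i$ for $\gamma<1$ close to $1$. Pairs are found with locality-sensitive filtering (spherical caps / random product codes), which costs $(3/2)^{n/2+o(n)}$ per level. The crucial uniformity argument works as follows. Conditioned on the difference landing in $S_{i+1}$, each lattice point $\mathbf x\in S_{i+1}$ has a number of representations $\mathbf x=\mathbf u-\mathbf w$ with $\mathbf u,\mathbf w\in\cL\cap S_i$ equal to $(1\pm o(1))$ times a common value, by the counting lemma. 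Hence each produced vector is close to uniform on $\cL\cap S_{i+1}$. To restore exact independence I would apply two fixes. First, use each input vector in at most one output (a matching, or split the list into disjoint blocks and combine only across blocks). Second, apply rejection sampling, accepting $\mathbf x$ with probability proportional to $\min_{\mathbf y}\#\mathrm{rep}(\mathbf y)/\#\mathrm{rep}(\mathbf x)$, estimated from the concentration bound. To keep the list size, the sieve oversamples by a $2^{o(n)}$ factor, which is affordable because each step has a constant-fraction success rate up to subexponential factors. Total variation errors accumulate additively over $\poly(n)$ levels and stay $e^{-\Omega(n/\log n)}$.

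At the final level we hold $(4/3)^{n/2+o(n)}$ i.i.d. uniform samples of $\cL\cap\{\|\mathbf t-\mathbf v\|\le R_{\rm fin}\}$, a set of size $(4/3)^{n/2+o(n)}$. By coupon collector, $n$ times oversampling yields the complete catalogue $\mathcal C_{\mathbf t}$. The count of this set itself follows from Siegel, and the affine version follows from averaging over uniform $\mathbf t$. The main obstacle is the uniformity/independence step. Making the sieve output exactly i.i.d. uniform requires the representation counts to concentrate uniformly over all $\mathbf x$ in the shell, not just on average. I would try first to get this from Rogers' third-moment formula plus a union bound over lattice points, and, if that fails near the $\sqrt{4/3}$ threshold, to stop slightly above it and fill in the remaining thin annulus by brute-force pairing of the final list.
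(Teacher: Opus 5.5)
There are two genuine gaps.

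\textbf{The merge step.} Your merge step does not give independent, exactly uniform outputs. Using each vector at most once, whether by a matching or across blocks, still couples the outputs, because which pairs are selected depends on the other vectors. If you instead make outputs genuinely independent by giving each one its own disjoint input block, a block needs about $(4/3)^{n/4}$ vectors to contain a reducing pair, so the list shrinks exponentially per level. Your rejection probability $\min_{\mathbf y}\#\mathrm{rep}(\mathbf y)/\#\mathrm{rep}(\mathbf x)$ refers to the complete lattice shell, so the algorithm cannot compute it. Replacing it by concentration estimates leaves pointwise errors of order $1\pm o(1)$, so the claimed $e^{-\Omega(n/\log n)}$ accumulated TV error does not follow. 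You also do not correct the degree bias: different inputs have different numbers of eligible partners.

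The missing idea is the paper's frozen-bank merge on a binary tree of independently generated lists. At each merge, the left sibling is frozen as a center bank $C$ and may be reused arbitrarily often. Each entry of the independent right sibling is a mover that yields at most one output. The algorithm uses the \emph{exact sampled} counts $a_C(\mathbf u)$ and $b_C(\mathbf d)$, obtained by forward and reverse NNS queries deduplicated by index. It accepts with probability $a_C(\mathbf u)/a_{\max}$, picks a uniform eligible center, and then accepts with probability $b_{\min}/b_C(\mathbf d)$, where the thresholds are deterministic. For every good bank, each $\mathbf d$ is output with probability exactly $b_{\min}/(a_{\max}|U_R|)$ (Theorem~\ref{thm:exact-regeneration}). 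Hence outputs are exactly iid uniform, with a law independent of the realized bank, and this is what closes the induction.

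This mechanism needs only constant-factor pointwise bounds on complete-shell counts (Lemma~\ref{lem:shell-interface}), not $(1\pm o(1))$ concentration. Even those bounds are out of reach for a fixed-order moment plus a union bound over the $(4/3)^{n/2+o(n)}$ points of a near-threshold shell, since the mean counts there are only $e^{\Theta(n/\log n)}$.

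\textbf{The finalization.} Your finalization does not produce the catalogue. The sieve only outputs samples of thin shells, so coupon collection on the final list recovers the final shell near $\sqrt{4/3}\,\Rlambda$, not $\cL\cap\mathcal B_{\Rcat}$. In particular it misses the shortest vector near $\Rlambda$. Nor can you keep stepping shell to shell below the threshold. The expected complete-shell representation count $\beta_R$ is $(R/\Rlambda)^n(1-\gamma^2/4)^{n/2}$ up to polynomial factors. This is exponentially small once $R$ is a constant factor below $\sqrt{4/3}\,\Rlambda$, so most target points have no representation at all.

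The paper stops at $R_m=\sqrt{4/3}\,e^{1/\ell}\Rlambda$. There $|U_{R_m}|=Ne^{n/\ell+O(\log n)}$ while the terminal lists have length $Ne^{2n/\ell-O(\log^3 n)}$, so each of the two independent terminal lists contains every point of $U_{R_m}$ (Lemma~\ref{lem:terminal-coverage}). The critical-ball difference coverage of Lemma~\ref{lem:critical-coverage}, namely $(\cL_{\boldsymbol\tau}\cap\mathcal B_{\Rcat})\setminus\{\mathbf 0\}\subseteq U^{\boldsymbol\tau}_{R_m}-U^0_{R_m}$, then turns complete shell coverage into the complete catalogue. It suffices to report all cross pairs with difference norm at most $\Rcat$, using NNS; brute-force pairing would cost $N^2=2^{0.415n}$, exceeding the claimed time.

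Your fallback of pairing the final list to fill a thin annulus points in this direction. However, what must be recovered is the whole ball, and you need both the full-shell coverage and the difference-coverage statement.

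\textbf{The affine case.} Differences of two residuals in the same affine shell leave the coset. The paper keeps a single affine branch, draws its centered banks from independent sibling subtrees, and uses an independent centered terminal list for the final pairing. Your ``$\mathbf t-(\mathbf u-\mathbf w)$-type combinations'' do not provide this structure.
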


Scanning the centered and affine catalogues gives exact SVP and random-target CVP, respectively.

\begin{corollary}[Exact SVP]
\label{cor:intro-svp}
For a Haar-random unimodular lattice $\cL$, selecting a shortest nonzero vector from $\mathcal C_{\mathbf0}$ solves exact SVP with probability $1-e^{-\Omega(n/\log n)}$, within the time and space bounds of Theorem~\ref{thm:intro-catalogue}.
\end{corollary}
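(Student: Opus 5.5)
The plan is to reduce the corollary to Theorem~\ref{thm:intro-catalogue} with $\mathbf t=\mathbf 0$, plus one standard fact about Haar-random lattices: the first minimum $\lambda_1(\cL)$ is at most $(\sqrt{4/3}-o(1))\Rlambda$ with overwhelming probability. Given both, the catalogue $\mathcal C_{\mathbf0}$ contains a shortest nonzero vector of $\cL$. A linear scan over $\mathcal C_{\mathbf0}\setminus\{\mathbf0\}$ that keeps the shortest element then outputs a vector of norm exactly $\lambda_1(\cL)$. The scan costs $|\mathcal C_{\mathbf0}|\cdot\poly(n)=(4/3)^{n/2+o(n)}$ time and no extra space. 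This is dominated by $\mathsf T=(3/2)^{n/2+o(n)}$ and fits within $\mathsf S$.

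The key step is the bound on $\lambda_1$. I would use Siegel's mean value theorem: for a Haar-random unimodular $\cL$ and a ball $B$ of radius $r\Rlambda$ centered at the origin, $\E|(\cL\setminus\{\mathbf0\})\cap B|=\vol(B)=r^n$. This gives an upper tail on the count, but what we need is a lower tail: that the ball of radius $r\Rlambda=(\sqrt{4/3}-\varepsilon_n)\Rlambda$ contains some nonzero lattice point. For this I would invoke Rogers' second-moment formula, valid for $n\ge3$. It gives $\mathrm{Var}\,|(\cL\setminus\{\mathbf0\})\cap B|=O(\vol(B))$, up to the contribution from the $\pm$ symmetry and from multiples of primitive vectors, which is also $O(\vol(B))$. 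By Chebyshev,
\[
\Prb\bigl[(\cL\setminus\{\mathbf0\})\cap B=\emptyset\bigr]\le O\bigl(1/\vol(B)\bigr)=O(r^{-n}).
\]
Whenever $\varepsilon_n=o(1)$ is chosen so that $r>1+\Omega(1)$, this is $2^{-\Omega(n)}$, since $\sqrt{4/3}>1$ by a constant margin. So $\lambda_1(\cL)\le(1+o(1))\Rlambda<(\sqrt{4/3}-o(1))\Rlambda$ except with probability $e^{-\Omega(n)}$. Alternatively, one can take $r=1+\delta$ and read off the same conclusion directly. Either way, with high probability the shortest nonzero vector lies deep inside the catalogue radius.

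Finally I would take a union bound over two failure events: the catalogue algorithm failing or returning an incomplete catalogue, which has probability $e^{-\Omega(n/\log n)}$ by Theorem~\ref{thm:intro-catalogue}, and $\lambda_1$ exceeding the catalogue radius, which has probability $e^{-\Omega(n)}$. The total failure probability is $e^{-\Omega(n/\log n)}$, as claimed.

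The main obstacle is the $\lambda_1$ tail bound. Specifically, it requires handling the second moment correctly for small-count regimes and making sure the $o(1)$ in the catalogue radius does not erode the constant gap between $1$ and $\sqrt{4/3}$. Because of that constant gap, this is a routine application of Rogers' formula, or of known results on the distribution of $\lambda_1$ for random lattices, rather than a delicate step.
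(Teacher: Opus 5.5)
Your proposal is correct and follows the paper's route: the paper likewise shows $\lambda_1(\cL)\le(1+1/\ell)\Rlambda<\Rcat$ via Siegel, Rogers' second moment and Chebyshev (Lemma~\ref{lem:lambda-scale}), deduces that the catalogue contains a shortest nonzero vector (Corollary~\ref{cor:complete-short-catalogue}), and finishes with a minimum-norm scan (Theorem~\ref{thm:main-svp}). Your constant-margin radius gives an $e^{-\Omega(n)}$ tail for $\lambda_1$, which is harmless because the catalogue's own $e^{-\Omega(n/\log n)}$ failure dominates; the paper needs its two-sided $1/\ell$ window on $\lambda_1$ only inside the catalogue construction, to place catalogue pairs in the search procedure's correlation range, and you legitimately treat that construction as a black box.
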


\begin{corollary}[Random-target CVP]
\label{cor:avg-cvp-main}
For a Haar-random unimodular lattice $\cL$ and a uniform target class $\mathbf t\bmod\cL$, selecting a nearest point from $\mathcal C_{\mathbf t}$ solves exact CVP with probability $1-e^{-\Omega(n/\log n)}$, within the time and space bounds of Theorem~\ref{thm:intro-catalogue}.
\end{corollary}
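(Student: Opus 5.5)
The plan is to reduce Corollary~\ref{cor:avg-cvp-main} to Theorem~\ref{thm:intro-catalogue} applied in the affine case. The algorithm runs the catalogue construction for the given target class $\mathbf t$ and obtains $\mathcal C_{\mathbf t}$. It then scans $\mathcal C_{\mathbf t}$ and outputs a minimiser of $\|\mathbf t-\mathbf v\|$.

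The scan costs $|\mathcal C_{\mathbf t}|\cdot\poly(n)=(4/3)^{n/2+o(n)}$ time. This is absorbed into $\mathsf T=(3/2)^{n/2+o(n)}$, and the scan needs no extra space beyond $\mathsf S$.

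Correctness needs one fact. With probability $1-e^{-\Omega(n/\log n)}$ over $\cL$ and $\mathbf t$, we need
\[
\operatorname{dist}(\mathbf t,\cL)\le(\sqrt{4/3}-o(1))\Rlambda .
\]
Then some closest vector lies in $\mathcal C_{\mathbf t}$. Because $\mathcal C_{\mathbf t}$ is the \emph{complete} set of lattice points within that radius, the minimiser over it is a global closest vector.

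For the distance bound I would use a Siegel/Rogers mean-value argument over the joint law of $(\cL,\mathbf t)$. For a Haar-random unimodular $\cL$ and uniform $\mathbf t\bmod\cL$, the pair is distributed so that the expected number of lattice points in a ball $B(\mathbf t,r)$ equals its volume. This expectation is in fact exact for any fixed $\cL$, by averaging over the torus. Take $r=(1+\epsilon)\Rlambda$ with $\epsilon=o(1)$, say $\epsilon=\log n/n$ or slightly larger, so that $\vol B(\mathbf t,r)=(1+\epsilon)^n\to\infty$ polynomially or faster.

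The event to rule out is $\cL\cap B(\mathbf t,r)=\emptyset$. I would bound the second moment of $N=|\cL\cap B(\mathbf t,r)|$, using Rogers' second-moment formula for random affine lattices; equivalently, I would use that affine random lattices behave like Poisson point processes for small counts. This gives $\mathrm{Var}(N)\le(1+o(1))\E N$. Chebyshev then yields
\[
\Prb[N=0]\le 1/\E N=(1+\epsilon)^{-n}.
\]
To match the stated $e^{-\Omega(n/\log n)}$ bound, the radius slack can be taken much larger, since any $r\le(\sqrt{4/3}-o(1))\Rlambda$ suffices. Taking $r=1.1\Rlambda$ gives $\E N=2^{\Omega(n)}$, so the failure probability is exponentially small. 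The overall failure probability is then dominated by the failure probability of Theorem~\ref{thm:intro-catalogue}, and a union bound gives $1-e^{-\Omega(n/\log n)}$.

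The main obstacle is this concentration statement for the affine count. Precisely, it is the second-moment bound for the number of points of a random affine lattice $\mathbf t+\cL$ in a ball of exponentially large volume. Rogers' affine mean-value theorem gives $\E[N^2]=(\E N)^2+\E N$ exactly, because affine lattices have no symmetric pairing. Hence $\mathrm{Var}(N)=\E N$, and this step should be clean. I would try that route first. If it is not already established earlier in the paper, I would fall back on a covering-radius-type bound via Rogers' theorem for the random lattice alone, which gives a $2^{-\Omega(n)}$ failure probability at radius $1.1\Rlambda$.
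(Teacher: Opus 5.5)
Your proposal is correct and follows essentially the paper's route. The paper combines the complete affine catalogue (Theorem~\ref{thm:affine-tree} and Lemma~\ref{lem:cvp-terminal}) with the upper bound $\rho_{\rm aff}\le(1+1/\ell)\Rlambda<\Rcat$ from Lemma~\ref{lem:affine-minimum}. That bound is Athreya's void estimate, which is precisely your affine second-moment identity $\E[N^2]=(\E N)^2+\E N$; this identity follows from Siegel's theorem applied to the autocorrelation of the ball indicator.

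Your choice $r=1.1\Rlambda$ makes the distance step exponentially small rather than $e^{-\Omega(n/\log n)}$. This is harmless, since the catalogue failure probability dominates, and your covering-radius fallback is not needed.
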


Finally, the same algorithmic approach provides the basis for a batched discrete-Gaussian sampling algorithm. One preprocessing phase supports generating exponentially many discrete Gaussian samples at the same leading cost as the SVP algorithm; the preprocessing does not depend on the requested Gaussian parameter. The result is a batched DGS sampler with amortized time complexity as low as $2^{0.085n + o(n)}$ per sample, when the batch size is large enough. Here $D_{\cL,s}$ denotes the centered discrete Gaussian from Section~\ref{sec:gaussian-toolkit}.

\begin{theorem}[Joint batch DGS at every width]
\label{thm:dgs-main}
Let $K_n := \lfloor(4/3)^{n/2}\rfloor$. There is a set $\mathcal H_n\subseteq X_n$ of Haar measure $1-e^{-\Omega(n/\log n)}$, independent of the width, and a randomized discrete Gaussian sampling algorithm with width-independent preprocessing such that, for every $\cL\in\mathcal H_n$ and every $s>0$, its batch output law $\hat D$ satisfies $\Delta(\hat D,D_{\cL,s}^{\otimes K_n})\le 2^{-\Omega(n)}$. Here the law averages over preprocessing and query randomness, conditional on $\cL$. Preprocessing and one query take expected time $\mathsf{T} = (3/2)^{n/2+o(n)}$, obey the same time bound with probability $1-e^{-\Omega(n/\log n)}$, and use peak space $\mathsf{S} = (4/3)^{n/2+o(n)}$.
\end{theorem}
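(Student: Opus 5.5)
The plan is to reduce batched Gaussian sampling to the complete catalogue of Theorem~\ref{thm:intro-catalogue}, plus a shell structure that the sieve already produces. Preprocessing runs the sieve of Theorem~\ref{thm:intro-catalogue} with $\mathbf t=\mathbf 0$ but keeps more than the catalogue. At each shell radius $r_i$ in the geometric sequence from large radius down to $\sqrt{4/3}\,\Rlambda$, it keeps a list $L_i$ of $K_n$ lattice vectors that, by the sieve invariant, are i.i.d.\ uniform over $\cL\cap\{r_i\le\|\mathbf x\|\le r_i(1+1/\poly(n))\}$. It also stores the complete centered catalogue $\mathcal C_{\mathbf 0}$. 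This preprocessing does not depend on $s$. A query with width $s$ then draws each sample in two steps. First it samples a radius class: it picks a shell index $i$, or the catalogue region, with probability proportional to the Gaussian mass $\rho_s$ of that region. Then it takes a vector from the corresponding list. For the catalogue region it uses exact rejection sampling on $\mathcal C_{\mathbf 0}$. For a thin shell it takes a fresh unused list element and applies a rejection step that corrects the $\rho_s$ variation across the shell, which is only a $1\pm o(1)$ factor per thin shell.

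The key steps, in order, are as follows.
\begin{enumerate}
\item Fix the good set $\mathcal H_n$ independently of $s$. It consists of lattices whose shell point counts $N_i=|\cL\cap\text{shell}_i|$ obey Siegel/Rogers-type concentration $N_i=(1\pm 2^{-\Omega(n)})\vol(\text{shell}_i)$ for every shell with $r_i\ge\sqrt{4/3}\,\Rlambda$ and radius $\le 2^{O(n)}$. On the same set the sieve invariant should hold and the catalogue should be complete. A union bound over the polynomially many shells, using the earlier moment bounds, gives Haar measure $1-e^{-\Omega(n/\log n)}$.
\item On $\mathcal H_n$, the shell masses $\rho_s(\text{shell}_i)$ can be computed to relative error $2^{-\Omega(n)}$ from the volumes alone, with no access to $\cL$ needed. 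The catalogue mass is known exactly.
\item Large widths and small widths need separate handling. If $s$ is so large that the mass lies beyond radius $2^{O(n)}$, I would use a standard reduction (sample modulo a superlattice, or use Klein/GPV on an LLL basis, which is statistically close for $s\ge 2^{n}\lambda_n$). If $s$ is tiny, all mass sits on $\mathcal C_{\mathbf 0}$ and exact rejection sampling there suffices.
\item Joint distribution: each output consumes a fresh i.i.d.\ uniform shell element. So the $K_n$ samples are independent conditional on the lists, provided no list is exhausted. A shell is hit about $K_n\cdot\rho_s(\text{shell})/\rho_s(\cL)$ times, which is at most $K_n$, so a Chernoff bound together with slack in the list sizes makes exhaustion probability $2^{-\Omega(n)}$.
\end{enumerate}

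I expect the main obstacle to be the joint independence and the error accounting. The sieve's uniformity must be statistical closeness of the entire preprocessing output to the ideal i.i.d.\ law, not just marginal uniformity. The total variation of $2^{-\Omega(n)}$ must also survive summation over $K_n=2^{0.2075n}$ samples, so the per-sample errors have to be at most $2^{-cn}$ with $c>0.21$. I would handle this with a coupling argument. Couple the real preprocessing with an ideal one producing exactly i.i.d.\ uniform shell samples, and charge only a single preprocessing-level failure probability, rather than per-sample errors. The small mass-estimation errors are then absorbed by performing the shell choice by rejection against computable upper bounds, so that the shell choice is exact. Running time is preprocessing $(3/2)^{n/2+o(n)}$, plus $O(K_n\poly(n))$ for the query. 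The expected-time bound comes from bounding the rejection rates, and the high-probability version comes from Markov and Chernoff bounds.
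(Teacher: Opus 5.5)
Your architecture is the paper's: an exact catalogue below $\Rcat$, independent iid-uniform streams on thin outer shells, a group choice by Gaussian weight, within-shell radial rejection, GPV at large widths, and one coupling of preprocessing to ideal streams. The gap is in the error accounting, which is the heart of the theorem. \textbf{The shell choice cannot be made exact.} The true mass $\sum_{\mathbf x\in\cL\cap A_j}g_s(\|\mathbf x\|)$ depends on the unknown count $n_j=|\cL\cap A_j|$. Rejection against a computable upper bound $U_j$ needs the ratio of that mass to $U_j$, which is just as unknown. Drawing a uniform stream entry and accepting with $g_s(\|\mathbf x\|)/g_s(b_{j-1})$ makes the shell's acceptance rate carry a factor $1/n_j$. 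So any computable proposal weight (the paper uses $V_jg_s(b_{j-1})$) leaves an uncorrectable factor $V_j/n_j$ on every point of $A_j$, and the query samples a biased law $\widetilde D_{\cL,s}$. Your fallback, per-sample errors $2^{-cn}$ with $c>0.21$, is also unattainable. Shells just outside $\Rcat$ have $V_j=Ne^{-o(n)}$, and their counts have variance of order $V_j$ with Poisson-like fluctuations. Relative errors around $(4/3)^{-n/4}\approx2^{-0.104n}$ are therefore typical, not exceptional. Hence Step~1, with the constant you need, cannot hold with high probability, and the unspecified $\eta$ in Lemma~\ref{lem:shell-counts} does not help.

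The missing idea is to let the bias stand and weight the squared relative errors by Gaussian mass. Write $p_j:=D_{\cL,s}(A_j)$. First, Chebyshev with the shell second moment gives $|n_j/V_j-1|\le e^{n/50}/\sqrt{V_j}$ simultaneously for all shells. Second, uniformly in $s$, each point outside $\Rcat$ has probability at most $e^{-\pi\Rcat^2/s^2}/\max(1,s^n)$, using $\rho_s(\cL)\ge\max(1,s^n)$. This is maximized at $s=1$ and gives $K_np_j/n_j\le e^{-(2/(3e)-\frac12\log(4/3)-o(1))n}\le e^{-n/10}$; this is exactly where $\Rcat\approx\sqrt{4/3}\,\Rlambda$ is used. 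Together they give $K_n\sum_jp_j(V_j/n_j-1)^2\le 8Je^{-3n/50}$. Adding the omitted tail bounds $K_nH^2(D_{\cL,s},\widetilde D_{\cL,s})$, and $H^2(P^{\otimes K},Q^{\otimes K})\le KH^2(P,Q)$ with $\Delta\le\sqrt{2H^2}$ finishes. The point is that large relative errors occur only where each point's probability is far below $1/K_n$.

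Two smaller repairs are also needed. First, the sieve's own lists sit at radii $\gamma^iR_0$, each of relative width $n^{-2}$ and spaced by a factor $\gamma$, so they leave radial gaps. You need $J=O(n^2\log^2 n)$ contiguous thin shells, each produced by its own aligned tree. Second, deferring to GPV only at $s\ge2^n\lambda_n$ would require shells out to radius $2^{O(n)}$. Lemma~\ref{lem:shell-counts} does not cover those radii, and a $\gamma$-step tree reaching them has depth $\Theta(n\log n)$, hence superexponentially many leaves and entries. Use instead the Seysen basis of Lemma~\ref{lem:preprocess}, which makes GPV accurate for all $s\ge s_n=\exp(O(\log^2 n))\Rlambda$.
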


The same preprocessing supports $\exp(\Omega(n/\log n))$ queries with adaptively chosen widths and a bound on the joint distribution of all requested widths and returned batches; see Corollary~\ref{cor:dgs-adaptive}. Later widths may depend on earlier outputs, provided that unused samples are not inspected.

A comparison of the above complexities with related work is given in Tables~\ref{tab:svp-comparison} (SVP) and~\ref{tab:cvp-comparison} (CVP). (The difference in the published SVP exponents between Kim~\cite{Kim2026} and Gao--Feng--Hu~\cite{GaoFengHu2026} comes from a different substitution of the Kabatiansky--Levenshtein kissing constant; with the same bound, both have the same complexities.) For discrete Gaussian sampling at small widths, Kim~\cite{Kim2026} shows how to generate one sample in time and space $2^{n/2 + o(n)}$. Our algorithm requires less time, less memory, and has a better amortized complexity, but does assume the lattice is random.

\begin{table}[!t]
\centering
\small
\setlength{\tabcolsep}{4pt}
\begin{tabularx}{\textwidth}{@{}>{\raggedright\arraybackslash}X>{\raggedright\arraybackslash}p{.19\textwidth}cc@{}}
\toprule
\textbf{SVP algorithm} & \textbf{Model} & \textbf{Time} & \textbf{Space} \\
\midrule
Nguyen--Vidick~\cite{NguyenVidick2008} & Heuristic & $0.415$ & $0.208$ \\
Becker--Ducas--Gama--Laarhoven~\cite{BDGL2016} & Heuristic & $\mathbf{0.292}$ & $\mathbf{0.208}$ \\
\midrule
Pouly--Shen~\cite{PoulyShen2026} & Haar random & $0.633$ & $0.500$ \\
\textbf{This work} & Haar random & $\mathbf{0.292}$ & $\mathbf{0.208}$ \\
\midrule
Aggarwal--Dadush--Regev--Stephens-Davidowitz~\cite{ADRS2015SVP} & Worst-case & $1.000$ & $1.000$ \\
Kim~\cite{Kim2026} & Worst-case & $0.731$ & $0.500$ \\
Gao--Feng--Hu~\cite{GaoFengHu2026} & Worst-case & $0.731$ & $0.500$ \\
Hhan~\cite{Hhan2026} & Worst-case & $0.604$ & $0.500$ \\
Gao--Feng--Hu~\cite{GaoFengHuBDGL2026} & Worst-case & $0.560$ & $0.500$ \\
Hhan~\cite{HhanCoset2026} & Worst-case & $\mathbf{0.500}$ & $\mathbf{0.500}$ \\
\bottomrule
\end{tabularx}
\caption{Leading constants in the exact-SVP exponents, rounded to three decimals. Results are grouped by model; worst-case bounds also apply to Haar-random lattices. The best time--space pair in each model is bold.}
\label{tab:svp-comparison}
\centering
\small
\setlength{\tabcolsep}{4pt}
\begin{tabularx}{\textwidth}{@{}>{\raggedright\arraybackslash}X>{\raggedright\arraybackslash}p{.19\textwidth}cc@{}}
\toprule
\textbf{CVP algorithm} & \textbf{Model} & \textbf{Time} & \textbf{Space} \\
\midrule
Becker--Gama--Joux~\cite{BGJ2014} & Heuristic & $0.377$ & $0.293$ \\
Laarhoven~\cite{LaarhovenCVP2016} & Heuristic & $\mathbf{0.292}$ & $\mathbf{0.208}$ \\
\midrule
Hhan~\cite{HhanCoset2026} & Haar random & $0.500$ & $0.500$ \\
\textbf{This work} & Haar random & $\mathbf{0.292}$ & $\mathbf{0.208}$ \\
\midrule
Micciancio--Voulgaris~\cite{MicciancioVoulgaris2010} & Worst-case & $2.000$ & $1.000$ \\
Aggarwal--Dadush--Stephens-Davidowitz~\cite{ADS2015CVP} & Worst-case & $\mathbf{1.000}$ & $\mathbf{1.000}$ \\
\bottomrule
\end{tabularx}
\caption{Leading constants in the exact-CVP exponents. Heuristic and Haar random algorithms assume random targets; worst-case algorithms assume adversarial targets.}
\label{tab:cvp-comparison}
\end{table}


\subsection{Technical overview}

The algorithmic framework leading to the above results consists of the following key ingredients. We write $\Rlambda$ for the radius of a Euclidean ball of volume one, $\mathcal{B}_{R}$ for the ball of radius $R$, and $\lambda_1(\cL)$ for the shortest nonzero lattice-vector length. A sketch of the SVP construction is given in Figure~\ref{fig:intro-construction}.

\paragraph{Using binary trees of lists.}
Rather than sampling one list of long lattice vectors and iteratively combining pairs of vectors within this list to form shorter vectors, we use a binary tree of lists of lattice vectors. Instead of combining pairs of vectors within a list, we combine pairs of vectors from different lists. This ensures that in each level $i$, we can take two \textit{independent} lists $(L_{i,2j}, L_{i, 2j+1})$ as input, and combine independent samples to form new vectors. Since vector norms get reduced by a factor $\gamma = 1 - 1/\lceil\log n\rceil$, and we start from a radius $R_0 = 2^{O(\log^2 n)} \lambda_1(\mathcal{L})$, we need $m = O(\log^3 n)$ iterations to go from the initial lists to lists whose vectors are a constant factor larger than the shortest vector. The number of lists is $2^{m+1} - 1 = 2^{o(n)}$, while each list in the tree will contain $(4/3)^{n/2 + o(n)} = 2^{0.208n + o(n)}$ lattice vectors. Overall, the tree-based approach does not incur exponential overhead, and helps ensure that samples are independent when we combine them.

\paragraph{Initializing lists at the top: discrete Gaussians and rejection sampling.}
To create the leaves, we first sample from a wide discrete Gaussian, where discrete Gaussian sampling is efficient, and retain only points in one thin spherical shell. The Gaussian parameter and shell radius are predetermined. Radial rejection makes the ideal leaves exactly iid uniform. A joint initialization coupling and the execution-transfer lemma (Lemma~\ref{lem:execution-transfer}) account for potential inaccuracies in the discrete Gaussian sampler once for the entire algorithm.

\paragraph{Repeated sieving steps: Centers, movers, and rejection sampling.}
Center-based sieving, where differences are only taken with respect to a fixed pool of centers, dates back to Ajtai--Kumar--Sivakumar~\cite{AKS2001}. Nguyen--Vidick~\cite[Algorithm~5]{NguyenVidick2008} grow the center list during a pass, whereas Pujol--Stehl\'e~\cite{PujolStehle09} freeze a completed list for a separate sampling phase. At every pairwise sieve step, we freeze the left input list and use it as a bank of \emph{centers}, while the independent right input list supplies the \emph{movers}. A mover is paired with a nearby center to form a short difference vector for the next, shorter shell. A naive combination of centers and movers would generate a non-uniform output distribution: some movers have more eligible centers than others, and some output differences can be represented in more ways than others. The merge corrects both effects by two rejection steps. The first rejection step gives every eligible mover--center occurrence the same weight; the second rejection step removes the remaining representation multiplicity. After these rejection steps, every point in the target shell has the same output probability. Each mover contributes at most one output and uses fresh randomness, so the retained outputs are independent and identically distributed (iid). Importantly, their common distribution does not depend on which center bank was frozen. This makes the uniform-list invariant propagate all the way down the tree.

\paragraph{Why random-lattice point counting is needed.}
Uniform upper bounds on mover degrees and upper and lower bounds on target representation counts ensure valid rejection probabilities and a constant acceptance probability. For every center list satisfying these bounds, the two rejection steps remove the sampling bias exactly. An approximate invariant might also suffice, but would require joint error control over all generated entries and bank-dependent queries; exactness avoids this additional bookkeeping. The pointwise shell estimates of Laarhoven~\cite{LaarhovenSpherical2026} provide the required strong concentration simultaneously over the complete centered and affine shells used here. With high probability, every nonzero lattice vector of norm at most $R_{\mathrm{cat}}$ is the difference of two lattice points in the final shell. Once these facts are available, proving that one uniform list is transformed into another is an algorithmic calculation rather than a new probabilistic assumption.

\paragraph{Finalizing the short vector catalogue.}
Write $R_m$ for the terminal radius where we stop pairwise sieving, and $\Rcat=\gamma R_m$ for the catalogue radius, defined in Section~\ref{sec:shell-geometry}. We align the initial radius so that the two terminal lists $L_{m,0},L_{m,1}$ lie on the shell of radius $R_m=(\sqrt{4/3}+o(1))\Rlambda$. Each independently generated list covers the complete terminal shell with high probability. Critical-ball difference coverage then gives every lattice vector of norm at most $\Rcat=(\sqrt{4/3}-\Theta(\ell^{-2}))\Rlambda$, where $\ell=\lceil\log n\rceil$. Reporting all the differences then recovers this catalogue; selecting its shortest nonzero vector solves SVP.

\paragraph{Provable nearest neighbor speedups.}
Following earlier hashing-based sieve speedups~\cite{LaarhovenAngular2015}, achieving the $0.292$ time exponent requires a provable nearest-neighbor searching (NNS) implementation of the speed-up from~\cite{BDGL2016}. In particular, we need to know that lattice points are sufficiently nicely spread out over the spherical shells; otherwise, one filter could contain all list vectors, and the workload for this filter alone would cause a $0.415$ time exponent. For this, we instantiate the rigorous random-product-code (RPC) search procedure of Gao--Feng--Hu (GFH)~\cite{GaoFengHuBDGL2026}; the common-cap statistics of~\cite{LaarhovenSpherical2026} bound the forward and catalogue workloads, and a pointwise domination argument handles the center bank-dependent reverse queries. The algorithm can ultimately be treated as a procedure that finds all eligible pairs within the required time and space bounds.

\paragraph{Closest vectors and affine lists.}
For solving CVP rather than SVP, the rightmost leaf of the tree is instantiated with vectors in the affine translate determined by the target. All the other leaf lists are instantiated as ordinary lists of lattice points. Combining lists as before, this results in one path from the leaf to the root of the tree consisting of affine translates, while all other lists in the tree are still lists of lattice points. To prove correctness, we rely on similar spherical statistics for degrees and representation counts from~\cite{LaarhovenSpherical2026}, but now for affine shells. The affine terminal list $L_{m,1}$ and the independent centered list $L_{m,0}$ contain a pair whose difference is a closest residual with high probability. Reporting these pairs and minimizing the residual norm then identifies a closest lattice point, as proved in Appendix~\ref{app:cvp}.

\paragraph{Batch discrete Gaussian sampling.}
For batch DGS, the same preprocessing as for SVP is used in two ways. First, two independent terminal lists $L_{m,0},L_{m,1}$ and one exhaustive closest-pair call give a complete catalogue of the short lattice vectors. Second, independently generated uniform lists on thin outer shells supply the remaining candidates. To sample a vector, we first choose between the short catalogue and the outer shells based on the Gaussian masses of these sets: for the catalogue we know its exact Gaussian mass, while for an outer shell we use its volume multiplied by its largest Gaussian weight. Within the short catalogue we sample with exact Gaussian weights. For an outer shell, we consume the next unused entry of its independent uniform stream and apply rejection sampling to correct the within-shell Gaussian variation. A second-moment bound controls the squared relative population errors. Weighting these errors by the Gaussian shell masses makes the entire batch exponentially close to independent true discrete Gaussian samples. Note that the lists are prepared without knowing $s$, so one preprocessing phase supports different parameters $s$.

\begin{figure}[t]
\centering
\resizebox{\textwidth}{!}{%
\begin{tikzpicture}[x=1cm,y=1cm,>=Latex,font=\normalsize,
  bank/.style={draw,thick,rounded corners=1pt,fill=solutionGreen,minimum width=1.04cm,minimum height=.52cm,inner sep=2pt},
  move/.style={draw,thick,rounded corners=1pt,fill=listBlue,minimum width=1.04cm,minimum height=.52cm,inner sep=2pt},
  plain/.style={draw,thick,rounded corners=1pt,fill=listBlue!50!solutionGreen,minimum width=1.04cm,minimum height=.52cm,inner sep=2pt},
  root/.style={draw,very thick,rounded corners=1pt,fill=samplerYellow,minimum width=2.20cm,minimum height=.60cm,inner sep=2pt},
  sampler/.style={draw,thick,fill=samplerYellow,minimum width=11cm,minimum height=.65cm,align=center},
  catalogue/.style={draw,very thick,rounded corners=1pt,fill=solutionRed!45!white,minimum width=5.40cm,minimum height=.94cm,inner sep=3pt,align=center,line width=1.8pt},
  answer/.style={draw,thick,rounded corners=1pt,fill=solutionRed,minimum width=3.60cm,minimum height=.64cm,inner sep=2pt},
  key/.style={draw,thick,rounded corners=1pt,minimum width=.36cm,minimum height=.25cm,inner sep=0pt},
  arr/.style={->,thick},
  guide/.style={guideGray,densely dashed},
  rad/.style={anchor=east},
  finalarr/.style={->,draw=black,line width=1.15pt}]

\node[align=center] at (-2.319255,7.35)
  {\textbf{relative radius}\\$\mathbf{R}/\mathbf{\Rlambda}$};
\node[sampler] (init) at (4.925,7.65)
  {independent shell samples from a wide discrete Gaussian};

\draw[guide] (-.58,6.2)--(10.43,6.2);
\draw[guide] (-.58,4.95)--(10.43,4.95);
\draw[guide] (-.58,3.1)--(10.43,3.1);
\draw[guide] (-.58,1.9)--(10.43,1.9);
\draw[guide] (-.58,.70)--(10.43,.70);
\draw[guide] (-.58,-1.00)--(10.43,-1.00);
\node[rad] at (-1.0,6.2) {$e^{O(\log^2 n)}$};
\node[rad] at (-1.0,4.95) {$R_1/\Rlambda$};
\node[rad] at (-1.0,3.1) {$R_{m-1}/\Rlambda$};
\node[rad] at (-1.0,1.9) {$\sqrt{4/3}+o(1)$};
\node[rad] at (-1.0,.70) {$\sqrt{4/3} - o(1)$};
\node[rad] at (-1.0,-1.00) {$1+o(1)$};

\foreach \name/\pos/\kind/\idx/\weight in {
  a00/0/bank/0/.8,a01/1.25/move/1/.8,a02/2.5/bank/2/.8,a03/3.75/move/3/.8,
  b00/6.1/bank/../.8,b01/7.35/move/../.8,b02/8.6/bank/../.8,b03/9.85/move/../1.8}{
  \node[\kind,line width=\weight pt] (\name) at (\pos,6.2) {$L_{0,\idx}$};
  \draw[arr] (init.south -| \name.north)--(\name.north);
}

\node[key,fill=solutionGreen] at (-0.3,.20) {};
\node[key,fill=listBlue] at (-0.3,-0.20) {};
\node[key,fill=listBlue,line width=1.8pt] at (-0.3,-.60) {};
\node[anchor=west,font=\scriptsize] at (-0.10,.20) {centers};
\node[anchor=west,font=\scriptsize] at (-0.10,-.20) {movers};
\node[anchor=west,font=\scriptsize] at (-0.10,-.60) {affine lists (CVP)};

\node[bank] (a10) at (.625,4.95) {$L_{1,0}$};
\node[move] (a11) at (3.125,4.95) {$L_{1,1}$};
\node[bank] (b10) at (6.725,4.95) {$L_{1,..}$};
\node[move,line width=1.8pt] (b11) at (9.225,4.95) {$L_{1,..}$};
\foreach \from/\to in {a00/a10,a01/a10,a02/a11,a03/a11,b00/b10,b01/b10,b02/b11,b03/b11}
  \draw[arr] (\from.south)--(\to.north);

\node[bank] (am0) at (.625,3.1) {$L_{m-1,0}$};
\node[move] (am1) at (3.125,3.1) {$L_{m-1,1}$};
\node[bank] (bm0) at (6.725,3.1) {$L_{m-1,2}$};
\node[move,line width=1.8pt] (bm1) at (9.225,3.1) {$L_{m-1,3}$};
\foreach \from/\to in {a10/am0,a11/am1,b10/bm0,b11/bm1}
  \draw[arr,densely dotted,gray] (\from.south)--(\to.north);

\node[plain] (roota) at (1.875,1.9) {$L_{m,0}$};
\node[plain,line width=1.8pt] (rootb) at (7.975,1.9) {$L_{m,1}$};
\foreach \from/\to in {am0/roota,am1/roota,bm0/rootb,bm1/rootb}
  \draw[arr] (\from.south)--(\to.north);

\node[catalogue] (cat) at (4.925,.70)
  {complete short-vector catalogue\\
   $\mathcal C_{\mathrm{short}} = \cL \cap \mathcal{B}_{\Rcat}$};
\draw[finalarr] (roota.south)--(cat.north west);
\draw[finalarr] (rootb.south)--(cat.north east);

\node[align=center,font=\small,text=black] at (4.925,6.95)
  {initialization};
\node[align=center,font=\small] at (4.925,4.03)
  {repeated sieving};
\node[align=center,font=\small,text=black] at (4.925,1.55)
  {finalization};

\node[answer] (shortest) at (4.925,-1.00)
  {\ $\|\mathbf s\| = \lambda_1(\cL)$ \ };
\draw[arr] (cat.south)--
  node[right,font=\small] {select a shortest nonzero vector}
  (shortest.north);
\end{tikzpicture}%
}
\caption{A sketch of the SVP (and CVP) algorithms. Each node represents a list of $(4/3)^{n/2 + o(n)}$ vectors. Initializing is done by sampling vectors from a discrete Gaussian and rejection sampling to obtain uniformly random lattice vectors from a large shell. Repeated sieving is done on pairs of lists, to generate uniform and iid lattice vectors from smaller shells. Finalizing involves extracting a catalogue of all short vectors in the lattice, including the shortest nonzero vector. For CVP, the bold outlined path on the right of the tree becomes an affine path, with lists of lattice vectors from shells centered around the target; all other lists remain centered (see Appendix~\ref{app:cvp}).}
\label{fig:intro-construction}
\end{figure}


\subsection{Validating the heuristics}
\label{sec:validating-heuristics}

Our results give rigorous motivations for assumptions underlying heuristic sieve analyses. The Gaussian heuristic, relating the number of lattice points within a region to its volume, was already studied for Haar-random lattices; for two other assumptions, our results rigorously prove their correctness.

\paragraph{Spherical distributions.}
Heuristic sieve analyses model list vectors as independent, uniformly distributed directions~\cite{NguyenVidick2008,BDGL2016}. Our construction maintains exact iid uniformity over the \emph{lattice points} in each shell, which combined with the geometric estimates from~\cite{LaarhovenSpherical2026} shows that indeed, the spherical directions of list points are essentially uniform.

\paragraph{A complete catalogue of short vectors.}
Dual-sieve attacks require large collections of short dual vectors, often modeled as all vectors below a prescribed radius~\cite{LaarhovenWalter2021,DucasPulles2023}, and heuristic analyses assume that lattice sieving gives you all short vectors, rather than just one. Theorem~\ref{thm:intro-catalogue} establishes this complete-catalogue guarantee up to $\Rcat=(\sqrt{4/3}-o(1))\Rlambda$ at the heuristic SVP cost.


\subsection{Open problems}

\paragraph{Dual attacks from complete catalogues.}
The complete catalogue for $\cL^*$ makes short-vector generation a provable preprocessing step for dual attacks on Haar-random lattices. The remaining task is to control score correlations and tails for both noisy lattice points and uniform target classes, issues highlighted by Ducas--Pulles~\cite{DucasPulles2023,DucasPullesScores2026}. Combining the catalogue guarantee with such bounds could yield sharp provable trade-offs for decision-BDD with preprocessing.

\paragraph{Tuple and quantum sieving.}
A natural extension of our work would be to apply the same ideas to tuple sieving~\cite{BaiLaarhovenStehle2016,HeroldKirshanovaLaarhoven2018}. The counting estimates in Lemma~\ref{lem:shell-interface} may also be useful for tuple sieves, where one would likely need to consider corresponding counts for tuples of lattice vectors. Another natural direction to consider is quantum lattice sieving~\cite{LaarhovenThesis2016,BonnetainEtAl2023,ChoEtAl2024}, combining the construction of independent uniform samples with quantum algorithms, such as quantum nearest-neighbor searching. An open question would be to make the best heuristic quantum sieving exponents fully provable for Haar-random lattices.

\paragraph{Adversarial-target CVP.}
The CVP result averages over the random target class; if targets are chosen adversarially, the algorithm may not find a solution. Extending our approach to adversarial targets would require a uniform affine regularity statement stronger than the random-coset results in Lemmas~\ref{lem:shell-interface} and~\ref{lem:critical-coverage}.


\subsection{Organization}
\label{sec:intro-outline}

Due to the page limit, the main body of the paper covers the catalogue and SVP results. After preliminaries in Section~\ref{sec:prelim}, Section~\ref{sec:algorithm} describes the list trees and the uniformity invariant. Section~\ref{sec:init} constructs the initial lists, and Section~\ref{sec:iteration} proves that every merge step in either tree regenerates the required uniform distribution. Section~\ref{sec:finish} constructs the complete catalogue and extracts a shortest nonzero vector, while Section~\ref{sec:nns-middle} verifies the nearest-neighbor costs. Appendix~\ref{app:nns-full} gives the full NNS workload analysis, Appendix~\ref{app:cvp} covers the CVP extension, and Appendix~\ref{app:dgs} proves joint batch DGS and adaptive reuse.


\subsection*{AI-tool disclosure.}

OpenAI ChatGPT-5.6 Sol was used during the preparation of this manuscript. Its roles included adversarial proof auditing, literature searches, refining proofs, drafting figures in TikZ, and editorial improvements, as well as developing detailed proofs for using nearest-neighbor searching to provably achieve the $0.292$ exponent. The authors verified the mathematical arguments, inspected the cited sources, and take full responsibility for every claim in this manuscript.


\newpage
\section{Preliminaries}
\label{sec:prelim}


\subsection{Haar lattices and Siegel's theorem}

A full-rank lattice $\cL\subset\R^n$ is unimodular if $\det\cL=1$. We work on $X_n:=\mathrm{SL}_n(\R)/\mathrm{SL}_n(\Z)$ with normalized Haar probability measure. Write $\mathcal{B}_{R}:=\{\mathbf x\in\R^n:\|\mathbf x\|\le R\}$ for Euclidean balls, and let $\Rlambda$ satisfy $\vol(\mathcal{B}_{\Rlambda})=1$; thus $\Rlambda = (1 + o(1)) \sqrt{n/(2 \pi e)}$. The notation reflects $\lambda_1(\cL)=(1+o(1))\Rlambda$ with high probability, as proved in Lemma~\ref{lem:lambda-scale}. A family of shells or parameter choices is \emph{predetermined} if it is a deterministic function of $n$, or is sampled independently of the lattice before taking the Haar average.

\begin{lemma}[Siegel mean-value theorem~{\cite[p.~341, Eq.~(2)]{Siegel1945}}]
\label{thm:siegel-ext}
For every integrable $f : \R^n \to \R$,
\begin{align}
 \E_{\cL\sim X_n} \sum_{0 \ne \mathbf{x} \in \cL} f(\mathbf{x}) = \int_{\R^n} f(\mathbf{x}) \, d\mathbf{x}.
\end{align}
\end{lemma}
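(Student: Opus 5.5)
The plan is the classical proof of Siegel's mean-value theorem: show that the functional $f\mapsto \E_{\cL}\sum_{0\ne\mathbf x\in\cL}f(\mathbf x)$ is an $\mathrm{SL}_n(\R)$-invariant Radon measure on $\R^n\setminus\{\mathbf 0\}$, hence a multiple of Lebesgue measure, and then identify the constant as $1$.

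By monotone convergence and splitting $f=f^+-f^-$, it suffices to treat nonnegative $f$, and first compactly supported continuous $f$ on $\R^n\setminus\{\mathbf 0\}$. Define $\mu(f):=\int_{X_n}\sum_{0\ne\mathbf x\in\cL}f(\mathbf x)\,d\cL$.

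\textbf{Finiteness.} For $f\le \1_{\mathcal B_R}$, the count $|\cL\cap\mathcal B_R|$ is bounded by $C_R/\lambda_1(\cL)^n$ up to lower-order terms, by a packing argument applied to successive minima. Using Minkowski's reduction theory and a Siegel-domain description of $X_n$, the Haar integral of $\prod_i \lambda_i^{-k}$-type quantities is finite for the relevant exponents. The standard route is to bound $\sum_{\mathbf x\in\cL\setminus\{0\}}f(\mathbf x)$ by $\sum_{k}$ over primitive vectors, reduce to the function $\cL\mapsto|\{\text{primitive }\mathbf x\in\cL\cap\mathcal B_R\}|$, and integrate it over a Siegel set, where it is $O(\prod_{i<n}\alpha_i^{-1})$ in the diagonal coordinates. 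This integral converges. I expect this integrability check to be the main technical obstacle.

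\textbf{Invariance.} For $g\in\mathrm{SL}_n(\R)$, we have $\sum_{\mathbf x\in\cL}f(g\mathbf x)=\sum_{\mathbf y\in g\cL}f(\mathbf y)$, and Haar measure is left-invariant, so $\mu(f\circ g)=\mu(f)$. Since $\mathrm{SL}_n(\R)$ acts transitively on $\R^n\setminus\{\mathbf 0\}$ (for $n\ge2$), with stabilizer unimodular, any invariant Radon measure on $\R^n\setminus\{\mathbf 0\}$ is $c\cdot$Lebesgue, by uniqueness of invariant measures on homogeneous spaces. Hence $\mu(f)=c\int f$.

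\textbf{The constant $c=1$.} Take $f=\1_{\mathcal B_R}$, so $\mu(f)=\E\,|\cL\cap\mathcal B_R\setminus\{\mathbf 0\}|$. For each fixed $\cL$, lattice-point counting gives $|\cL\cap\mathcal B_R|/\vol(\mathcal B_R)\to 1$ as $R\to\infty$, since $\det\cL=1$. Dividing the identity by $\vol(\mathcal B_R)$, the left side tends to $1$ by dominated convergence, and the right side is $c$. The domination comes from the bound $|\cL\cap\mathcal B_R|\lesssim R^n\cdot h(\cL)$ with $h$ integrable, which is the same bound used in the finiteness step. Therefore $c=1$.

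Finally, extend from $C_c(\R^n\setminus\{0\})$ to all nonnegative measurable $f$ by monotone class arguments, and then to integrable $f$ by linearity; the point $\{\mathbf 0\}$ is Lebesgue-null. Since this is Siegel's 1945 result, in the paper one would simply cite it.
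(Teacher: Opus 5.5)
Your architecture is the standard Siegel--Weil argument, and the invariance and uniqueness steps are fine. (The paper proves nothing here; it simply cites Siegel.) For $n\ge2$, $\mathrm{SL}_n(\R)$ acts transitively on $\R^n\setminus\{\mathbf 0\}$ with unimodular stabilizer $\R^{n-1}\rtimes\mathrm{SL}_{n-1}(\R)$.

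The gap is the one analytic input on which both remaining steps rest: an \emph{integrable} majorant of $\cL\mapsto|\cL\cap\mathcal B_R|$ on $X_n$. This input is not optional. Without local finiteness the uniqueness theorem does not apply, since $\infty\cdot\mathrm{Leb}$ is also invariant. In the $c=1$ step, Fatou's lemma only gives $c\ge1$, so the domination is exactly what gives $c\le1$.

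The bound you actually write, $|\cL\cap\mathcal B_R|\lesssim C_R\,\lambda_1(\cL)^{-n}$, is true but not integrable. Since $\Prb[\lambda_1<\epsilon]\asymp\epsilon^n$ as $\epsilon\to0$, the integral $\E[\lambda_1^{-n}]=\int_0^\infty\Prb[\lambda_1<t^{-1/n}]\,dt$ diverges logarithmically. Concretely, for $n=2$ in the cusp of the modular domain, $\lambda_1^{-2}\asymp y$ against the measure $dy/y^2$, whereas the true count there is only $\asymp R\,y^{1/2}$. Your dominated-convergence step explicitly reuses ``the same bound'', so it fails as written. The Siegel-set bound ``$O(\prod_{i<n}\alpha_i^{-1})$'' is too unspecified to close the gap; whatever it denotes, it must be $O(y^{1/2})$, not $O(y)$, when $n=2$.

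The repair is the correct majorant. Take a Siegel-reduced basis with Gram--Schmidt lengths $a_1,\ldots,a_n$, and put $t_m:=\log(a_{m+1}/a_m)\ge-O(1)$.
\begin{itemize}
\item Enumeration gives $|\cL\cap\mathcal B_R|\le\prod_i(1+2R/a_i)\le C_nR^n\max_{0\le k\le n}(a_1\cdots a_k)^{-1}$ for $R\ge1$, using the Siegel condition to pass from $\prod_{i:a_i<1}a_i^{-1}$ to an initial segment.
\item In these coordinates the Haar density is $\prod_{i<j}a_i/a_j=\exp(-\sum_m m(n-m)t_m)$.
\item Meanwhile $(a_1\cdots a_k)^{-1}=\exp(\sum_m\omega_k(m)t_m)$ with $\omega_k(m)=\min(m,k)\,(n-\max(m,k))/n\le m(n-m)/n$.
\end{itemize}
Every combined exponent therefore stays strictly negative, so the majorant lies in $L^1(X_n)$, in fact in $L^p$ for $p<n$. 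Dividing by $\vol(\mathcal B_R)$ gives a bound uniform in $R\ge1$. This supplies both the local finiteness of $\mu$ and the domination needed for $c=1$. Finiteness alone can also be obtained by unfolding the primitive-vector sum over $\mathrm{SL}_n(\R)/\Gamma_{\mathbf e_1}$, but identifying $c$ then still needs this majorant or Siegel's volume computation.
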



\subsection{Shells and their populations}
\label{sec:shell-geometry}

Throughout, logarithms are natural. Let $\ell := \lceil\log n\rceil$ and $\gamma := 1 - 1/\ell$, and define the terminal radius and the catalogue radius by
\begin{equation}
 R_m := \sqrt{\frac43}\,e^{1/\ell} \Rlambda, \qquad \Rcat := \gamma R_m = \left(\sqrt{\frac43} - \Theta(\ell^{-2})\right) \Rlambda. \label{eq:critical-radii}
\end{equation}
The index $m$ is the tree depth chosen in Section~\ref{sec:algorithm} to reach this radius exactly. For a deterministic radius $R$, define a narrow spherical shell as
\begin{align}
 S_R := \{\mathbf x: R (1 - n^{-2}) \le \|\mathbf x\| \le R (1 + n^{-2})\}.
\end{align}
The volumes of these spherical shells satisfy
\begin{equation}
 V_R := \vol(S_R)=\left(\frac R \Rlambda \right)^n \bigl((1 + n^{-2})^n - (1 - n^{-2})^n\bigr), \qquad V_{\gamma R} = \gamma^n V_R. \label{eq:shell-scales}
\end{equation}
In particular, $\log V_R=n\log(R/\Rlambda)+O(\log n)$.

For a coset $\boldsymbol\tau\in\R^n/\cL$, write $\cL_{\boldsymbol\tau} := \boldsymbol\tau + \cL$ and $U_R^{\boldsymbol\tau} := \cL_{\boldsymbol\tau} \cap S_R$, and abbreviate $U_R := U_R^0$ for shells around $0$. We use either the \emph{centered experiment}, with $\cL\sim X_n$ and $\boldsymbol\tau=0$, or the \emph{random-affine experiment}, with $\cL\sim X_n$ followed by a conditionally Haar-uniform $\boldsymbol\tau\in\R^n/\cL$. The following shell-count probabilities refer to the chosen experiment; affine assertions are not simultaneous over all shifts of a fixed lattice.

\begin{lemma}[Shell populations~{\cite[Lemma~4.10, Corollary~5.20]{LaarhovenSpherical2026}}]
\label{lem:shell-counts}
In either experiment, for any predetermined $\exp(o(n))$ family $\mathcal F_n$ of shells with radii between $1.1 \Rlambda$ and $\exp(O(\log^2 n)) \Rlambda$, with probability $1 - e^{-\Omega(n)}$,
\begin{equation}
 \left|\frac{|\cL_{\boldsymbol\tau} \cap A|}{\vol(A)} - 1\right| \le \varepsilon_{\rm pop} := e^{-\eta n} \qquad (A \in \mathcal F_n), \label{eq:shell-populations-unified}
\end{equation}
where $\eta > 0$ is an absolute constant.
\end{lemma}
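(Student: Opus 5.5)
The plan is a second-moment (Chebyshev) bound for each fixed shell, followed by a union bound over the family $\mathcal F_n$. The approach works because every shell in the family has exponentially large volume. Indeed, for $A=S_R$ with $R\ge 1.1\Rlambda$, \eqref{eq:shell-scales} gives
\[
\vol(A)=(R/\Rlambda)^n\cdot\Theta(1/n)\ge 1.1^{n}/O(n).
\]
So it suffices to show $\mathrm{Var}(|\cL_{\boldsymbol\tau}\cap A|)\le C\,\vol(A)$ for an absolute constant $C$. Chebyshev with relative error $\varepsilon=\vol(A)^{-1/4}$ then gives
\[
\Prb\bigl[\,\bigl||\cL_{\boldsymbol\tau}\cap A|-\vol(A)\bigr|>\varepsilon\vol(A)\,\bigr]\le \frac{C}{\varepsilon^2\vol(A)}=C\vol(A)^{-1/2}=e^{-\Omega(n)}.
\]
Since $\varepsilon\le e^{-\eta n}$ for, say, $\eta=\tfrac14\log 1.1-o(1)$, this is the required bound for one shell. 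A union bound over the $e^{o(n)}$ predetermined shells keeps the failure probability at $e^{-\Omega(n)}$. Predetermination is used exactly here: the family must be fixed before the Haar average is taken.

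\textbf{First moments.} In the centered experiment, Lemma~\ref{thm:siegel-ext} with $f=\1_A$ gives $\E|\cL\cap A|=\vol(A)$; the origin is not in $A$. In the random-affine experiment, average $\sum_{\mathbf x\in\boldsymbol\tau+\cL}f(\mathbf x)$ over the uniform $\boldsymbol\tau\in\R^n/\cL$. This unfolds to $\int f$ for every fixed $\cL$, since $\cL$ has determinant one.

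\textbf{Second moment, affine case.} Write
\[
|\cL_{\boldsymbol\tau}\cap A|^2=|\cL_{\boldsymbol\tau}\cap A|+\sum_{\mathbf x\in\cL_{\boldsymbol\tau}}\ \sum_{0\ne\mathbf v\in\cL}\1_A(\mathbf x)\1_A(\mathbf x+\mathbf v).
\]
Averaging over $\boldsymbol\tau$ first turns the double sum into $\sum_{0\ne\mathbf v\in\cL}g(\mathbf v)$, where $g(\mathbf v)=\vol(A\cap(A-\mathbf v))$. Siegel's theorem then gives $\int g=\vol(A)^2$. Hence the variance is exactly $\vol(A)$, so $C=1$.

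\textbf{Second moment, centered case (the main obstacle).} Here I would invoke Rogers' second moment formula for $n\ge3$, which Siegel's first-moment theorem alone does not provide. It expresses $\E\sum_{\mathbf x\ne\mathbf y\ne0}f(\mathbf x)f(\mathbf y)$ as $(\int f)^2$ plus contributions from linearly dependent pairs $q\mathbf y=p\mathbf x$ with coprime $p,q$. The formula gives these contributions with explicit weights $\sum_q (\phi(q)/q)^n\cdot(\cdots)$. For $f=\1_A$, only pairs with $\|\mathbf x\|/\|\mathbf y\|\in[1-O(n^{-2}),1+O(n^{-2})]$ survive, because both points lie in the thin shell. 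Nontrivial ratios $p/q\ne\pm1$ therefore require $q\gtrsim n$, and the factor $(\phi(q)/q)^n$ together with the measure loss makes these terms negligible. The pairs $\mathbf y=-\mathbf x$ contribute at most another $\vol(A)$, so $\mathrm{Var}\le 2\vol(A)+o(\vol(A))$. Bounding the dependent-pair terms uniformly over radii up to $e^{O(\log^2 n)}\Rlambda$ is the delicate step. If it proves awkward, I would instead cite the corresponding variance bound from~\cite{LaarhovenSpherical2026}, or use the cruder estimate: for fixed $\mathbf x$ there are at most $O(1)$ rational multiples $(p/q)\mathbf x$ of comparable norm whose Rogers weights are non-negligible, which already yields $\mathrm{Var}\le C\vol(A)$.
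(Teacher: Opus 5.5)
Your proposal is correct and follows essentially the paper's route: the paper imports this lemma from \cite[Lemma~4.10, Corollary~5.20]{LaarhovenSpherical2026}, and its own use of that source in the proof of Lemma~\ref{lem:dgs-joint} ($\E n_j=V_j$, $\operatorname{Var}(n_j)\le 3V_j$, then Chebyshev and a union bound) is exactly your first/second-moment argument; your affine computation giving variance exactly $\vol(A)$ is also right. One correction to the centered step: in Rogers' formula a coprime pair $(p,q)$ with $q\ge1$ contributes $\int \1_A(p\mathbf u)\1_A(q\mathbf u)\,d\mathbf u\le\max(|p|,q)^{-n}\vol(A)$, i.e.\ the weight decays like $q^{-n}$ rather than $(\phi(q)/q)^n$ (which would not decay), so $\operatorname{Var}|\cL\cap A|\le(2+O(2^{-n}))\vol(A)$ holds for every bounded $A\not\ni\mathbf 0$, and no thin-shell or radius-uniformity analysis is needed.
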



\subsection{Degrees and representation counts}

For $\mathbf x \in U_R^{\boldsymbol\tau}$ and $\mathbf d \in \cL_{\boldsymbol\tau}$ define
\begin{align}
 a^{\boldsymbol\tau}_{\cL,R}(\mathbf x) &:= \#\{\mathbf c \in U_R^0: \mathbf x - \mathbf c \in U_{\gamma R}^{\boldsymbol\tau}\}, \label{eq:shifted-degree}\\
 b^{\boldsymbol\tau}_{\cL,R}(\mathbf d) &:= \#\{\mathbf c \in U_R^0: \mathbf d + \mathbf c \in U_R^{\boldsymbol\tau}\}. \label{eq:shifted-repr}
\end{align}
Thus $a$ counts eligible centers for a mover, and $b$ counts representations of a proposed difference.  We omit the superscript when $\boldsymbol\tau = 0$.

Only two deterministic normalizations are needed for the rejection thresholds. For a fixed unit vector $\mathbf e_1$, put
\begin{equation}
 \beta_R:=\vol\bigl(S_R\cap(\gamma R\mathbf e_1+S_R)\bigr),
 \qquad \alpha_R:=\frac{V_{\gamma R}}{V_R}\beta_R=\gamma^n\beta_R.
 \label{eq:count-normalizations}
\end{equation}
They depend only on $n$ and $R$: $\alpha_R$ is the volume scale for $a$, and $\beta_R$ is the volume scale for $b$.

\begin{lemma}[Complete-shell count bounds~{\cite[Lemmas~4.9 and~5.3, Corollary~6.5, Proposition~6.8]{LaarhovenSpherical2026}}]
\label{lem:shell-interface}
In either experiment, for any predetermined $\exp(o(n))$ family of transitions with $R_m \le R \le \exp(O(\log^2 n)) \Rlambda$, with probability $1 - e^{-\Omega(n)}$, simultaneously at every transition,
\begin{align}
 \frac13 \alpha_R &\le a^{\boldsymbol\tau}_{\cL,R}(\mathbf x) \le 2 \alpha_R && (\mathbf x \in U_R^{\boldsymbol\tau}), \label{eq:fullshell-degree} \\
 \frac13 \beta_R &\le b^{\boldsymbol\tau}_{\cL,R}(\mathbf d) \le 2 \beta_R && (\mathbf d \in U_{\gamma R}^{\boldsymbol\tau}). \label{eq:fullshell-repr}
\end{align}
The normalizations satisfy
\begin{equation}
 \frac{\beta_R}{V_R} = \left(1 - \frac{\gamma^2}{4}\right)^{n/2} \exp(O(\log n)), \qquad \alpha_R = \gamma^n \beta_R. \label{eq:incidence-scales}
\end{equation}
\end{lemma}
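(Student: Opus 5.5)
The lemma has three parts: the deterministic asymptotics \eqref{eq:incidence-scales}, and the two-sided bounds \eqref{eq:fullshell-degree} and \eqref{eq:fullshell-repr}. I would prove the asymptotics first by explicit geometry. Then I would reduce both count bounds to one statement: lattice-point counts in a family of lattice-dependent regions concentrate, uniformly over all base points. Finally I would take a union bound over the $\exp(o(n))$ predetermined transitions.

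\emph{Step 1 (normalizations).} The identity $\alpha_R=\gamma^n\beta_R$ holds by definition, since $V_{\gamma R}/V_R=\gamma^n$ by \eqref{eq:shell-scales}. For $\beta_R$ I would use polar coordinates. The sets $S_R$ and $\gamma R\mathbf e_1+S_R$ are shells of relative thickness $2n^{-2}$. Their centers are at distance $\gamma R$, and $\gamma R<2R$. So their intersection is a tubular neighbourhood of an $(n-2)$-sphere. That sphere has radius $R\sqrt{1-\gamma^2/4}$ and lies in the hyperplane $x_1=\gamma R/2$. The transverse cross-section is a parallelogram of area $\Theta((Rn^{-2})^2)/\sin\theta$, where $\cos\theta=\gamma/2$ and hence $\sin\theta$ is bounded below. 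This gives
\[
\beta_R=\omega_{n-2}\bigl(R\sqrt{1-\gamma^2/4}\bigr)^{n-2}\,\Theta(R^2n^{-4}),
\]
where $\omega_{n-2}$ is the surface area of the unit $(n-2)$-sphere. Dividing by $V_R=\Theta(\omega_{n-1}R^{n}n^{-2})$ gives $(1-\gamma^2/4)^{n/2}\exp(O(\log n))$, because $\omega_{n-2}/\omega_{n-1}=\Theta(\sqrt n)$ and the remaining factors are polynomial in $n$.

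\emph{Step 2 (deterministic region volumes).} For a fixed point $\mathbf x$ with $\|\mathbf x\|\in[R(1-n^{-2}),R(1+n^{-2})]$, let
\[
A(\mathbf x):=S_R\cap(\mathbf x-S_{\gamma R}),
\]
so that $a^{\boldsymbol\tau}_{\cL,R}(\mathbf x)=|\cL\cap A(\mathbf x)|$ for the appropriate coset. I would show $\vol A(\mathbf x)=(1+O(1/n))\alpha_R$ uniformly in $\mathbf x$. The computation is the same tube computation as in Step 1, now for a triangle with sides $R$, $\gamma R$ and $\|\mathbf x\|$. Perturbing $\|\mathbf x\|$ by a relative $n^{-2}$ changes the radius of the $(n-2)$-sphere by a relative $O(n^{-2})$, so its $(n-2)$-th power changes only by $1+O(1/n)$. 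Similarly, $B(\mathbf d):=S_R\cap(S_R-\mathbf d)$ satisfies $\vol B(\mathbf d)=(1+O(1/n))\beta_R$ for $\mathbf d\in S_{\gamma R}$. So the constants $1/3$ and $2$ leave ample room, and it suffices to show that every count lies within a factor $3/2$ of its volume.

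\emph{Step 3 (uniform concentration; main obstacle).} The base points $\mathbf x$ range over lattice points. There are exponentially many of them, and they depend on $\cL$. A net argument combined with Lemma~\ref{lem:shell-counts} fails, because a net on the shell has size $\exp(\Theta(n\log n))$, which is too large for a union bound. Instead I would bound the expected number of bad base points directly:
\[
\E\,\#\bigl\{\mathbf x\in U_R^{\boldsymbol\tau}:\ |\cL_{\boldsymbol\tau}\cap A(\mathbf x)|\notin[\tfrac23,\tfrac32]\vol A(\mathbf x)\bigr\}.
\]
Via Rogers' $k$-point mean-value formula (and its random-coset analogue in the affine experiment), this equals an integral over $\mathbf x\in S_R$. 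The integrand is the probability that the Palm-conditioned count in $A(\mathbf x)$ deviates, where the conditioning is on $\mathbf x$ being a lattice point. I would bound that probability by a $k$-th central moment, with $k$ of order $n/\log n$ or a suitable constant multiple. Rogers' formula expresses this moment as the Poisson-like main term plus contributions from linearly dependent tuples, and the latter must be shown to be exponentially small at these radii. The target is a deviation probability of at most $e^{-\Omega(n)}/V_R$, so that after integrating over $S_R$ the expected number of bad points is $e^{-\Omega(n)}$; Markov's inequality then finishes. For this I would use that $\vol A(\mathbf x)$ is exponentially large whenever $R\ge R_m$: at $R=R_m$ one has $\tfrac43\gamma^2(1-\gamma^2/4)\to1$, and the factor $e^{n/\ell}$ built into $R_m$ gives a superpolynomial margin $\exp(\Theta(n/\log n))$. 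This small margin is where I expect the real difficulty. It explains why only a moment order of about $n/\log n$ is affordable, and it is presumably what underlies the exponent $e^{-\Omega(n/\log n)}$ in the main theorems. Controlling the dependent-tuple error terms of Rogers' formula at that moment order is the step I would attack first, following the estimates cited from \cite{LaarhovenSpherical2026}. The representation counts $b$ are handled identically with the regions $B(\mathbf d)$ and base points $\mathbf d\in U_{\gamma R}^{\boldsymbol\tau}$. A union bound over the $\exp(o(n))$ transitions preserves the $1-e^{-\Omega(n)}$ probability.
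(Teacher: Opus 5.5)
Your Steps~1 and~2 are sound. The tube computation gives $\beta_R/V_R=(1-\gamma^2/4)^{n/2}e^{O(\log n)}$, and $\vol A(\mathbf x)$, $\vol B(\mathbf d)$ lie within $1+O(1/n)$ of $\alpha_R$ and $\beta_R$. (The annuli actually cross at an angle $\phi$ with $\sin\phi=\gamma\sqrt{1-\gamma^2/4}$, but your $\Theta$ absorbs this.) The paper itself proves nothing here: it imports the normalizations and all four count bounds from \cite{LaarhovenSpherical2026}. The gap is Step~3, which carries the entire probabilistic content of the lemma and which you do not carry out. You need control of the linearly dependent tuples in Rogers' formula, uniformly over $R_m\le R\le e^{O(\log^2 n)}\Rlambda$ and for both count types. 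You announce this as the step you would attack first and defer it to the same source. So exactly where the work lies, your proposal reduces to the citation the paper uses.

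Moreover, the shape you give that step is not right, so it cannot be completed as stated. There are three problems.

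\emph{The involution.} In the centered experiment $\mathbf d\in\cL$, so $\mathbf c\mapsto-\mathbf d-\mathbf c$ maps the set counted by $b_{\cL,R}(\mathbf d)$ to itself; its fixed point $-\mathbf d/2$ is not in $S_R$. Hence $b$ is always even. The integer relation $\mathbf d+\mathbf c_1+\mathbf c_2=\mathbf 0$ enters the Rogers expansion with weight one and at the same order as the main term: the Palm variance is about $2\beta_R$, not $\beta_R$. This dependent contribution is not exponentially small, so you must count orbits $\{\mathbf c,-\mathbf d-\mathbf c\}$ instead.

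\emph{What the error terms must beat.} Relations with coefficient sum one preserve the fixed axial coordinate $R(1-\gamma^2/2)$ of $A(\mathbf x)$. For $\mathbf c_4=\mathbf c_1+\mathbf c_2-\mathbf c_3$, a Gram-determinant computation gives landing probability $(16/27)^{n/2}n^{O(1)}$. This relation therefore adds $\asymp\mu^3(16/27)^{n/2}$ to the fourth central moment, which exceeds the Poisson value $3\mu^2$ once $R\gtrsim\tfrac32\Rlambda$, that is, at most transitions. So the dependent terms need not be negligible against the Poisson central moment. What must be shown is that every relation type is small against the target $(\mu/3)^k e^{-cn}/V_R$. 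For the degree counts this target equals $\mu^{k-1}(3/4)^{n/2}3^{-k}e^{-cn-O(n/\ell)}$. For the example above it holds because $\tfrac{16}{27}\cdot\tfrac43<1$, but you need it uniformly over all relation types.

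\emph{The moment order.} $k=\Theta(\log n)$ already suffices at every transition. The binding case is $R=R_m$, where $\log\mu=(1+o(1))n/(3\ell)$ while the required tail is $e^{-\Theta(n)}$; at larger radii a bounded $k$ works. The margin fixes this minimal $k$; it does not cap $k$ at $n/\log n$. Taking $k\sim n/\log n$ only enlarges the family of relations you must control.

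In the affine experiment you can avoid Palm measures altogether. Averaging over the uniform coset turns the expected number of bad base points into $\int_{S_R}\Prb_{\cL}[\,|\cL\cap A(\mathbf x)|\text{ bad}\,]\,d\mathbf x$, and no involution arises there.
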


\begin{lemma}[Critical-ball difference coverage~{\cite[Corollaries~6.7 and~6.11]{LaarhovenSpherical2026}}]
\label{lem:critical-coverage}
In either experiment, with probability $1 - e^{-\Omega(n/\log n)}$,
\begin{equation}
 (\cL_{\boldsymbol\tau} \cap \mathcal{B}_{\Rcat}) \setminus \{\mathbf 0\} \subseteq U_{R_m}^{\boldsymbol\tau} - U_{R_m}^0. \label{eq:critical-ball-coverage-imported}
\end{equation}
\end{lemma}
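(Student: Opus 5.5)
The critical-ball difference coverage (Lemma~\ref{lem:critical-coverage}) is imported from \cite{LaarhovenSpherical2026}. To justify it here, I would reduce it to a lower bound on a representation count over a wider range of difference lengths. For a lattice vector $\mathbf d\in(\cL_{\boldsymbol\tau}\cap\mathcal B_{\Rcat})\setminus\{\mathbf 0\}$, the containment $\mathbf d\in U_{R_m}^{\boldsymbol\tau}-U_{R_m}^0$ says exactly that the count
\[
\#\{\mathbf c\in U_{R_m}^0:\mathbf d+\mathbf c\in U_{R_m}^{\boldsymbol\tau}\}
\]
is positive. This is $b^{\boldsymbol\tau}_{\cL,R_m}(\mathbf d)$ from \eqref{eq:shifted-repr}, now evaluated for all $\|\mathbf d\|\le\Rcat$ rather than only on the shell $U_{\gamma R_m}^{\boldsymbol\tau}$.

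\textbf{Expected count.} The volume of $S_{R_m}\cap(\mathbf d+S_{R_m})$ is a lens-shaped region of two thin shells. It is up to $\exp(O(\log n))$ factors equal to $V_{R_m}(1-\|\mathbf d\|^2/(4R_m^2))^{n/2}$. For $\|\mathbf d\|\le\gamma R_m$ this is at least
\[
(4/3)^{n/2}e^{n/\ell}\,(1-\gamma^2/4)^{n/2}\exp(O(\log n)).
\]
Since $(4/3)(1-\gamma^2/4)=1+\Theta(1/\ell)$, the bound is $\exp(\Theta(n/\ell))$. This is where the extra factor $e^{1/\ell}$ in $R_m$ and the $\Theta(\ell^{-2})$ gap in $\Rcat$ come from. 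The expected count is therefore only subexponentially large, $e^{\Theta(n/\log n)}$, which explains the weaker failure probability $e^{-\Omega(n/\log n)}$.

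\textbf{Concentration for each fixed $\mathbf d$.} I would bound the probability that a fixed vector of length $r\le\Rcat$ is a lattice vector with zero count by a second-moment or Rogers-type higher-moment argument. The centered case uses Siegel (Lemma~\ref{thm:siegel-ext}) and Rogers' formulas for pairs and triples. The affine case uses the analogous random-coset mean-value identities. The goal is
\[
\E\bigl[\#\{\mathbf d\in\cL_{\boldsymbol\tau}\cap\mathcal B_{\Rcat}\setminus\{\mathbf 0\}:\ b(\mathbf d)=0\}\bigr]\le e^{-\Omega(n/\log n)}.
\]
To get this, I would integrate over $\mathbf d$ a conditional tail bound $\Pr[b(\mathbf d)=0\mid\mathbf d\in\cL]\le e^{-cn/\log n}\cdot(\text{volume normalisation})$. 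Markov's inequality then gives the claim. The volume of $\mathcal B_{\Rcat}$ is $(4/3)^{n/2}$, so the per-point failure probability must beat this factor. Chebyshev alone gives only $1/\mu=e^{-\Theta(n/\log n)}$, which is too weak. I would therefore use Rogers' $k$-th moments with $k\approx n/\log^2 n$, or a Janson-type inequality on the $\mathbf c$-events, aiming for a failure probability $\exp(-\mu^{\Omega(1)})$, which is doubly exponentially small in $n/\log n$.

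\textbf{Main obstacle.} The hardest step is controlling the correlations between the events $\mathbf c\in\cL$ and $\mathbf d+\mathbf c\in\cL$ in the higher-moment or Janson computation. The non-generic terms in Rogers' formula come from rational linear dependencies among the $\mathbf c$'s and $\mathbf d$. These must be shown to contribute negligibly uniformly over $\|\mathbf d\|\le\Rcat$, including very short $\mathbf d$ near $\lambda_1$. This is exactly the common-cap statistics machinery of \cite{LaarhovenSpherical2026}, and I would invoke their Corollaries~6.7 and~6.11 directly for this step rather than redo it.
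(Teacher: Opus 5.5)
Your proposal is in substance the same as the paper's treatment: the paper gives no argument of its own for this lemma and imports it directly from Corollaries~6.7 and~6.11 of~\cite{LaarhovenSpherical2026}, which is exactly where your sketch hands off the decisive concentration step, and your preliminary reduction (positivity of $b^{\boldsymbol\tau}_{\cL,R_m}(\mathbf d)$ on the whole ball, with mean $\exp(4n/(3\ell)+O(n/\ell^2+\log n))$) is consistent with \eqref{eq:incidence-scales}. One aside in your commentary is off, however: the subexponential mean alone cannot explain the $e^{-\Omega(n/\log n)}$ rate, since Lemma~\ref{lem:shell-interface} at $R=R_m$ already gives $b^{\boldsymbol\tau}_{\cL,R_m}\ge\beta_{R_m}/3>0$ on the thin shell $U^{\boldsymbol\tau}_{\gamma R_m}$ with probability $1-e^{-\Omega(n)}$, at exactly that mean scale.
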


\begin{lemma}[Affine minimum~{\cite[Theorem~1, Lemma~3]{Athreya2015}}]
\label{lem:affine-minimum}
In the random-affine experiment, let $\rho_{\rm aff} := \min_{\mathbf z \in \cL_{\boldsymbol\tau}} \|\mathbf z\|$. With probability $1 - e^{-\Omega(n / \log n)}$,
\begin{equation}
 (1 - 1/\ell) \Rlambda \le \rho_{\rm aff} \le (1 + 1/\ell) \Rlambda. \label{eq:affine-minimum-imported}
\end{equation}
\end{lemma}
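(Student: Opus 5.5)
The plan is a first-moment bound for the lower inequality in \eqref{eq:affine-minimum-imported} and a second-moment bound, with Siegel's theorem (Lemma~\ref{thm:siegel-ext}), for the upper inequality. The key observation is that, for a \emph{fixed} unimodular $\cL$ and a Haar-uniform class $\boldsymbol\tau\in\R^n/\cL$, the counting variable $N_r:=|\cL_{\boldsymbol\tau}\cap\mathcal B_r|$ has exactly computable moments. Unfolding the integral over a fundamental domain gives
\begin{equation*}
\E_{\boldsymbol\tau}N_r=\vol(\mathcal B_r)=(r/\Rlambda)^n .
\end{equation*}
The second moment is the diagonal term plus the off-diagonal pair correlations:
\begin{equation*}
\E_{\boldsymbol\tau}N_r^2=\vol(\mathcal B_r)+\sum_{\mathbf 0\neq\mathbf x\in\cL} g_r(\mathbf x), \qquad g_r(\mathbf x):=\vol\bigl(\mathcal B_r\cap(\mathbf x+\mathcal B_r)\bigr).
\end{equation*}
Both identities are routine unfolding computations using $\det\cL=1$.

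\textbf{Lower bound.} Take $r=(1-1/\ell)\Rlambda$. Then $\{\rho_{\rm aff}<r\}\subseteq\{N_r\ge1\}$, so Markov's inequality applies pointwise in $\cL$ and gives
\begin{equation*}
\Prb(\rho_{\rm aff}<r)\le\vol(\mathcal B_r)=(1-1/\ell)^n\le e^{-n/\ell}=e^{-\Omega(n/\log n)} .
\end{equation*}
This half needs no randomness in $\cL$ at all.

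\textbf{Upper bound.} Take $r=(1+1/\ell)\Rlambda$ and set $V:=\vol(\mathcal B_r)=(1+1/\ell)^n\ge e^{n/(2\ell)}$ for large $n$. I now average the second-moment identity over $\cL\sim X_n$. The function $g_r$ is nonnegative, bounded and compactly supported, so it is integrable and Lemma~\ref{thm:siegel-ext} applies. By Fubini,
\begin{equation*}
\int_{\R^n}g_r(\mathbf x)\,d\mathbf x=\int\!\!\int \1_{\mathcal B_r}(\mathbf y)\,\1_{\mathcal B_r}(\mathbf y-\mathbf x)\,d\mathbf y\,d\mathbf x=V^2 .
\end{equation*}
Hence, in the joint random-affine experiment, $\E N_r=V$ and $\E N_r^2=V+V^2$, so $\operatorname{Var}N_r=V$. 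Chebyshev's inequality then gives
\begin{equation*}
\Prb(\rho_{\rm aff}>r)\le\Prb(N_r=0)\le\frac{\operatorname{Var}N_r}{V^2}=\frac1V\le e^{-n/(2\ell)}=e^{-\Omega(n/\log n)} .
\end{equation*}
A union bound over the two failure events concludes the proof.

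The only step needing care is the upper bound, and specifically the second-moment identity. One must check that the off-diagonal sum runs over \emph{nonzero} lattice vectors, because the $\mathbf x=\mathbf 0$ term is exactly the diagonal contribution $V$. This is what makes Siegel's formula, which excludes $\mathbf 0$, apply verbatim and yield $\operatorname{Var}N_r=V$ exactly.

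The argument uses the joint randomness of $(\cL,\boldsymbol\tau)$; for a fixed lattice the variance could be larger. That is consistent with the lemma being stated only in the random-affine experiment.
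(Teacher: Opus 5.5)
Your proof is correct and follows the same route as the paper: the affine first moment $\E_{\boldsymbol\tau} N_r=\vol(\mathcal B_r)$ with Markov for the lower tail, and a second-moment void bound for the upper tail. The only difference is that the paper cites Athreya's bound $1/(1+v)$, which comes from the same identity $\E N_r^2=V+V^2$ via Cauchy--Schwarz. You instead rederive that identity from Siegel's theorem and apply Chebyshev to get the equally sufficient bound $1/V$.
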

\begin{proof}
For balls of volumes $u$ and $v$, the affine first moment bounds the lower tail by $u$, and Athreya's void bound gives upper tail at most $1 / (1 + v)$. Take $u = (1 - 1/\ell)^n$ and $v = (1 + 1/\ell)^n$. \qed
\end{proof}

For the predetermined families used below, let $\mathcal G_{\rm cen}$ be the intersection of the centered events in Lemmas~\ref{lem:shell-counts}--\ref{lem:critical-coverage}. Define $\mathcal G_{\rm aff}$ analogously for the random coset, including only the lower bound $\rho_{\rm aff}\ge(1-1/\ell)\Rlambda$ from Lemma~\ref{lem:affine-minimum}. Its upper-distance bound is kept separate. Then $\mathcal G_{\rm cen}$ has probability $1-e^{-\Omega(n/\log n)}$ over $\cL$, and $\mathcal G_{\rm cen} \cap\mathcal G_{\rm aff}$ has probability $1-e^{-\Omega(n/\log n)}$ over $(\cL,\boldsymbol\tau)$.


\subsection{Lattice basis reduction and discrete Gaussians}
\label{sec:gaussian-toolkit}

The initialization uses standard lattice basis reduction and discrete-Gaussian properties. Write $\lambda_1(\cL)\le\cdots\le\lambda_n(\cL)$ for the successive minima and $\cL^*$ for the dual lattice. For a basis $B=(\mathbf b_1,\ldots,\mathbf b_n)$, let $\cL(B)$ be its generated lattice and $B^*=(\mathbf b_1^*,\ldots,\mathbf b_n^*)$ its dual basis, so $\langle\mathbf b_i,\mathbf b_j^*\rangle=\delta_{ij}$. Let $\|B\|:=\max_i\|\mathbf b_i\|$ and $\operatorname{Sey}(B):=\max_i\|\mathbf b_i\|\|\mathbf b_i^*\|$. For $A \subseteq \cL$ we write $\rho_{s,\mathbf c}(A) := \sum_{\mathbf x \in A} \rho_{s, \mathbf c}(\mathbf{x})$ where
\begin{align}
 \rho_{s,\mathbf c}(\mathbf x) := e^{-\pi \|\mathbf x - \mathbf c\|^2 / s^2}, \qquad D_{\cL,s,\mathbf c}(\mathbf x) := \frac{\rho_{s, \mathbf c}(\mathbf x)}{\rho_{s,\mathbf c}(\cL)} \, ,
\end{align}
and abbreviate $\rho_s := \rho_{s,\mathbf 0}$ and $D_{\cL,s} := D_{\cL,s,\mathbf 0}$. The smoothing parameter $\eta_\varepsilon(\cL)$ is the least $s$ for which $\rho_{1/s}(\cL^* \setminus \{\mathbf 0\}) \le \varepsilon$.

\begin{lemma}[Transference~{\cite[Theorem~2.1]{Banaszczyk1993}}]
\label{lem:banaszczyk-transference}
For every full-rank lattice $\cL \subset \R^n$ and $1 \le i \le n$,
\begin{align}
 \lambda_i(\cL) \lambda_{n-i+1}(\cL^*) \le n.
\end{align}
\end{lemma}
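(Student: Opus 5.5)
Since this is a classical result cited from~\cite{Banaszczyk1993}, we would follow Banaszczyk's Gaussian-measure argument: Poisson summation, a Gaussian tail bound, and a dimension count. By scaling ($\cL\mapsto c\cL$, $\cL^*\mapsto c^{-1}\cL^*$) the product $\lambda_i(\cL)\lambda_{n-i+1}(\cL^*)$ is invariant. So suppose for contradiction that $\lambda_i(\cL)>\sqrt n$ and $\lambda_{n-i+1}(\cL^*)>\sqrt n$ after normalization. Then all vectors of $\cL\cap\mathcal B_{\sqrt n}$ lie in a subspace $V$ with $\dim V\le i-1$. Likewise all vectors of $\cL^*\cap\mathcal B_{\sqrt n}$ lie in a subspace $W$ with $\dim W\le n-i$.

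The analytic input is Banaszczyk's tail lemma. For every lattice $M$ and every shift $\mathbf u$,
\begin{equation*}
\rho\bigl((M+\mathbf u)\setminus\mathcal B_{\sqrt n}\bigr)\le \bigl(\sqrt{2\pi e}\,e^{-\pi}\bigr)^n\rho(M)\le 2^{-n}\rho(M).
\end{equation*}
We would prove it by comparing $\rho_1$ with $\rho_{t}$ for a suitable $t>1$, using $\rho(M+\mathbf u)\le\rho(M)$ (from Poisson summation, since the Fourier coefficients are nonnegative). Applied to $M=\cL^*$ with $\mathbf u=\mathbf 0$, the lemma shows that $\rho(\cL^*)$ is concentrated on $\cL^*\cap W$ up to relative error $2^{-n}$.

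The key step is the Poisson summation identity
\begin{equation*}
\rho(\cL+\mathbf u)=\det(\cL^*)\sum_{\mathbf y\in\cL^*}\rho(\mathbf y)\,e^{2\pi i\langle\mathbf u,\mathbf y\rangle}.
\end{equation*}
Choose $\mathbf u\perp W$; then the phases equal $1$ on the dominant part $\cL^*\cap W$. This gives $\rho(\cL+\mathbf u)\ge(1-2\cdot 2^{-n})\rho(\cL)$ for every $\mathbf u\in W^\perp$.

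Next we would use the dimension count $\dim(W^\perp\cap V^\perp)\ge i+(n-i+1)-n=1$ to find a direction $\mathbf u_0\perp V,W$. We then choose a shift $\mathbf u=t\mathbf u_0$ so that most Gaussian mass of $\cL+\mathbf u$ falls outside $\mathcal B_{\sqrt n}$. Concretely, the mass inside the ball should come only from points near the (at most $(i-1)$-dimensional) short part, and that mass should be at most a constant fraction of $\rho(\cL)$. Combining this with the tail lemma would give $\rho(\cL+\mathbf u)<(1-2\cdot2^{-n})\rho(\cL)$, contradicting the lower bound above.

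I expect this last step to be the main obstacle. Lattice points outside $V$ are long, but after shifting by $\mathbf u$ they may become short again. To handle this, I would pass to the projection $\pi_{V^\perp}(\cL)$ and the sublattice $\cL^*\cap V^\perp$, which are dual to each other. The idea is to run the Poisson argument within $V^\perp$, where no short primal vectors remain, and to average over $t$ (or choose $\mathbf u$ as a deep hole) so that the inner mass is provably small. If that bookkeeping fails, the fallback is simply to invoke~\cite[Theorem~2.1]{Banaszczyk1993} as stated, which is how the paper uses the lemma.
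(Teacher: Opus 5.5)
The paper does not prove this lemma: it is imported from Banaszczyk and used only with $i=n$ in Lemma~\ref{lem:preprocess}, where any $\poly(n)$ bound would suffice. So your fallback of citing it is exactly the paper's treatment.

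As a self-contained proof, however, your sketch has a genuine gap at the decisive step. You never obtain an upper bound on $\rho(\cL+\mathbf u)$ for a shift $\mathbf u\in W^\perp$, and $W^\perp$ is the only place where your Poisson lower bound holds. Shifting along $\mathbf u_0\in V^\perp\cap W^\perp$ only damps the points of $\cL\cap V$; points of $\cL\setminus V$ may move into $\mathcal B_{\sqrt n}$. Your proposed repair rests on the claim that $\pi_{V^\perp}(\cL)$ has no short vectors, and that claim is false. A vector of $\cL\setminus V$ with a large $V$-component can project to a vector much shorter than $\lambda_i(\cL)$, because reducing modulo $\cL\cap V$ controls that component only up to the covering radius of $\cL\cap V$. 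Likewise, a deep hole of $\cL$ is useless unless it lies in $W^\perp$.

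The missing idea is to discard $V$ and $\mathbf u_0$, and to use $\lambda_i(\cL)>\sqrt n$ only to find a point of $W^\perp$ far from the whole lattice.

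Suppose every point of $W^\perp$ were within distance $r$ of $\cL$. Since $\dim W^\perp\ge i$, you can repeatedly pick $\mathbf t\in W^\perp$ orthogonal to the span $S$ of the independent lattice vectors found so far, with $\|\mathbf t\|=r+\epsilon$. A lattice point within $r$ of $\mathbf t$ lies outside $S$ and has norm at most $2r+\epsilon$, so this gives $\lambda_i(\cL)\le 2r$.

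Hence some $\mathbf u\in W^\perp$ has $\operatorname{dist}(\mathbf u,\cL)>\sqrt n/2$, so the entire coset $\cL-\mathbf u$ avoids $\mathcal B_{\sqrt n/2}$. Banaszczyk's shifted tail bound at radius $c\sqrt n$ with $c=1/2\ge 1/\sqrt{2\pi}$ then gives
\begin{align*}
\rho(\cL-\mathbf u)<2\bigl(\tfrac12\sqrt{2\pi e}\,e^{-\pi/4}\bigr)^n\rho(\cL)\approx 2\cdot 0.94^n\,\rho(\cL).
\end{align*}
Since $-\mathbf u\in W^\perp$, this contradicts your bound $\rho(\cL-\mathbf u)\ge(1-2\cdot 2^{-n})\rho(\cL)$ for all large $n$, which is all the paper needs. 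Matching the exact constant $n$ in every dimension requires more careful constants (for instance, rebalancing the scaling between $\cL$ and $\cL^*$), as in the original.
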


\begin{lemma}[Constructive Seysen reduction~{\cite[Theorem~8 and Lemma~21]{BDSD2016}}]
\label{lem:constructive-seysen}
For $\log n \le k \le n$, constructive Seysen reduction runs in $2^{O(k)}$ time and returns a basis with $\operatorname{Sey}(B) \le k^{O(n/k + \log k)}$; after sorting by
length, $\|\mathbf b_i\| \le \operatorname{Sey}(B) \lambda_i(\cL)$.
\end{lemma}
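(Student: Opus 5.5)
The statement has two parts, and I would treat them differently. The running time and the bound $\operatorname{Sey}(B)\le k^{O(n/k+\log k)}$ are exactly the output guarantee of the constructive Seysen reduction of~\cite{BDSD2016}, and I would import them directly. For context, that procedure works as follows. It calls an exact SVP solver in dimension $k$, costing $2^{O(k)}$ time by a worst-case algorithm such as~\cite{ADRS2015SVP}, inside a block-reduction recursion. The recursion alternately reduces primal and dual projected blocks, producing a basis whose primal and dual Gram--Schmidt profiles are both controlled. Each of the $O(n/k+\log k)$ recursion levels multiplies the Seysen measure by a factor $k^{O(1)}$, which gives the stated bound. Since $k\ge\log n$, the polynomially many oracle calls and the $\poly(n)$ bookkeeping are absorbed into $2^{O(k)}$. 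I would not reprove this part; I would check only that the parameter range $\log n\le k\le n$ matches the cited theorem.

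The part I would actually prove is the inequality $\|\mathbf b_i\|\le\operatorname{Sey}(B)\lambda_i(\cL)$ after sorting. First, I would note that permuting the basis permutes its dual basis in the same way, since the relation $\langle\mathbf b_i,\mathbf b_j^*\rangle=\delta_{ij}$ is preserved. Hence $\operatorname{Sey}(B)$ is invariant under sorting, and I may assume $\|\mathbf b_1\|\le\cdots\le\|\mathbf b_n\|$.

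Now fix $i$ and pick linearly independent lattice vectors $\mathbf v_1,\ldots,\mathbf v_i$ with $\|\mathbf v_r\|\le\lambda_i(\cL)$. Their span has dimension $i$, while $\operatorname{span}(\mathbf b_1,\ldots,\mathbf b_{i-1})$ has dimension $i-1$. So some $\mathbf v_r$ has a nonzero coefficient on some $\mathbf b_j$ with $j\ge i$. That coefficient equals $\langle\mathbf v_r,\mathbf b_j^*\rangle$, and it is an integer because $\mathbf v_r\in\cL$. By Cauchy--Schwarz,
\begin{equation*}
 1\le|\langle\mathbf v_r,\mathbf b_j^*\rangle|\le\lambda_i(\cL)\,\|\mathbf b_j^*\|.
\end{equation*}
Combining this with $\|\mathbf b_j\|\|\mathbf b_j^*\|\le\operatorname{Sey}(B)$ gives $\|\mathbf b_j\|\le\operatorname{Sey}(B)/\|\mathbf b_j^*\|\le\operatorname{Sey}(B)\lambda_i(\cL)$. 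Sorting gives $\|\mathbf b_i\|\le\|\mathbf b_j\|$ since $j\ge i$, which proves the inequality.

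The only substantive step is the imported Seysen bound, together with its compatibility with the stated range of $k$; the sorting inequality is elementary.
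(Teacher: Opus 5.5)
Your proposal is correct and matches the paper: both import the running time and the bound $\operatorname{Sey}(B)\le k^{O(n/k+\log k)}$ from~\cite{BDSD2016}, and both prove the sorted-length inequality the same way. Among $i$ independent vectors of length at most $\lambda_i(\cL)$, you find one with a nonzero integer coordinate $\langle\mathbf v_r,\mathbf b_j^*\rangle$ at some index $j\ge i$, and then chain Cauchy--Schwarz, the Seysen bound, and $\|\mathbf b_i\|\le\|\mathbf b_j\|$; your remark that sorting leaves $\operatorname{Sey}(B)$ unchanged is the same observation the paper makes in the proof of Lemma~\ref{lem:preprocess}.
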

For the last inequality, among $i$ independent vectors of length at most $\lambda_i$, one has a nonzero integer coordinate at some index $j\ge i$. Hence $1\le\lambda_i\|\mathbf b_j^*\|\le\lambda_i\operatorname{Sey}(B)/\|\mathbf b_i\|$.

\begin{lemma}[Smoothing and shifted mass~{\cite[Lemmas~2.4 and~2.6]{GPV2008}}]
\label{lem:gpv-smoothing}
For a rank-$k$ lattice $\cL$, considered in its linear span, and $0<\varepsilon<1$,
\begin{align}
 \eta_\varepsilon(\cL) \le \lambda_k(\cL) \sqrt{\log(2k (1 + 1/\varepsilon)) / \pi}.
\end{align}
For $s\ge\eta_\varepsilon(\cL)$ and $\mathbf c\in\operatorname{span}(\cL)$, the shifted mass $\rho_{s,\mathbf c}(\cL)$ lies in $[(1 - \varepsilon) / (1 + \varepsilon), 1] \rho_s(\cL)$.
\end{lemma}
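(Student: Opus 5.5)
The statement has two parts. I plan to prove the smoothing bound by reducing to a sublattice with a good basis and then bounding the dual Gaussian mass coordinate by coordinate. The shifted-mass bound will follow from Poisson summation. Throughout I work inside $\operatorname{span}(\cL)\cong\R^k$, with $\cL^*$ the dual lattice in that span.

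\textbf{Smoothing bound.}
\begin{enumerate}
\item Pick linearly independent $\mathbf s_1,\ldots,\mathbf s_k\in\cL$ with $\|\mathbf s_i\|\le\lambda_k(\cL)$. They generate a full-rank sublattice $\cL'\subseteq\cL$.
\item Use monotonicity. Since $\cL'\subseteq\cL$, we have $\cL^*\subseteq\cL'^*$, hence $\rho_{1/s}(\cL^*\setminus\{\mathbf0\})\le\rho_{1/s}(\cL'^*\setminus\{\mathbf0\})$ and $\eta_\varepsilon(\cL)\le\eta_\varepsilon(\cL')$. So it suffices to treat a lattice with basis $B=(\mathbf s_i)$ whose Gram--Schmidt vectors satisfy $\|\tilde{\mathbf s}_i\|\le\|\mathbf s_i\|\le\lambda_k$.
\item The dual basis of $\cL'$, taken in reverse order, has Gram--Schmidt lengths $1/\|\tilde{\mathbf s}_i\|$.
\item I will use the standard layered-sum inequality: a Gaussian sum over a lattice is at most the product of the one-dimensional sums over its Gram--Schmidt layers. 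This holds because every shifted one-dimensional sum $\rho_{1/s}(\mathbf c+t\mathbb Z)$ is at most the centered one $\rho_{1/s}(t\mathbb Z)$. It gives
\[
\rho_{1/s}(\cL'^*)\le\prod_{i=1}^k\rho_{1/s}\bigl(\|\tilde{\mathbf s}_i\|^{-1}\mathbb Z\bigr).
\]
\item Set $s=\lambda_k\sqrt{\log(2k(1+1/\varepsilon))/\pi}$ and $q:=\bigl(2k(1+1/\varepsilon)\bigr)^{-1}$. Then each factor is at most
\[
1+2\sum_{z\ge1}e^{-\pi s^2z^2/\lambda_k^2}\le 1+\frac{2q}{1-q}.
\]
\item A direct calculation then gives $\prod_i(\cdot)\le(1+\varepsilon/k)^k\cdot(1+o(1))$. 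With a little care in this estimate, $\prod_i(\cdot)\le 1+\varepsilon$, that is, $\rho_{1/s}(\cL^*\setminus\{\mathbf0\})\le\varepsilon$.
\end{enumerate}

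\textbf{Shifted mass.} By Poisson summation,
\[
\rho_{s,\mathbf c}(\cL)=\det(\cL^*)\,s^k\sum_{\mathbf w\in\cL^*}\rho_{1/s}(\mathbf w)\,e^{2\pi i\langle\mathbf w,\mathbf c\rangle}.
\]
\begin{enumerate}
\item All Fourier coefficients $\rho_{1/s}(\mathbf w)$ are nonnegative. Bounding each phase by $1$ in absolute value gives $\rho_{s,\mathbf c}(\cL)\le\rho_s(\cL)$, which is the case $\mathbf c=\mathbf 0$.
\item Separate the $\mathbf w=\mathbf 0$ term and use $s\ge\eta_\varepsilon$. This gives
\[
\rho_{s,\mathbf c}(\cL)\ge\det(\cL^*)\,s^k(1-\varepsilon)
\qquad\text{and}\qquad
\rho_s(\cL)\le\det(\cL^*)\,s^k(1+\varepsilon).
\]
\item Dividing the first inequality by the second gives the ratio bound $(1-\varepsilon)/(1+\varepsilon)$.
\end{enumerate}

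\textbf{Main obstacle.} The one nontrivial step is the layered product inequality together with the duality of Gram--Schmidt lengths. I would prove the inequality by induction on the last Gram--Schmidt coordinate, applying the one-dimensional shifted-sum bound (itself a consequence of Poisson summation) at each layer. The final arithmetic in step 6 is routine but needs care to land exactly at $\varepsilon$ rather than $\varepsilon(1+o(1))$.
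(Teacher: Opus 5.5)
Your proposal is correct. The paper gives no argument of its own for this lemma; it imports both parts from GPV, who restate Micciancio--Regev. Your shifted-mass part is exactly their Poisson-summation argument. Your smoothing part follows a different route.

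\textbf{How the source proves the smoothing bound.} Micciancio--Regev never pass to a sublattice or to Gram--Schmidt vectors. Take independent $\mathbf v_1,\ldots,\mathbf v_k\in\cL$ of length at most $\lambda_k$. Every nonzero dual vector satisfies $\langle\mathbf w,\mathbf v_i\rangle\neq0$ for some $i$. Slice $\cL^*$ by the integer value of $\langle\mathbf w,\mathbf v_i\rangle$, and compare each slice with the zero slice using the shifted-mass upper bound. This gives $\rho_{1/s}(\{\mathbf w\in\cL^*:\langle\mathbf w,\mathbf v_i\rangle\neq0\})\le 2q\,\rho_{1/s}(\cL^*)$. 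A union bound over $i$ then yields $\rho_{1/s}(\cL^*\setminus\{\mathbf 0\})\le\frac{\varepsilon}{1+\varepsilon}\rho_{1/s}(\cL^*)$, which is equivalent to $\rho_{1/s}(\cL^*\setminus\{\mathbf 0\})\le\varepsilon$ with no delicate arithmetic.

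\textbf{What your route buys and costs.} Your route (sublattice, reversed dual Gram--Schmidt basis, product inequality) is the GPV basis argument. It proves the stronger bound $\eta_\varepsilon(\cL)\le\max_i\|\tilde{\mathbf b}_i\|\sqrt{\log(2k(1+1/\varepsilon))/\pi}$ for any basis. That form would apply directly to the prefix sublattices in Lemma~\ref{lem:quant-gpv}. The price is that you bound a product rather than a sum, so step~6 needs care. Your intermediate bound $(1+\varepsilon/k)^k(1+o(1))$ would not suffice, because $(1+\varepsilon/k)^k\to e^{\varepsilon}>1+\varepsilon$.

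\textbf{Closing step 6.} The exact estimate does close. With $q=\varepsilon/(2k(1+\varepsilon))$ we have $(1+\varepsilon)(1-q)=1+\varepsilon-\varepsilon/(2k)\ge1+\varepsilon/2$. Hence
$\bigl(1+\frac{2q}{1-q}\bigr)^k\le\exp\bigl(\frac{2kq}{1-q}\bigr)=\exp\bigl(\frac{\varepsilon}{(1+\varepsilon)(1-q)}\bigr)\le\exp\bigl(\frac{2\varepsilon}{2+\varepsilon}\bigr)\le1+\varepsilon$.
The last step uses $\log(1+x)\ge 2x/(2+x)$ for $x\ge0$; its derivative difference is $x^2/((1+x)(2+x)^2)\ge0$.

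\textbf{Induction on the last coordinate.} If you induct on the last Gram--Schmidt coordinate, state the product inequality for every coset $\mathbf c+\cL'^*$, not only for $\cL'^*$. The fibres are then shifted $(k-1)$-dimensional lattices, and the inductive hypothesis applies to them. Alternatively, project away from the first basis vector, so each fibre is a shifted one-dimensional lattice and only the one-dimensional shift bound is needed.
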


\begin{lemma}[One-dimensional tails and recursive DGS~{\cite[Lemmas~3.1--3.2 and Theorem~3.3]{GPV2008}}]
\label{lem:gpv-recursive-dgs}
Let $0<\varepsilon<1$, $s\ge\eta_\varepsilon(\Z)$, $c\in\R$, $t>0$, and $X\sim D_{\Z,s,c}$. Then $\Prb[|X-c|\ge ts]\le 2 e^{-\pi t^2}(1 + \varepsilon) / (1 - \varepsilon)$. The truncated integer rejection sampler and recursive $\mathsf{SampleD}$ sampler run in polynomial time and, for $s\ge\|B\|\omega(\sqrt{\log n})$, output a negligible-distance approximation to $D_{\cL(B),s,\mathbf c}$.
\end{lemma}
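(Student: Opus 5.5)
The tail bound and the two samplers are the standard arguments of Gentry--Peikert--Vaikuntanathan~\cite{GPV2008}. I would reconstruct them in three steps: the tail estimate, the one-dimensional sampler, and then the recursive sampler, using Lemma~\ref{lem:gpv-smoothing} as the only external input.

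\emph{Tail bound.} Write $u=x-c$ for $x\in\Z$. The key pointwise inequality is that for $u\ge ts$,
\[
u^2\ge t^2s^2+(u-ts)^2,
\]
because $u^2-(u-ts)^2=2uts-t^2s^2\ge t^2s^2$. Hence
\[
\sum_{x\ge c+ts}\rho_{s,c}(x)\le e^{-\pi t^2}\rho_{s,c+ts}(\Z)\le e^{-\pi t^2}\rho_s(\Z).
\]
The lower tail is symmetric, and together the two tails contribute at most $2e^{-\pi t^2}\rho_s(\Z)$. For the denominator, Lemma~\ref{lem:gpv-smoothing} with $s\ge\eta_\varepsilon(\Z)$ gives $\rho_{s,c}(\Z)\ge\frac{1-\varepsilon}{1+\varepsilon}\rho_s(\Z)$. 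Dividing the two bounds yields the claimed $2e^{-\pi t^2}(1+\varepsilon)/(1-\varepsilon)$.

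\emph{Integer sampler.} Fix $t=\omega(\sqrt{\log n})$. Draw $x$ uniformly from the integers in $[c-ts,c+ts]$ and accept with probability $\rho_{s,c}(x)$; repeat until acceptance.
\begin{itemize}
\item Conditioned on acceptance, the output is exactly $D_{\Z,s,c}$ restricted to the window.
\item By the tail bound, this restriction is within distance $n^{-\omega(1)}$ of $D_{\Z,s,c}$.
\item The acceptance probability is at least $\Omega(\rho_{s,c}(\Z)/(ts))=\Omega(1/t)$, because $\rho_{s,c}(\Z)=\Omega(s)$ once $s\ge\eta_\varepsilon(\Z)$. So the expected number of trials is polynomial, and a polynomial cap on the trials adds only negligible error.
\end{itemize}

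\emph{Recursive $\mathsf{SampleD}$.} Let $\tilde{\mathbf b}_i$ be the Gram--Schmidt vectors of $B$. Set $\mathbf c_n=\mathbf c$, and for $i=n,\dots,1$:
\begin{itemize}
\item let $c_i'=\langle\mathbf c_i,\tilde{\mathbf b}_i\rangle/\|\tilde{\mathbf b}_i\|^2$ and $s_i=s/\|\tilde{\mathbf b}_i\|$;
\item sample $z_i\sim D_{\Z,s_i,c_i'}$;
\item set $\mathbf c_{i-1}=\mathbf c_i-z_i\mathbf b_i$.
\end{itemize}
The output is $\mathbf v=\sum_i z_i\mathbf b_i$. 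Any given $\mathbf v$ arises from exactly one sequence $(z_i)$.

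The step I expect to need the most care is the orthogonality identity
\[
\|\mathbf v-\mathbf c\|^2=\sum_i (z_i-c_i')^2\|\tilde{\mathbf b}_i\|^2.
\]
I would prove it by downward induction on $i$. The point is that $\mathbf c_{i-1}-\mathbf c_i$ lies in the span of $\mathbf b_i$, and the projection of $\mathbf b_i$ onto $\tilde{\mathbf b}_i$ is $\tilde{\mathbf b}_i$ itself, so the $\tilde{\mathbf b}_i$-component of the residual is exactly $(z_i-c_i')\|\tilde{\mathbf b}_i\|$. Later steps use only $\mathbf b_j$ with $j<i$, which are orthogonal to $\tilde{\mathbf b}_i$, so they never change this component.

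Given the identity, the probability of outputting $\mathbf v$ is
\[
\frac{\rho_{s,\mathbf c}(\mathbf v)}{\prod_i\rho_{s_i,c_i'}(\Z)}.
\]
To make the denominator essentially constant, observe:
\begin{itemize}
\item $s_i\ge s/\|B\|=\omega(\sqrt{\log n})$, since $\|\tilde{\mathbf b}_i\|\le\|\mathbf b_i\|\le\|B\|$.
\item Lemma~\ref{lem:gpv-smoothing} with $k=1$ and $\lambda_1(\Z)=1$ gives $\eta_\varepsilon(\Z)\le s_i$ for some negligible $\varepsilon$.
\item The same lemma then places each $\rho_{s_i,c_i'}(\Z)$ in $[(1-\varepsilon)/(1+\varepsilon),1]\rho_{s_i}(\Z)$, a range independent of $\mathbf c$.
\end{itemize}
So the output probability is proportional to $\rho_{s,\mathbf c}(\mathbf v)$ up to a factor $1\pm O(n\varepsilon)$. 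This gives negligible statistical distance from $D_{\cL(B),s,\mathbf c}$, and adding the per-coordinate truncation errors of the integer sampler keeps the total negligible.
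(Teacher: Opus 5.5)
Your proposal is correct and is essentially the Gentry--Peikert--Vaikuntanathan argument that the paper imports by citation without proving: the completing-the-square tail bound combined with the shifted-mass estimate of Lemma~\ref{lem:gpv-smoothing}, the truncated-window rejection sampler with $\Omega(1/t)$ acceptance probability, and the nearest-plane recursion, whose one-dimensional normalizers are controlled uniformly in the center by smoothing. You also correctly observe that $\|\tilde{\mathbf b}_i\|\le\|\mathbf b_i\|\le\|B\|$, which reconciles the paper's hypothesis $s\ge\|B\|\,\omega(\sqrt{\log n})$ with GPV's Gram--Schmidt condition.
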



\newpage
\section{Algorithm description}
\label{sec:algorithm}

Our sieve is designed to preserve a simple sampling invariant: in an ideal execution, each list consists of independent uniform samples from a specific narrow lattice shell. We achieve this by combining independently generated lists in a binary tree and correcting the sampling biases introduced by each merge with careful rejection sampling steps. The analysis then follows the construction: we establish the invariant at initialization, preserve it through successive reductions, construct the complete short-vector catalogue from the final lists, and then scan the entries of this catalogue for a shortest nonzero vector.


\subsection{Parameters and list trees}

We reduce the radius in small steps. Although each merge consumes two independently generated lists, the tree has only polylogarithmic depth and therefore subexponentially many nodes. A subexponential multiplicative factor in the initial list lengths compensates for the samples discarded along the way. This provides independent inputs at every merge while preserving the leading time and space exponents.

Set $N:=(4/3)^{n/2}$, and define the list lengths at different levels as $M_0:=\left\lceil N\exp(2n/\ell)\right\rceil$ (list lengths for initialization), and $M_{i+1}:=\lfloor M_i/16\rfloor$; at each sieving step, the list size decreases by a factor $16$. Section~\ref{sec:init} chooses a deterministic initial radius
\begin{align}
 R_0 = \gamma^{-m}R_m = \exp(O(\log^2 n))\Rlambda
\end{align}
for an integer $m$. Put $R_i:=\gamma^iR_0$ for $0\le i\le m$, so the tree approach ends exactly at the radius in~\eqref{eq:critical-radii}. Its depth is then given as
\begin{equation}
 m=\frac{\log(R_0/R_m)}{-\log\gamma}=O(\log^3 n).
 \label{eq:stop-index}
\end{equation}
Thus the final lists at level $m$, and the catalogue, have radii
\begin{equation}
 R_m=\sqrt{\frac43}\,e^{1/\ell}\Rlambda
 =\left(\sqrt{\frac43}+o(1)\right)\Rlambda,\qquad \Rcat=\gamma R_m.
 \label{eq:final-radius}
\end{equation}
Every source radius $R_i$, $i<m$, satisfies $R_m/\gamma>R_m$. Moreover
\begin{equation}
 M_i = N \exp\!\left(\frac{2n}{\ell} - O(\log^3 n)\right) = N 2^{o(n)}. \label{eq:Mi-scale}
\end{equation}
In short, all lists in the tree contain $(4/3)^{n/2 + o(n)}$ lattice vectors.

We use a binary tree of depth $m$, with lists $L_{i,j}$ of length $M_i$ for $0\le i\le m$ and $0\le j<2^{m+1-i}$. The terminal lists just above the root are combined to form the final catalogue at the root of the tree. For $i<m$ and $0\le j<2^{m-i}$, we merge $(L_{i,2j},L_{i,2j+1})\mapsto L_{i+1,j}$. This ultimately produces the two terminal lists $L_{m,0},L_{m,1}$, each of length $M_m$, whose cross pairs generate the catalogue in Section~\ref{sec:finish}. All lists and center banks are \emph{indexed multisets}: equal lattice values at distinct indices remain distinct occurrences, and every degree/representation count counts multiplicity. The two trees and all sibling subtrees use disjoint randomness. At each merge the left list is frozen as a center bank and the right list supplies movers.


\subsection{List merges}

Pairing a uniform mover with a nearby center introduces two sampling biases: movers have different numbers of eligible centers, and output differences have different numbers of representations. We correct these biases in two stages. Degree-based rejection followed by uniform center selection gives every eligible indexed pair the same weight; a second rejection removes the remaining representation multiplicity. Freezing the center bank and allowing each independent mover to produce at most one output makes the resulting distribution particularly simple to analyze.

Fix one transition $S_R\to S_{\gamma R}$, with source $U_R$ and target $U_{\gamma R}$. Let $C = (\mathbf c_1, \ldots, \mathbf c_{M_i}) \in (U_R)^{M_i}$ be the center list for this merge. Define sampled occurrence counts with respect to this center list as
\begin{align}
 a_C(\mathbf u) &:= \#\{j: \mathbf u - \mathbf c_j \in U_{\gamma R}\},\label{eq:sampled-degree}\\
 b_C(\mathbf d) &:= \#\{j: \mathbf d + \mathbf c_j \in U_R\}.\label{eq:sampled-repr}
\end{align}
To apply rejection sampling and guarantee that output lists are uniform and iid, the deterministic thresholds used in Algorithm~\ref{alg:fullshell-merge} and the cancellation diagram of Figure~\ref{fig:merge-cancellation} are
\begin{equation}
 a_{\max} := 3 M_i \cdot \frac{\alpha_R}{V_R}, \qquad b_{\min} := \frac14 M_i \cdot \frac{\beta_R}{V_R}, \qquad b_{\max} := 3 M_i \cdot \frac{\beta_R}{V_R}. \label{eq:bank-thresholds}
\end{equation}
Lemma~\ref{lem:shell-counts} justifies replacing the random shell population by $V_R$ in these thresholds, which are deterministic functions of the radii used in the algorithm.

\begin{algorithm}[!t]
\caption{One frozen-center full-shell merge}\label{alg:fullshell-merge}
\begin{algorithmic}[1]
\Require frozen iid bank $C\in (U_R)^{M_i}$; independent iid movers $(\mathbf u_j)_{j=1}^{M_i}\in (U_R)^{M_i}$; target $U_{\gamma R}$; thresholds~\eqref{eq:bank-thresholds}. Here $\bot$ denotes rejection.
\For{$j = 1, \ldots, M_i$}
  \State stream and deduplicate eligible center indices with $\mathbf u_j-\mathbf c\in U_{\gamma R}$
  \If{the count exceeds $a_{\max}$} \State stop the query, output $\bot$, and continue to the next mover \EndIf
  \State after the complete report, let $a_C(\mathbf u_j)$ be its exact count
  \If{$a_C(\mathbf u_j)=0$} \State output $\bot$ and continue to the next mover \EndIf
  \If{a Bernoulli trial with success probability $a_C(\mathbf u_j)/a_{\max}$ fails} \State output $\bot$ and continue to the next mover \EndIf
  \State choose one eligible center occurrence uniformly and put $\mathbf d:=\mathbf u_j-\mathbf c$
  \State stream the reverse report with query $-\mathbf d$ and deduplicate center indices
  \If{the count exceeds $b_{\max}$} \State stop the query, output $\bot$, and continue to the next mover \EndIf
  \State after the complete report, let $b_C(\mathbf d)$ be its exact count
  \If{$b_C(\mathbf d)<b_{\min}$} \State output $\bot$ and continue to the next mover \EndIf
  \State with probability $b_{\min} / b_C(\mathbf d)$ output $\mathbf d$; otherwise output $\bot$
\EndFor
\State if fewer than $M_{i+1}$ non-$\bot$ outputs, abort; otherwise return first $M_{i+1}$ successes
\end{algorithmic}
\end{algorithm}

\begin{figure}[!t]
\centering
\resizebox{\textwidth}{!}{%
\begin{tikzpicture}[>=Latex,font=\small,node distance=1.15cm and 1.0cm,
  box/.style={draw,rounded corners,thick,align=center,inner sep=5pt,minimum height=.75cm},
  arr/.style={->,thick}]
\node[box] (u) {$\mathbf u\in U_R$\\mover};
\node[box,right=of u] (deg) {accept with\\$a_C(\mathbf u)/a_{\max}$};
\node[box,right=of deg] (center) {choose one of\\$a_C(\mathbf u)$ eligible centers};
\node[box,right=of center] (d) {$\mathbf d=\mathbf u-\mathbf c$};
\node[box,right=of d] (rep) {accept with\\$b_{\min}/b_C(\mathbf d)$};
\draw[arr] (u)--(deg);
\draw[arr] (deg)--(center);
\draw[arr] (center)--(d);
\draw[arr] (d)--(rep);
\end{tikzpicture}%
}
\vspace{-.6cm}
\caption{The two rejection steps for a good bank, together producing a uniformly random output. The first step makes every eligible indexed mover--center edge equally likely; the second removes the multiplicity with which a target difference can be represented.}
\label{fig:merge-cancellation}
\end{figure}

\paragraph{Removing duplicates and limiting the running time.}
Each candidate pair is identified by its two list indices. Candidates are processed in a stream, exact Euclidean shell membership is checked, and repeated reports of the same indexed pair across independent filter runs are ignored. Thus $a_C$ and $b_C$ count center occurrences once, as in~\eqref{eq:sampled-degree}--\eqref{eq:sampled-repr}. Once a forward query has found more than $a_{\max}$ distinct eligible centers, or a reverse query more than $b_{\max}$, the mover can be rejected; exact counting beyond the threshold is unnecessary. Set $h_n:=n\log\log n/\log n$ for sufficiently large $n$. Section~\ref{sec:nns-middle} imposes a deterministic upper bound on the total work spent examining candidate pairs
\begin{align}
 \mathsf{T}_{\rm cap}(n):=\left(\frac32\right)^{n/2}\exp\!\left(3h_n\right)
\end{align}
for the capped SVP and CVP implementations and aborts if this work budget is exceeded. The batch-DGS preprocessing instead runs all searches to completion without imposing this total work limit. On the Haar workload event analyzed in Section~\ref{sec:nns-middle}, its aggregate work has the stated expectation and obeys the same leading exponential bound with probability $1-e^{-\Omega(n/\log n)}$.

\paragraph{Deterministic list-size bound.}
Every mover produces at most one output, so even before the final truncation the number of outputs is at most $M_i$ deterministically. In particular, the algorithm never creates a quadratic ``keep all reductions'' list.

\paragraph{Processing queries in batches.}
Algorithm~\ref{alg:fullshell-merge} is written mover by mover to display the rejection probabilities, but the NNS search procedure is invoked in batches. For one merge, all $M_i$ movers are queried against the center bank in one forward batch; streamed reports maintain, for each mover, its distinct-occurrence counter and a reservoir sample of one eligible center, stopping that mover as soon as its count exceeds $a_{\max}$. After the first rejection step, we form the proposed differences from the accepted pairs. All reverse queries are processed against the same bank in one reverse batch, again with per-query counters and the threshold test $b_C(\mathbf d)>b_{\max}$. Processing the queries in batches does not change Algorithm~\ref{alg:fullshell-merge}'s output distribution.


\subsection{Invariant}

The main issue is preserving independence as well as uniformity. Pujol--Stehl\'e~\cite[Section~3, Figure~2]{PujolStehle09} already obtain iid outputs conditional on a frozen reduction list. Conditional on a center bank, the movers use independent randomness, but their outputs could still become dependent after averaging over that shared bank. Our rejection corrections eliminate this dependence: every good bank produces exactly the same uniform output law. We prove uniformity and independence by induction through both trees, conditioning at each merge on the success of the preceding steps.

Independent-input trees and distribution invariants already occur in the DGS combiners of ADRS~\cite{ADRS2015SVP}; Hhan\textquotesingle s coset difference tree~\cite{HhanCoset2026} also uses independent leaves. Our contributions are the shell-uniform invariant, the exactly cancelling double rejection for every good bank, and reverse-query domination compatible with NNS. For Gaussian combiners, Aggarwal--Stephens-Davidowitz~\cite{AS2018} use domination to justify removing rejection steps from earlier SVP/CVP algorithms, while Ducas--Engelberts--Loyer~\cite{DEL2025} control conditional similarity to independent Gaussian samples in their provable analysis of Wagner's algorithm for $\mathrm{SIS}^{\infty}$. Our invariant requires an exact uniform shell law for every good bank.

In the ideal execution, let $\mathcal E_t$ denote successful initialization and, at each of the first $t$ merges in a fixed bottom-up order across both trees, both the full bank-good conditions~\eqref{eq:bank-good-full} and at least the required number of successful movers. Write $\mathcal E$ for this event after all merges. These are analysis events: the implementation need not test full bank goodness. After these merges, the lists not yet consumed satisfy the exact product-law invariant

\begin{equation}
 \begin{gathered}
 L_{i,j} \mid (\cL,\mathcal E_t) \sim \Unif(\cL \cap S_{R_i})^{\otimes M_i},\\[-1mm]
 \text{and these available lists are independent conditional on }(\cL,\mathcal E_t).
 \end{gathered} \label{eq:full-iid-invariant}
\end{equation}
All of Sections~\ref{sec:init}--\ref{sec:finish} are devoted to establishing and exploiting~\eqref{eq:full-iid-invariant}. Immediately before a merge, we condition only on the previously established construction-success events, not on the realized values of the available lists; its two sibling inputs then have the displayed iid laws. We never condition in advance on the goodness of future center banks. At the current merge, the bank-good event depends only on the frozen sibling. Conditional also on obtaining enough successes, Theorem~\ref{thm:exact-regeneration} gives a new output list with an exact iid law independent of the realized bank. This permits sequential conditioning without biasing the newly generated output list.

\paragraph{Where independence is used.}
The proof uses independence at the following steps of the construction. First, the initialized leaves are mutually independent conditional on $\cL$. Second, the two inputs to each merge come from disjoint sibling subtrees; after one sibling is selected, the movers and all rejection randomness remain independent of that bank. Third, the regenerated output law does not depend on which good bank was realized, which is what closes the bottom-up induction. Fourth, the two trees use disjoint randomness, and the invariant gives independent terminal lists $L_{m,0},L_{m,1}$ conditional on construction success; the coverage argument applies to each list. Finally, the affine construction uses centered sibling trees disjoint from its affine movers and from its terminal centered root, while the outer-shell streams in batch-DGS are generated with mutually disjoint randomness. No claim of independence is made between two objects outside the above scope.


\section{Initialization: Sampling uniformly from a large shell}
\label{sec:init}

We initialize the sieve by sampling a wide discrete Gaussian and applying radial rejection on a predetermined thin shell. This produces exact uniform samples for ideal DGS. We bound the joint GPV approximation error and transfer the entire subsequent execution by Lemma~\ref{lem:execution-transfer}.

The initialization uses the simultaneous complete-shell population estimate of Lemma~\ref{lem:shell-counts} and otherwise only the standard preprocessing and Gaussian tools mentioned above.


\subsection{Finding a good basis in subexponential time}

The starting radius must be large enough for efficient Gaussian sampling, yet small enough that the sieve can reach the critical scale in only polylogarithmically many steps. We obtain a suitable basis by combining a typical lower bound on the dual minimum with transference and subexponential-time basis reduction. Choosing a Gaussian parameter comfortably above the resulting bound on the basis-vector lengths then gives polynomial-time GPV samples with exponentially small statistical error. The estimates are uniform over shifts, allowing the same initialization to support the affine construction.

\begin{lemma}[Preprocessing basis quality]
\label{lem:preprocess}
There is an absolute constant $\Kpre$ and a $2^{o(n)}$-time preprocessing procedure such that, with probability at least $1-n^{-n}$ over $\cL \sim X_n$, it returns a basis $B = (\mathbf b_1, \ldots, \mathbf b_n)$ satisfying
\begin{equation}
 \|B\| = \max_i \|\mathbf b_i\| \le \exp(\Kpre \log^2 n) \Rlambda. \label{eq:basis-bound}
\end{equation}
\end{lemma}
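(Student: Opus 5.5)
The plan is to run constructive Seysen reduction (Lemma~\ref{lem:constructive-seysen}) with block parameter $k:=\lceil n/\log n\rceil$, and to bound $\lambda_n(\cL)$ via transference from a typical lower bound on $\lambda_1(\cL^*)$. With this $k$ the running time is $2^{O(n/\log n)}=2^{o(n)}$. Moreover $\operatorname{Sey}(B)\le k^{O(n/k+\log k)}=\exp(O(\log n)\cdot O(\log n))=\exp(O(\log^2 n))$. After sorting by length, Lemma~\ref{lem:constructive-seysen} gives $\|B\|=\|\mathbf b_n\|\le \operatorname{Sey}(B)\lambda_n(\cL)$. It therefore suffices to show that $\lambda_n(\cL)\le \exp(O(\log^2 n))\Rlambda$ with probability at least $1-n^{-n}$.

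For this I will use Lemma~\ref{lem:banaszczyk-transference} with $i=n$, which gives $\lambda_n(\cL)\le n/\lambda_1(\cL^*)$. Since $\Rlambda=\Theta(\sqrt n)$, it is enough to show $\lambda_1(\cL^*)\ge \exp(-K\log^2 n)$ except with probability $n^{-n}$. The map $\cL\mapsto\cL^*$ preserves Haar measure on $X_n$ (it is induced by the automorphism $g\mapsto g^{-T}$, which fixes $\mathrm{SL}_n(\Z)$), so $\cL^*\sim X_n$. Apply Lemma~\ref{thm:siegel-ext} to the indicator of $\mathcal B_r$ and use Markov's inequality:
\begin{align}
 \Prb[\lambda_1(\cL^*)\le r]\le \E\,\#\bigl((\cL^*\setminus\{\mathbf 0\})\cap\mathcal B_r\bigr)=\vol(\mathcal B_r)=(r/\Rlambda)^n .
\end{align}
Take $r:=\Rlambda/n$. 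Then the failure probability is at most $n^{-n}$, and on the complement $\lambda_n(\cL)\le n^2/\Rlambda\le n^{2}\,\Rlambda$ (using $\Rlambda\ge 1$ for large $n$; small $n$ is absorbed into the constant). Combining the two bounds gives $\|B\|\le \exp(O(\log^2 n))\cdot n^2\,\Rlambda\le\exp(\Kpre\log^2 n)\Rlambda$ for a suitable absolute $\Kpre$.

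The main obstacle is bookkeeping. We need the $O(\cdot)$ constants in Lemma~\ref{lem:constructive-seysen} to be absolute, so that $\Kpre$ does not depend on $n$. We also need the sorted-length inequality to be applied to the largest index. Both points are routine, and the choice of $k$ is not delicate. Any $k$ between $\log n$ and $o(n)$ with $n/k=O(\log n)$ works, and $k=\Theta(n/\log n)$ balances the $2^{o(n)}$ time requirement against the $\exp(O(\log^2 n))$ quality requirement.
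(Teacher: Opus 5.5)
Your proposal is correct and follows the paper's proof essentially step for step. Both arguments run constructive Seysen reduction with $k=\lceil n/\log n\rceil$ to get $\operatorname{Sey}(B)\le\exp(O(\log^2 n))$, and use Siegel plus Markov on the Haar-distributed dual at radius $\Rlambda/n$ to bound the failure probability by $n^{-n}$. Both then apply Banaszczyk transference with $i=n$ to get $\lambda_n(\cL)\le\poly(n)\Rlambda$, and absorb the polynomial factor into $\Kpre$.
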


\begin{proof}
Take $k = \lceil n / \log n\rceil$ in Lemma~\ref{lem:constructive-seysen}. The running time is $2^{O(n / \log n)} = 2^{o(n)}$. The dual lattice is Haar distributed. With $R_- := \Rlambda/n$, Siegel and Markov give
\begin{align}
 \Prb[\lambda_1(\cL^*)<R_-] \le \vol(\mathcal{B}_{R_-}) = n^{-n}.
\end{align}
On the complementary event, Lemma~\ref{lem:banaszczyk-transference} with $i = n$ gives
\begin{align}
 \lambda_n(\cL) \le \frac{n}{\lambda_1(\cL^*)} \le \frac{n}{R_-} = \poly(n) \Rlambda.
\end{align}
Lemma~\ref{lem:constructive-seysen} gives
$\operatorname{Sey}(B)\le k^{O(n/k+\log k)}=\exp(O(\log^2n))$.  Permute the basis vectors into nondecreasing order of length, and permute the dual basis accordingly; this leaves $\operatorname{Sey}(B)$ unchanged.  Its basis-length guarantee therefore yields
\begin{align}
 \|B\|=\|\mathbf b_n\|
 \le \operatorname{Sey}(B)\lambda_n(\cL)
 \le\exp(O(\log^2n))\Rlambda.
\end{align}
We can absorb the polynomial factor into $\Kpre$, fixing this constant once from the selected constructive Seysen implementation. \qed
\end{proof}

To align the final shell at $R_m$, choose the smallest integer $m\ge0$ for which
\begin{align}
 s_n&:=\gamma^{-m}R_m\sqrt{\frac{2\pi}{n}}
 \ge\exp((\Kpre+1)\log^2n)\Rlambda,\notag\\
 R_0&:=\gamma^{-m}R_m=s_n\sqrt{\frac{n}{2\pi}}.
 \label{eq:init-parameters}
\end{align}
These are deterministic functions of $n$. The ratio of $s_n$ to the displayed lower bound lies in $[1,\gamma^{-1})$, so $R_0=\exp(O(\log^2n))\Rlambda$.

\begin{lemma}[High-parameter discrete Gaussian sampling]
\label{lem:quant-gpv}
On the preprocessing event of Lemma~\ref{lem:preprocess}, for all sufficiently large $n$:
\begin{enumerate}
\item with $\varepsilon_n:=e^{-8n}$ one has $s_n \ge \eta_{\varepsilon_n}(\cL)$ and, uniformly in every shift $\mathbf t \in \R^n$,
\begin{equation}
 \rho_{s_n}(\cL-\mathbf t)
 =s_n^n\bigl(1+O(\varepsilon_n)\bigr);
 \label{eq:quant-normalizer}
\end{equation}
\item the recursive GPV $\mathsf{SampleD}$ construction, run in the exact-arithmetic model with one-dimensional truncation parameter $t_{\rm GPV}=n$ and with every one-dimensional rejection loop stopped after $n^3$ iterations, is well-defined for the real basis $B$.  An internal cap is hit with probability $e^{-\Omega(n^2)}$ per lattice-sampler call; conditional on no internal cap, the call runs in deterministic polynomial time and its output distribution is within $e^{-6n}$ total variation distance of $D_{\cL,s_n,\mathbf t}$, uniformly in $\mathbf t$.
\end{enumerate}
Independent calls can be implemented with independent randomness. Moreover, both conclusions remain valid, with no larger error bounds, after replacing $s_n$ by any $s \ge s_n$.
\end{lemma}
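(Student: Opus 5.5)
The plan is to work on the event of Lemma~\ref{lem:preprocess}, where $\|B\|\le\exp(\Kpre\log^2n)\Rlambda$ and, from its proof, $\lambda_n(\cL)\le\poly(n)\Rlambda$. First I bound the smoothing parameter. Lemma~\ref{lem:gpv-smoothing} with $k=n$ and $\varepsilon=\varepsilon_n=e^{-8n}$ gives
\[
\eta_{\varepsilon_n}(\cL)\le\lambda_n(\cL)\sqrt{\log(2n(1+e^{8n}))/\pi}=O(\sqrt n)\,\lambda_n(\cL)=\poly(n)\Rlambda .
\]
This is far below $s_n\ge\exp((\Kpre+1)\log^2n)\Rlambda$ for large $n$, so $s_n\ge\eta_{\varepsilon_n}(\cL)$.

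For the normalizer~\eqref{eq:quant-normalizer}, I use Poisson summation. Since $\det\cL=1$,
\[
\rho_{s}(\cL-\mathbf t)=s^n\sum_{\mathbf w\in\cL^*}\rho_{1/s}(\mathbf w)e^{2\pi i\langle\mathbf w,\mathbf t\rangle}.
\]
The terms with $\mathbf w\ne\mathbf 0$ have total absolute value at most $\rho_{1/s}(\cL^*\setminus\{\mathbf 0\})\le\varepsilon_n$ once $s\ge\eta_{\varepsilon_n}$. This gives $s^n(1+O(\varepsilon_n))$ uniformly in $\mathbf t$. Equivalently, I can quote the shifted-mass part of Lemma~\ref{lem:gpv-smoothing} together with the unshifted Poisson identity. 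Because $\rho_{1/s}(\cL^*\setminus\{\mathbf0\})$ is decreasing in $s$, the same bound holds verbatim for every $s\ge s_n$.

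For part 2, I follow the GPV analysis (Lemma~\ref{lem:gpv-recursive-dgs}) quantitatively.
\begin{itemize}
\item \emph{Setup.} $\mathsf{SampleD}$ walks down the Gram--Schmidt vectors. At step $i$ it samples a one-dimensional $D_{\Z,s_i,c_i}$ with $s_i=s/\|\tilde{\mathbf b}_i\|$. Since $\|\tilde{\mathbf b}_i\|\le\|\mathbf b_i\|\le\|B\|$, every $s_i\ge s_n/\|B\|\ge\exp(\log^2 n)$. This exceeds $\eta_{\varepsilon}(\Z)$ for $\varepsilon=e^{-10n}$, because $\eta_\varepsilon(\Z)=O(\sqrt{n})$. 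So every step is well-defined in exact arithmetic.
\item \emph{Truncation error.} The integer sampler draws uniformly from $c_i+[-t s_i,t s_i]$ with $t=n$ and accepts with probability $\rho_{s_i,c_i}(x)$. By the tail bound of Lemma~\ref{lem:gpv-recursive-dgs}, truncation costs $2e^{-\pi n^2}(1+\varepsilon)/(1-\varepsilon)$ in statistical distance per step.
\item \emph{Rejection-loop cap.} Each trial accepts with probability $\Omega(1/t)=\Omega(1/n)$: the mass $\rho_{s_i}(\Z)\approx s_i$ is spread over about $2ts_i$ integers. Hence $n^3$ iterations all fail with probability $(1-\Omega(1/n))^{n^3}=e^{-\Omega(n^2)}$. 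A union bound over the $n$ coordinates keeps this at $e^{-\Omega(n^2)}$ per call. Conditional on no cap being hit, the running time is at most $n\cdot n^3\cdot\poly(n)$ operations.
\item \emph{Distance to the ideal law.} GPV's Theorem~3.3 argument shows that the untruncated sampler outputs each $\mathbf x$ with probability proportional to $\rho_{s,\mathbf t}(\mathbf x)$, up to a factor $\prod_i\rho_{s_i,c_i}(\Z)$. By smoothing of $\Z$, each factor equals its $c_i$-independent value times $[(1-\varepsilon)/(1+\varepsilon),1]$. The ratio to $D_{\cL,s,\mathbf t}$ is therefore within $(1\pm O(\varepsilon))^n=1\pm O(ne^{-10n})$. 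Adding the truncation and cap terms gives total variation at most $e^{-6n}$ for large $n$, uniformly in $\mathbf t$.
\end{itemize}
Conditioning on no cap changes the law by at most $e^{-\Omega(n^2)}$, which is also absorbed. Independent calls just use fresh random bits.

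The step I expect to be the main obstacle is making GPV's "negligible" error explicitly $e^{-6n}$ while staying inside the capped, exact-arithmetic model. This requires tracking three pieces carefully: the per-step smoothing ratio, the truncation error, and the conditioning on no cap, so that they combine multiplicatively over $n$ levels. Monotonicity in $s$ settles the final claim. Every lower bound used ($s_i\ge s/\|B\|$, $s\ge\eta_{\varepsilon_n}$) only improves as $s$ grows. The acceptance probability per trial stays $\Omega(1/n)$ for any $s$, since it depends only on $t$.
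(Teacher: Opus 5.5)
Your proposal is correct, and Part~1 is essentially the paper's argument: the smoothing bound of Lemma~\ref{lem:gpv-smoothing} with $\lambda_n(\cL)\le\poly(n)\Rlambda$, then Poisson summation with the nonzero dual terms bounded by $\varepsilon_n$, then monotonicity in $s$.

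For Part~2 you take a different route from the paper. The paper runs the GPV proof as an induction over the prefix sublattices $\cL(\mathbf b_1,\dots,\mathbf b_k)$:
\begin{itemize}
\item it bounds each prefix sublattice's last successive minimum by $\|B\|$;
\item it applies lattice smoothing at every recursive level to show that the ideal law's marginal on the chosen hyperplane is $O(\varepsilon_n)$-close in total variation to the one-dimensional discrete Gaussian the sampler uses;
\item it sums these per-level errors, together with the truncation terms, by the triangle inequality over $n$ levels.
\end{itemize}
You instead use the exact product formula $\Prb[\mathbf x]=\rho_{s,\mathbf t}(\mathbf x)/\prod_i\rho_{s_i,c_i}(\Z)$, where the centers $c_i$ depend on $\mathbf x$. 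You then need only one-dimensional smoothing of $\Z$ at the Gram--Schmidt widths $s/\|\tilde{\mathbf b}_i\|$. This gives a pointwise ratio bound $\bigl((1+\varepsilon)/(1-\varepsilon)\bigr)^{\pm n}$ against $D_{\cL,s,\mathbf t}$, hence $O(n\varepsilon)$ total variation.

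Your route is more elementary. It needs only $s\ge\|\tilde B\|\cdot O(\sqrt n)$ rather than smoothing of every prefix sublattice, and it yields a multiplicative (pointwise) guarantee, which is stronger than the paper's per-level TV bound. The paper's level-wise formulation makes the ``uniformly in the center'' bookkeeping explicit at each recursion step. Both routes go through here because $s_n/\|B\|\ge e^{\log^2 n}$, which dominates both $\|\tilde{\mathbf b}_i\|$ and the prefix-sublattice minima.

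Two small presentation points. First, your closing remark that the errors ``combine multiplicatively'' is accurate only for the smoothing ratios. The truncation errors add via a hybrid (coupling) over the $n$ levels, which is what your bullet actually does. Second, your explicit use of $\Delta(P(\cdot\mid E),P)\le P(E^c)$ for conditioning on no cap is a clean way to handle the point the paper treats only in passing.
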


\begin{proof}
For the first claim, Lemma~\ref{lem:banaszczyk-transference} and the proof of Lemma~\ref{lem:preprocess} give $\lambda_n(\cL) \le \poly(n) \Rlambda$. Lemma~\ref{lem:gpv-smoothing} therefore gives
\begin{align}
 \eta_{\varepsilon_n}(\cL) \le \poly(n) \Rlambda \sqrt{\log(2n(1 + e^{8n}))} \le \poly(n) \sqrt n\,\Rlambda \ll s_n.
\end{align}
For a unimodular lattice, Poisson summation gives
\begin{align}
 \rho_{s_n}(\cL - \mathbf t) = s_n^n \sum_{\mathbf y \in \cL^*} e^{-\pi s_n^2 \|\mathbf y\|^2} e^{2 \pi i \langle \mathbf y,\mathbf t \rangle}.
\end{align}
The nonzero dual terms have total absolute mass at most $\varepsilon_n$ by the definition of smoothing, proving~\eqref{eq:quant-normalizer}.

For the sampler, we use a quantitative version of the induction in the proof of GPV Theorem~3.3. The recursion and its plane decomposition use only real inner products and integer coefficients in the basis expansion, so the same algebra is valid for an exact real basis. Every prefix sublattice has last successive minimum at most the largest norm of its prefix basis vectors, hence at most $\|B\|$. Because
\begin{align}
 \frac{s_n}{\|B\|}
 \ge e^{\log^2n},
\end{align}
Lemma~\ref{lem:gpv-smoothing} applies with error $\varepsilon_n$ at every recursive sublattice and shows that, at one recursion level, the ideal plane marginal differs from the one-dimensional discrete-Gaussian marginal by $O(\varepsilon_n)$ in total variation, uniformly in the center.

At each recursive sublattice, project the center onto its span; the perpendicular Gaussian factor cancels on normalization. Every normalized one-dimensional width is at least $s_n/\|B\|\ge e^{\log^2 n}$, which exceeds $\eta_{\varepsilon_n}(\Z)=O(\sqrt n)$. For the one-dimensional call, use the rejection sampler of GPV Lemma~3.2 with truncation parameter $t_{\rm GPV}=n$. Lemma~\ref{lem:gpv-recursive-dgs} bounds the omitted Gaussian tail by $O(e^{-\pi n^2})$. Each rejection iteration accepts with probability $\Omega(1/n)$, as in the running-time analysis of GPV Lemma~3.2; stop the internal loop after $n^3$ iterations. The probability that any one-dimensional loop in one lattice-sampler call hits this cap is $e^{-\Omega(n^2)}$ after a union bound over the at most $n$ recursion levels, while every non-aborting one-dimensional call has deterministic polynomial running time.  Conditional on no internal cap, one recursion level incurs total variation $O(\varepsilon_n+e^{-\pi n^2})$, and the complete sampler error remains at most $e^{-6n}$; the internal-cap event is accounted for separately as algorithmic failure. By induction and the triangle inequality over at most $n$ levels, the complete sampler error is $O\!\left(n \varepsilon_n + n e^{-\pi n^2}\right) \le e^{-6n}$ for all large $n$. Fresh randomness gives independent invocations. Finally, increasing the Gaussian parameter preserves all smoothing hypotheses and the uniform one-dimensional truncation bound. The same proof therefore applies to every $s\ge s_n$ with the stated error bounds. \qed
\end{proof}


\subsection{Uniform sampling from the initial shell}

A thin shell around the typical Gaussian radius captures an inverse-polynomial fraction of the Gaussian mass, while the Gaussian weights within that shell vary only slightly. We first retain samples in this shell and then apply a rejection correction that exactly cancels their Gaussian weights in the ideal model. This produces a uniform shell sample at polynomial cost. The GPV errors and sampling failure probabilities are sufficiently small to survive a union bound over the entire collection of initial lists.

The continuous Gaussian with density proportional to $e^{-\pi \|\mathbf x\|^2 / s_n^2}$ has radial mode $(1 + O(1/n)) R_0$. A one-dimensional Laplace estimate at this mode gives
\begin{equation}
 \frac1{s_n^n} \int_{S_{R_0}} e^{-\pi \|\mathbf x\|^2 / s_n^2} d\mathbf x = \Theta(n^{-3/2}). \label{eq:continuous-initial-shell-mass}
\end{equation}
On the shell-population event of Lemma~\ref{lem:shell-counts}, $|\cL \cap S_{R_0}| = (1 + o(1)) \vol(S_{R_0})$. Across a relative-width $n^{-2}$ shell around the Gaussian mode, the factor $e^{-\pi \|\mathbf x\|^2 / s_n^2}$ varies by $1 + O(1/n)$. The uniform normalizer~\eqref{eq:quant-normalizer} therefore converts the discrete denominator into its continuous value, uniformly on the preprocessing event; the numerator is a shell sum whose point count is controlled by Lemma~\ref{lem:shell-counts} and whose Gaussian weight varies by only $1 + O(1/n)$ across the shell. Combining this with~\eqref{eq:continuous-initial-shell-mass} gives
\begin{equation}
 \Prb_{X \sim D_{\cL, s_n}}[X \in S_{R_0}] = \Theta(n^{-3/2}) \label{eq:initial-shell-mass}
\end{equation}
on an event of probability $1-e^{-\Omega(n/\log n)}$.

Write $R_+ := R_0 (1 + n^{-2})$. For a DGS sample $\mathbf x \in S_{R_0}$ accept with
\begin{equation}
 p_0(\mathbf x) := \exp\!\left(\pi \frac{\|\mathbf x\|^2 - R_+^2}{s_n^2}\right). \label{eq:radial-rejection}
\end{equation}
Across the shell, $p_0(\mathbf x) = 1 - O(1/n)$, while $\rho_{s_n}(\mathbf x) p_0(\mathbf x) = e^{-\pi R_+^2 / s_n^2}$ is exactly independent of $\mathbf x$.

\begin{lemma}[One nearly uniform sample from the initial shell]
\label{lem:single-init}
On the preprocessing and shell-population events, run independent GPV attempts with radial rejection and abort the invocation if no sample is accepted within $n^3$ attempts.  The probability of either an internal GPV cap or failure to obtain an accepted shell sample within $n^3$ GPV attempts is
$e^{-\Theta(n^{3/2})}$.
Conditional on non-abort, the accepted sample has total variation distance at most $n^{O(1)}e^{-3n}$ from $\Unif(\cL\cap S_{R_0})$, and the invocation takes polynomial time.  Independent invocations use independent randomness.
\end{lemma}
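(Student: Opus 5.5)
The plan is to compare one attempt with its ideal version, in which each GPV call returns an exact sample of $D_{\cL,s_n}$, and then transfer to the real sampler by a coupling. In the ideal model, one attempt first draws $X\sim D_{\cL,s_n}$. If $X\in S_{R_0}$, it accepts with probability $p_0(X)$; otherwise it rejects. Because $\rho_{s_n}(\mathbf x)p_0(\mathbf x)=e^{-\pi R_+^2/s_n^2}$ is constant on the shell, the probability of accepting a given $\mathbf x\in\cL\cap S_{R_0}$ is exactly $e^{-\pi R_+^2/s_n^2}/\rho_{s_n}(\cL)$, which does not depend on $\mathbf x$. Hence, conditional on acceptance, the ideal output is exactly $\Unif(\cL\cap S_{R_0})$. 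Combining~\eqref{eq:initial-shell-mass} with $p_0=1-O(1/n)$ gives a per-attempt ideal acceptance probability $q=\Theta(n^{-3/2})$.

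For the abort bound there are two failure sources. First, any of the $n^3$ GPV calls may hit its internal cap. By Lemma~\ref{lem:quant-gpv} and a union bound, this has probability at most $n^3e^{-\Omega(n^2)}$. Second, all $n^3$ attempts may reject. Conditional on no cap, each real attempt accepts with probability at least $q-e^{-6n}$, since acceptance is a $[0,1]$-valued test of the output and the output is within $e^{-6n}$ of the ideal law. Independence across attempts then bounds the probability of $n^3$ rejections by $(1-\Theta(n^{-3/2}))^{n^3}=e^{-\Theta(n^{3/2})}$, and this term dominates the cap term. The matching lower bound $e^{-\Theta(n^{3/2})}$ follows the same way from $q=\Theta(n^{-3/2})$, using an upper bound on the acceptance probability. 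Each attempt runs in polynomial time when no cap occurs, and there are at most $n^3$ attempts, so the invocation takes polynomial time.

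For the distance bound, I would treat the accepted output as a function of the sequence of (output, coin) pairs. The output is the first accepted value, with the whole procedure stopped after $n^3$ attempts. Coupling each real GPV call with an ideal one, failing with probability at most $e^{-6n}$ per call, makes the joint law of all $n^3$ attempts within $n^3e^{-6n}$ of the ideal one. The ideal procedure, conditioned on not aborting, outputs exactly the uniform law. Conditioning on non-abort, an event of probability $1-o(1)$ in both models, at most doubles the distance and adds the difference of the abort probabilities, which is also $O(n^3e^{-6n})$. The resulting bound is $n^{O(1)}e^{-6n}\le n^{O(1)}e^{-3n}$.

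The main obstacle is bookkeeping rather than mathematics. The total variation guarantee of Lemma~\ref{lem:quant-gpv} holds conditional on no internal cap, so the coupling has to be set up on the no-cap event, and the cap probability has to be charged to the abort probability rather than to the distribution error. I would handle this by defining the ideal attempt to include a fictitious cap with the same probability, so that both models abort on identically distributed events before the coupling is applied. Finally, independence across invocations follows directly from using fresh randomness in each invocation.
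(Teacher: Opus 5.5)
Your proposal is correct and follows the paper's proof. Both arguments use the same steps: exact cancellation of the Gaussian weight by $p_0$ in the ideal model, per-attempt acceptance $\Theta(n^{-3/2})$ from \eqref{eq:initial-shell-mass}, a per-call coupling with error $e^{-6n}$ union-bounded over the $n^3$ attempts, and only a $1+o(1)$ loss from conditioning on non-abort. Your fictitious-cap device is just a more explicit version of the cap bookkeeping that the paper handles by a direct union bound.
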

\begin{proof}
For ideal DGS,~\eqref{eq:radial-rejection} cancels the Gaussian weight exactly. Equation~\eqref{eq:initial-shell-mass} and $p_0=1-O(1/n)$ give success probability $\Theta(n^{-3/2})$ per attempt, so the probability that all $n^3$ attempts fail is $e^{-\Theta(n^{3/2})}$.  By Lemma~\ref{lem:quant-gpv}, except for an internal-cap event of probability $e^{-\Omega(n^2)}$ per GPV call, each actual call can be coupled to its ideal DGS call with error at most $e^{-6n}$. Thus its acceptance probability remains $\Theta(n^{-3/2})$, and independent actual attempts give the same outer abort bound.  Union-bounding over the at most $n^3$ attempts gives internal-cap probability $e^{-\Omega(n^2)}$ and coupling error at most $n^3e^{-6n}\le n^{O(1)}e^{-3n}$; conditioning on non-abort changes the latter bound only by a $1+o(1)$ factor. \qed
\end{proof}

\begin{theorem}[Efficient initialization and ideal iid coupling]
\label{thm:init-iid}
With probability $1-e^{-\Omega(n/\log n)}$ over $\cL$, all leaves needed for both trees can be initialized in overall time and space $2^{0.2075\ldots n + o(n)}$. Conditional on $\cL$, there exist mutually independent ideal leaf lists, each with exact law
$\Unif(\cL \cap S_{R_0})^{\otimes M_0}$,
such that the complete collection of GPV-generated leaves can be coupled to these ideal leaves with failure probability $e^{-\Omega(n)}$.
\end{theorem}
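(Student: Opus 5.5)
The proof has three parts: restrict to good lattices, count the cost, and build the coupling.

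\textbf{Good lattices.} Fix the good event over $\cL$ as the intersection of three events:
\begin{itemize}
\item the preprocessing event of Lemma~\ref{lem:preprocess}, of probability at least $1-n^{-n}$;
\item the shell-population event of Lemma~\ref{lem:shell-counts}, applied to the predetermined family consisting of the single initial shell $S_{R_0}$ (together with the shells used later), of probability $1-e^{-\Omega(n)}$;
\item the event on which~\eqref{eq:initial-shell-mass} holds, of probability $1-e^{-\Omega(n/\log n)}$.
\end{itemize}
The intersection has probability $1-e^{-\Omega(n/\log n)}$. On this event Lemma~\ref{lem:single-init} applies to every invocation.

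\textbf{Cost.} The number of leaves is $2\cdot 2^{m}$ per tree, counting both trees, and $2^{O(\log^3 n)}=2^{o(n)}$ by~\eqref{eq:stop-index}. Each leaf has $M_0=N\exp(2n/\ell)=(4/3)^{n/2}2^{o(n)}$ entries. So the total number of single-sample invocations is $Q:=2^{0.2075\ldots n+o(n)}$. Each invocation runs in polynomial time by Lemma~\ref{lem:single-init}, and the one-time preprocessing costs $2^{o(n)}$. Total time is therefore $Q\cdot\poly(n)+2^{o(n)}$. Space is dominated by storing the leaves, which is $Q$ vectors of polynomial bit size (after reduction the coefficients are bounded by $\|B\|$-related quantities, so $\poly(n)$ bits). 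This gives the stated time and space bounds.

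\textbf{Coupling.} The ideal leaves are defined as $Q$ iid draws from $\Unif(\cL\cap S_{R_0})$, grouped into lists. They are mutually independent conditional on $\cL$, with product law $\Unif(\cL\cap S_{R_0})^{\otimes M_0}$ per list. The coupling is built invocation by invocation. Invocations use fresh randomness, so conditional on $\cL$ (and on the basis $B$, a function of $\cL$ and the preprocessing coins) they are independent. It therefore suffices to couple each actual invocation with one ideal draw and take the product coupling. By Lemma~\ref{lem:single-init}, each invocation fails in one of two ways:
\begin{itemize}
\item it aborts, with probability $e^{-\Theta(n^{3/2})}$;
\item or, conditional on non-abort, its output differs from the ideal draw under the maximal coupling, with probability at most $n^{O(1)}e^{-3n}$.
\end{itemize}
A union bound over all $Q$ invocations gives total failure probability
\[
Q\bigl(e^{-\Theta(n^{3/2})}+n^{O(1)}e^{-3n}\bigr)\le 2^{0.21n}e^{-3n+O(\log n)}=e^{-\Omega(n)},
\]
since $0.21\log 2<3$. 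Aborts are counted as coupling failures, so the actual execution either aborts or, with the same total probability budget, agrees with the ideal leaves.

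\textbf{Main obstacle.} The delicate step is the last one: the per-sample error must be exponentially small at a rate beating the exponential number of samples. This is exactly why Lemma~\ref{lem:quant-gpv} takes $\varepsilon_n=e^{-8n}$ and why the final error is $e^{-3n}$, well below $2^{-0.21n}$. I would check that the total-variation bound of Lemma~\ref{lem:single-init} holds uniformly on the good event, so that the maximal couplings can be chosen measurably given $\cL$. I would also check that conditioning on non-abort only inflates the error by a $1+o(1)$ factor.
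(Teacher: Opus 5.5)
Your proposal is correct and follows essentially the same route as the paper. Like the paper, you count the $2^{m+1}\cdot M_0 = N2^{o(n)}$ single-sample invocations, use Lemma~\ref{lem:single-init} for polynomial per-invocation cost, abort probability $e^{-\Theta(n^{3/2})}$ and statistical error $n^{O(1)}e^{-3n}$, and union-bound these over all invocations to couple to independent ideal uniform leaves with failure $e^{-\Omega(n)}$; your explicit product coupling and good-lattice event just spell out the paper's one-paragraph argument in more detail.
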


\begin{proof}
The two trees have $2^{m+1} = 2^{o(n)}$ leaves in total and every leaf contains $M_0 = N 2^{o(n)}$ entries. By Lemma~\ref{lem:single-init}, every non-aborting invocation uses at most $n^3$ polynomial-time GPV attempts, so the total running time is deterministically bounded by $N 2^{o(n)}$ on the non-abort event. The per-invocation outer abort probability $e^{-\Theta(n^{3/2})}$, internal-cap probability $e^{-\Omega(n^2)}$, and statistical error $n^{O(1)} e^{-3n}$ remain negligible after a union bound over all $N 2^{o(n)}$ requested samples. Thus all capped GPV-generated leaves can be jointly coupled to mutually independent ideal uniform leaves with failure probability $e^{-\Omega(n)}$. Independent GPV calls and rejection randomness yield independent ideal leaves. \qed
\end{proof}

\begin{lemma}[Transfer of an entire execution]
\label{lem:execution-transfer}
Fix the lattice (and, when relevant, the target). If the joint implemented initialization is within total variation $\delta$ of an ideal initialization, any common randomized continuation has output laws within $\delta$. If implemented reporting disagrees with exact reporting with probability at most $\zeta$, the bound is $\delta+\zeta$. This applies to adaptive continuations and includes declared failures as output values.
\end{lemma}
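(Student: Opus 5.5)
The plan is to realize both executions on a common probability space through a coupling, and then apply the coupling inequality for total variation. Fix the lattice (and target). Since the joint implemented initialization $I$ and the ideal initialization $I^\ast$ are within total variation $\delta$, the maximal coupling lemma gives a coupling $(I,I^\ast)$ with $\Prb[I\ne I^\ast]\le\delta$. The continuation is then modelled as a measurable map $F(\text{init},\omega,\text{reports})$. Here $\omega$ is the continuation's internal randomness, including rejection coins, NNS filter randomness and, for adaptive uses, later query choices. We take $\omega$ independent of $(I,I^\ast)$ and use the same $\omega$ in both executions. This is legitimate because the continuation is the same randomized procedure in both worlds, and it only ever consumes the initialization as input.

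First treat exact reporting. On the event $I=I^\ast$, the two executions run identical deterministic code with identical inputs and identical coins, so their outputs coincide. This includes aborts and declared failures, which are treated as ordinary output symbols. Hence $\Prb[F(I,\omega)\ne F(I^\ast,\omega)]\le\delta$. Since total variation is bounded by the disagreement probability of any coupling, the output laws are within $\delta$.

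Adaptivity needs one remark. A continuation whose later steps depend on earlier outputs, such as adaptively chosen DGS widths, is still a single measurable function of $(\text{init},\omega)$. We obtain it by composing the stages with all their coins placed in $\omega$, so the same argument covers it.

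Next add the reporting discrepancy. Let $F_{\rm impl}$ denote the continuation using the implemented NNS reports and $F_{\rm ex}$ the one using exact reports, both driven by the same coins. The hypothesis is that, on the coupled space, the event that implemented reporting differs from exact reporting at some query has probability at most $\zeta$. This bound should hold for the ideal-initialized run, or uniformly over inputs; the proof should state which version is used. On the complement of that event, the two continuations follow identical paths and so give identical outputs. A union bound then yields
\[
\Prb[F_{\rm impl}(I,\omega)\ne F_{\rm ex}(I^\ast,\omega)]\le\Prb[I\ne I^\ast]+\Prb[\text{report mismatch}]\le\delta+\zeta,
\]
so the output laws are within $\delta+\zeta$.

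The main obstacle is the measurability and coupling bookkeeping. The continuation's coins must be shared across the two runs and independent of the initialization coupling. In addition, $\zeta$ must be stated for the correct run, either uniformly over the initialization or on the ideal side. If the reporting-error bound is only known for ideal inputs, we can measure the mismatch on the $I^\ast$ run, since on $\{I=I^\ast\}$ the two runs are the same. The remainder of the argument is the elementary coupling inequality.
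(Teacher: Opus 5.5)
Your proposal is correct and follows the paper's proof. You maximally couple the implemented and ideal initializations, drive both continuations with the same fresh randomness, and note that the histories, including declared failures, coincide until either the initialization or a report disagrees, so a union bound gives $\delta+\zeta$; folding adaptive choices into the shared coins and measuring the reporting mismatch along the ideal-initialized run are sound refinements of the paper's one-line argument.
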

\begin{proof}
Maximally couple the initial data and use the same fresh randomness thereafter. The complete histories agree until initialization or reporting disagrees. No conditioning on a runtime event is involved. \qed
\end{proof}


\section{Repeated sieving: From uniform to uniform samples}
\label{sec:iteration}

We now analyze how two independent uniform lists generate a new uniform list at a smaller radius. The complete-shell estimates provide the geometric regularity needed for this step: every relevant mover and target difference has bounds on the number of eligible centers and on the number of representations. We first transfer these estimates to the randomly sampled center banks, and then show that the two rejection corrections cancel all remaining sampling bias exactly. Throughout this analysis, we work with ideal uniform inputs and exact pair reporting.

Lemma~\ref{lem:execution-transfer} transfers the ideal analysis using Theorem~\ref{thm:init-iid}'s initialization bound and Section~\ref{sec:nns-middle}'s search procedure guarantee. The pointwise counting bounds ensure the rejection probabilities are valid at every point of each shell.

The radii \(R_0,\ldots,R_m\) are fixed in advance, and all \(m=O(\log^3 n)\) sieve steps satisfy the radius conditions of Lemmas 2.2 and 2.3. Consequently, with probability \(1-e^{-\Omega(n/\log n)}\) over the lattice, the bounds on shell sizes, eligible centers, and representations hold simultaneously at every step. For the ideal sieve analysis below, we fix a lattice satisfying these bounds. All probabilities in this analysis are then over the algorithm’s random choices.


\subsection{Concentration of the sampled counts}

The geometric estimates concern complete lattice shells, whereas the algorithm uses only a sampled bank of centers. We bridge this gap by viewing each sampled degree or representation count as a sum of independent indicator variables. The list-size surplus makes the relevant expectations large enough for very strong Chernoff bounds. These bounds hold simultaneously over every mover and target difference, and remain valid throughout the construction after a sequential union bound over the banks.

For every $\mathbf u\in U_R$ and $\mathbf d\in U_{\gamma R}$,
\begin{align}
 \E_C a_C(\mathbf u) &= \frac{M_i}{|U_R|} \cdot a_{\cL,R}(\mathbf u), \qquad \E_C b_C(\mathbf d) = \frac{M_i}{|U_R|} \cdot b_{\cL,R}(\mathbf d). \label{eq:bank-degree-mean}
\end{align}
Since $\log(N\beta_R/V_R)=n/(3\ell)+O(n/\ell^2+\log n)$ and $\alpha_R=\gamma^n\beta_R$, we have
\begin{align}
 \frac{M_i\beta_R}{V_R}&=\exp\!\left(\frac{7n}{3\ell}+O(n/\ell^2+\log^3 n)\right),\notag\\
 \frac{M_i\alpha_R}{V_R}&=\exp\!\left(\frac{4n}{3\ell}+O(n/\ell^2+\log^3 n)\right)
 =\exp(\Omega(n/\ell)).\label{eq:bank-mean-large}
\end{align}
Thus all sampled means are superpolynomially large uniformly over every level.

\begin{lemma}[Bounds on counts in the center list]
\label{lem:bank-good}
Fix a transition and condition on a lattice satisfying Lemmas~\ref{lem:shell-interface} and~\ref{lem:shell-counts}. Suppose that, conditional on the construction-success events established before this merge, the center bank $C$ consists of $M_i$ iid uniform samples from $U_R$. Then, with conditional failure probability
$\exp(-\exp(\Omega(n / \ell)))$,
this bank satisfies
\begin{equation}
 a_C(\mathbf u) \le a_{\max} \quad \forall \mathbf u \in U_R, \qquad b_{\min} \le b_C(\mathbf d) \le b_{\max} \quad \forall \mathbf d \in U_{\gamma R}. \label{eq:bank-good-full}
\end{equation}
Consequently, along any bottom-up construction in which the iid sibling hypothesis holds conditionally at each merge, a conditional union bound over the $2^{o(n)}$ center banks gives simultaneous bank goodness with probability $1-e^{-\Omega(n/\log n)}$.
\end{lemma}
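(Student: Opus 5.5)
The plan is a Chernoff bound for each fixed mover and each fixed target difference, followed by a union bound over the finitely many lattice points in the relevant shells. Fix a transition $R\to\gamma R$ and a lattice on which Lemmas~\ref{lem:shell-counts} and~\ref{lem:shell-interface} hold. Fix $\mathbf u\in U_R$. Then $a_C(\mathbf u)=\sum_{j=1}^{M_i}\1[\mathbf u-\mathbf c_j\in U_{\gamma R}]$ is a sum of $M_i$ iid Bernoulli variables. Its mean is given by~\eqref{eq:bank-degree-mean} as $\mu_a(\mathbf u)=M_i a_{\cL,R}(\mathbf u)/|U_R|$.

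The first step is to bound this mean deterministically. By~\eqref{eq:fullshell-degree} and the population estimate $|U_R|=(1\pm e^{-\eta n})V_R$ we get
\[
\tfrac13(1-o(1))M_i\alpha_R/V_R\le \mu_a(\mathbf u)\le 2(1+o(1))M_i\alpha_R/V_R .
\]
The upper end is strictly below $a_{\max}=3M_i\alpha_R/V_R$ by a constant factor $3/2-o(1)$. In the same way, \eqref{eq:fullshell-repr} places $\mu_b(\mathbf d)$ in $[\tfrac13(1-o(1)),2(1+o(1))]\cdot M_i\beta_R/V_R$. This range lies strictly inside $[b_{\min},b_{\max}]=[\tfrac14,3]\cdot M_i\beta_R/V_R$, with constant-factor slack on both sides.

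The second step applies multiplicative Chernoff bounds. For a sum of independent indicators with mean $\mu\le\mu^+$, we have $\Prb[X\ge(1+\delta)\mu^+]\le e^{-\delta^2\mu^+/3}$. For the lower tail, $\Prb[X\le(1-\delta)\mu]\le e^{-\delta^2\mu/2}$. We take $\delta$ to be a fixed constant, for example $\delta=1/5$. Then by~\eqref{eq:bank-mean-large} each single failure probability is at most
\[
\exp\bigl(-\Omega(M_i\alpha_R/V_R)\bigr)=\exp\bigl(-\exp(\Omega(n/\ell))\bigr),
\]
using $\alpha_R\le\beta_R$ in the representation-count case.

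The third step is the union bound over $\mathbf u\in U_R$ and $\mathbf d\in U_{\gamma R}$. Lemma~\ref{lem:shell-counts} gives $|U_R|\le 2V_R$. Since $R\le\exp(O(\log^2 n))\Rlambda$, this is $\exp(O(n\log^2 n))$. This factor is swamped by the doubly exponential bound because $\exp(\Omega(n/\ell))\gg n\log^2 n$. The failure probability for one bank is therefore still $\exp(-\exp(\Omega(n/\ell)))$.

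For the consequence, we process the merges in the fixed bottom-up order. At merge $t$, we condition on $\mathcal E_{t-1}$. Under this conditioning the frozen sibling is iid uniform by hypothesis (the invariant~\eqref{eq:full-iid-invariant}), so the single-bank bound applies conditionally. Summing these conditional failure probabilities over the $2^{o(n)}$ merges adds at most $2^{o(n)}\exp(-\exp(\Omega(n/\ell)))$. Together with the lattice events, which hold with probability $1-e^{-\Omega(n/\log n)}$, this gives simultaneous bank goodness with probability $1-e^{-\Omega(n/\log n)}$.

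The main obstacle is the bookkeeping of constants in the first step. The $1\pm e^{-\eta n}$ population error must not eat the slack between the factors $\tfrac13,2$ and $\tfrac14,3$. One must also check that the exponent in~\eqref{eq:bank-mean-large} stays positive at every level, which holds because the surplus $\exp(2n/\ell)$ in $M_0$ dominates the $16^{-m}=\exp(-O(\log^3 n))$ loss. Beyond that, the only delicate point is that the conditional union bound uses the iid law only at the moment of each merge, and never a fixed realization of the bank.
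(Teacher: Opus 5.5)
Your proposal is correct and follows essentially the same route as the paper. You compare the bank means \eqref{eq:bank-degree-mean} against the thresholds via Lemmas~\ref{lem:shell-interface} and~\ref{lem:shell-counts} to obtain fixed multiplicative margins, and apply Chernoff bounds with the large means of \eqref{eq:bank-mean-large}. You then union-bound over the at most $2V_R=\exp(O(n\log^2 n))$ shell points and sum conditional failure probabilities sequentially over the merges; your explicit $\delta=1/5$, with $\tfrac45\cdot\tfrac13>\tfrac14$ and $\tfrac65\cdot 2<3$, simply makes the paper's ``fixed multiplicative margin'' concrete.
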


\begin{proof}
Under Lemmas~\ref{lem:shell-interface} and~\ref{lem:shell-counts}, \eqref{eq:bank-degree-mean} is at most $(2 + o(1)) M_i \alpha_R / V_R$, whereas $a_{\max} = 3 M_i \alpha_R / V_R$. The representation mean lies between $(1/3 - o(1)) M_i \beta_R / V_R$ and $(2 + o(1)) M_i \beta_R / V_R$; its lower and upper thresholds are $1/4$ and $3$ times $M_i \beta_R / V_R$. Thus every threshold has a fixed multiplicative margin. Chernoff bounds and~\eqref{eq:bank-mean-large} give failure $\exp(-\exp(\Omega(n/\ell)))$ for any fixed lattice value, conditional on the previously established success events. On the shell-population event every encountered shell contains at most $2 V_R = \exp(O(n \log^2 n))$ lattice points because every radius satisfies $R \le \exp(O(\log^2 n))\Rlambda$. Union-bounding over all source and target values proves the per-bank conditional bound. The final statement follows by applying this bound sequentially and summing the conditional failure probabilities over the $2^{o(n)}$ merges. \qed
\end{proof}


\subsection{Uniform outputs and acceptance probability}

Once the center bank is good, uniformity follows from an exact cancellation. The first rejection and center-selection step assign equal probability to every eligible pair of list entries, and the second rejection cancels the number of edges representing each output difference. Independent movers therefore produce independent uniform successful outputs, with a law that does not depend on the realized bank. The overall acceptance probability is approximately \(1/12\), leaving enough slack to retain a \(1/16\) fraction of the movers at every level.

\begin{theorem}[Uniformity and independence of the output list]
\label{thm:exact-regeneration}
Condition on~\eqref{eq:bank-good-full}. For one mover sampled uniformly from $U_R$ and every $\mathbf d \in U_{\gamma R}$,
\begin{equation}
 \Prb[\text{one mover trial outputs }\mathbf d \mid C, \cL] = \frac{b_{\min}}{a_{\max}} \cdot \frac1{|U_R|}. \label{eq:exact-output-law}
\end{equation}
Consequently, conditional on obtaining at least $M_{i+1}$ successes among the $M_i$ mover trials, the first $M_{i+1}$ accepted vectors have law $\Unif(U_{\gamma R})^{\otimes M_{i+1}}$. Moreover this conditional product law does not depend on the center bank $C$.
\end{theorem}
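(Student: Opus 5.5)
The plan is to compute the single-trial output law by summing over the mover and its chosen center, and then to read off independence and bank-independence from the product structure of the trials.

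\textbf{Step 1: probability that a mover produces a given edge.} Fix a bank $C$ satisfying~\eqref{eq:bank-good-full}, a mover value $\mathbf u\in U_R$, and an index $j$ with $\mathbf u-\mathbf c_j\in U_{\gamma R}$. Bank goodness gives $a_C(\mathbf u)\le a_{\max}$, so the early-stop branch never fires and the Bernoulli probability $a_C(\mathbf u)/a_{\max}$ is at most one. If $a_C(\mathbf u)=0$ there are no such edges, so that branch contributes nothing to any $\mathbf d$. The uniform choice among the $a_C(\mathbf u)$ eligible occurrences then selects index $j$ with probability
\[
\frac{a_C(\mathbf u)}{a_{\max}}\cdot\frac{1}{a_C(\mathbf u)}=\frac1{a_{\max}}.
\]
This is the same for every eligible indexed pair $(\mathbf u,j)$.

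\textbf{Step 2: the second rejection.} The proposed difference $\mathbf d=\mathbf u-\mathbf c_j$ lies in $U_{\gamma R}$. Bank goodness gives $b_{\min}\le b_C(\mathbf d)\le b_{\max}$, so the reverse query neither stops early nor rejects on the lower bound. The acceptance probability $b_{\min}/b_C(\mathbf d)$ is therefore valid. Now fix a target $\mathbf d\in U_{\gamma R}$. The pairs $(\mathbf u,j)$ with $\mathbf u-\mathbf c_j=\mathbf d$ correspond bijectively to indices $j$ with $\mathbf d+\mathbf c_j\in U_R$, via $\mathbf u=\mathbf d+\mathbf c_j$. There are exactly $b_C(\mathbf d)$ of them, counted with multiplicity as in~\eqref{eq:sampled-repr}. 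Averaging over $\mathbf u\sim\Unif(U_R)$, each such pair is hit with probability $1/|U_R|$. Hence
\[
\Prb[\text{output }\mathbf d\mid C,\cL]=b_C(\mathbf d)\cdot\frac1{|U_R|}\cdot\frac1{a_{\max}}\cdot\frac{b_{\min}}{b_C(\mathbf d)}=\frac{b_{\min}}{a_{\max}|U_R|},
\]
which is~\eqref{eq:exact-output-law}. I should also check that deduplication of indexed reports matches the multiset counts, and that the batched and reservoir implementation realizes the same per-mover law; both are noted in Section~\ref{sec:algorithm}.

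\textbf{Step 3: independence and bank-independence.} Conditional on $(C,\cL)$, the $M_i$ mover trials are iid, because the movers are iid uniform and independent of $C$ by the sibling independence in~\eqref{eq:full-iid-invariant}, and each trial uses fresh randomness. So the trial outcomes are iid on $U_{\gamma R}\cup\{\bot\}$. Each $\mathbf d$ has probability $p/|U_{\gamma R}|$ with $p=b_{\min}|U_{\gamma R}|/(a_{\max}|U_R|)$, and $\bot$ has probability $1-p$. Given the set of success positions, the successful values are iid $\Unif(U_{\gamma R})$, since for each trial the conditional law given success is uniform. The event of at least $M_{i+1}$ successes, and the choice of the first $M_{i+1}$ of them, depend only on the success pattern. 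Therefore the first $M_{i+1}$ outputs, conditioned on that event, are iid uniform. Finally, the law $\Unif(U_{\gamma R})^{\otimes M_{i+1}}$ involves no trace of $C$, so averaging over good banks leaves it unchanged.

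The main obstacle is Step 3. One must argue carefully that conditioning on the success pattern does not bias the values. This is clean only because the output law in~\eqref{eq:exact-output-law} is exactly constant in $\mathbf d$: success is then independent of the value within each trial, which keeps the factorization exact. Any approximation at that point would break this.
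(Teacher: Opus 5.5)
Your proposal is correct and follows the paper's proof essentially step for step. The edge probability $1/a_{\max}$ comes from degree rejection plus uniform center choice; you then biject representing center occurrences with mover values, sum over the $b_C(\mathbf d)$ occurrences, and cancel by $b_{\min}/b_C(\mathbf d)$. Finally, independence of the accepted values from the success indicators given $(C,\cL)$ makes conditioning on enough successes and truncation harmless and the law bank-independent. Your explicit checks are details the paper leaves implicit: that bank goodness rules out the early-stop and lower-threshold branches, and that it keeps both acceptance probabilities in $[0,1]$.
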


\begin{proof}
Fix $\mathbf d \in U_{\gamma R}$. Each center occurrence $\mathbf c_j$ with $\mathbf d + \mathbf c_j \in U_R$ corresponds to the unique mover value $\mathbf u = \mathbf d + \mathbf c_j$, which occurs with probability $1/|U_R|$. The first rejection step followed by uniform eligible-center selection chooses this specific edge with probability
\begin{align}
 \frac{a_C(\mathbf u)}{a_{\max}} \frac1{a_C(\mathbf u)} = \frac1{a_{\max}}.
\end{align}
There are $b_C(\mathbf d)$ representing center occurrences. The final rejection multiplies by $b_{\min} / b_C(\mathbf d)$. Summing gives~\eqref{eq:exact-output-law}. Given $(C,\cL)$, mover entries and private randomness are independent. The accepted vectors are uniform and independent of the success indicators. Conditioning on at least $M_{i+1}$ successes therefore does not change their product law, and deleting failures and taking the first $M_{i+1}$ successes preserves it. Since the right side of~\eqref{eq:exact-output-law} is the same for every realized bank satisfying~\eqref{eq:bank-good-full}, the resulting output product law is independent of the realized good bank. \qed
\end{proof}

\begin{lemma}[Constant acceptance probability]
\label{lem:constant-throughput}
On the complete-shell and bank-good events, one mover succeeds with probability
\begin{equation}
 |U_{\gamma R}| \frac{b_{\min}}{|U_R| a_{\max}} = \frac1{12} + o(1) > \frac1{13}. \label{eq:throughput}
\end{equation}
Hence Algorithm~\ref{alg:fullshell-merge} aborts for lack of $M_{i+1} = \lfloor M_i / 16 \rfloor$ successes with probability $e^{-\Omega(M_i)}$.
\end{lemma}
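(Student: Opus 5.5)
The plan is to start from the exact per-value output law~\eqref{eq:exact-output-law} of Theorem~\ref{thm:exact-regeneration}. Summing it over all $\mathbf d\in U_{\gamma R}$ gives the single-mover success probability, conditional on $(C,\cL)$:
\begin{align}
 p := \sum_{\mathbf d\in U_{\gamma R}} \frac{b_{\min}}{a_{\max}}\cdot\frac{1}{|U_R|} = |U_{\gamma R}|\,\frac{b_{\min}}{|U_R|\,a_{\max}}.
\end{align}
This value is the same for every good bank. Next we substitute the thresholds~\eqref{eq:bank-thresholds}. The factors $M_i$ and $V_R$ cancel, leaving
\begin{align}
 p = \frac{|U_{\gamma R}|}{|U_R|}\cdot\frac{(1/4)\beta_R}{3\alpha_R}
 = \frac{1}{12}\cdot\frac{|U_{\gamma R}|}{|U_R|}\cdot\gamma^{-n},
\end{align}
where we used $\alpha_R=\gamma^n\beta_R$ from~\eqref{eq:count-normalizations}.

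To evaluate the remaining ratio we invoke Lemma~\ref{lem:shell-counts} for the two shells $S_R$ and $S_{\gamma R}$. Both radii lie in the predetermined family between $1.1\Rlambda$ and $\exp(O(\log^2 n))\Rlambda$, since $\gamma R\ge \gamma R_{i}\ge R_m>1.1\Rlambda$ for all $i<m$. The lemma gives $|U_{\gamma R}|=(1\pm\varepsilon_{\rm pop})V_{\gamma R}$ and $|U_R|=(1\pm\varepsilon_{\rm pop})V_R$. By~\eqref{eq:shell-scales}, $V_{\gamma R}=\gamma^nV_R$. Therefore $\frac{|U_{\gamma R}|}{|U_R|}\gamma^{-n}=1+O(e^{-\eta n})$, and so $p=\tfrac1{12}+o(1)>\tfrac1{13}$ for large $n$.

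For the abort bound, we condition on $(C,\cL)$ with $C$ good. The mover trials are then independent, since the movers are iid and each uses fresh randomness, and each succeeds with probability exactly $p$. The success count is therefore $\mathrm{Bin}(M_i,p)$ with mean at least $M_i/13$. Aborting requires fewer than $M_{i+1}=\lfloor M_i/16\rfloor$ successes, which is a constant-factor downward deviation ($1/16<1/13$). A multiplicative Chernoff bound then gives probability $e^{-\Omega(M_i)}$.

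The only point needing care is the Chernoff step: it must be applied conditionally on the bank, where independence across movers actually holds. The cap and early-stopping rules (count exceeding $a_{\max}$ or $b_{\max}$) never fire on a good bank, so they do not alter the per-trial probability $p$. Everything else is direct substitution.
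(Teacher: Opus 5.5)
Your proposal is correct and follows essentially the same route as the paper. You sum the exact per-difference law \eqref{eq:exact-output-law} over $U_{\gamma R}$, then substitute the thresholds \eqref{eq:bank-thresholds}, the shell populations, $\alpha_R=\gamma^n\beta_R$ and $V_{\gamma R}=\gamma^nV_R$ to get $\tfrac1{12}+O(\varepsilon_{\rm pop})$, and finish with a Chernoff bound on the mover trials, which are independent conditional on the good bank. Your extra checks (the early-stopping caps never fire on a good bank, and the radii lie in the range of Lemma~\ref{lem:shell-counts}) only spell out what the paper's two-line proof leaves implicit.
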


\begin{proof}
Insert~\eqref{eq:bank-thresholds}, the two shell counts, and~\eqref{eq:count-normalizations}:
\begin{align}
 |U_{\gamma R}| \frac{b_{\min}}{|U_R| a_{\max}} = (1 + o(1)) \frac{V_{\gamma R} \beta_R}{12 V_R \alpha_R} = \frac1{12} + o(1).
\end{align}
Conditional on the bank, mover trials are independent Bernoulli trials with this success probability. Chernoff gives the abort tail. \qed
\end{proof}

\begin{theorem}[Uniformity and independence with exhaustive pair search]
\label{thm:slow-tree}
For the ideal execution with exact-uniform leaves and exact exhaustive pair reporting at every merge, the construction-success event $\mathcal E$ has conditional probability $1-e^{-\Omega(n/\log n)}$ on the lattice events, and in the fixed bottom-up order every newly produced output list has the exact law~\eqref{eq:full-iid-invariant} conditional on the success events exposed up to that merge. In particular, conditional on $(\cL,\mathcal E)$, the terminal lists $L_{m,0},L_{m,1}$ are independent iid-uniform lists on $\cL \cap S_{R_m}$, each of length $M_m$, with
\begin{align}
 R_m = \left(\sqrt{\frac43} + o(1)\right) \Rlambda, \qquad M_m = N 2^{o(n)}.
\end{align}
The GPV-generated execution with the same exact exhaustive reporting can be globally coupled to this ideal execution with failure probability $e^{-\Omega(n)}$. The total number of stored/processed list entries is $N 2^{o(n)}$.
\end{theorem}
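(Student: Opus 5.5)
The proof is an induction over the merges in the fixed bottom-up order across both trees. The inductive statement is the product-law invariant~\eqref{eq:full-iid-invariant} for all lists still available after $t$ merges, conditional on $(\cL,\mathcal E_t)$.

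\emph{Base case ($t=0$).} By Theorem~\ref{thm:init-iid}, the ideal leaves are mutually independent conditional on $\cL$, each with law $\Unif(\cL\cap S_{R_0})^{\otimes M_0}$. So the invariant holds with $\mathcal E_0$ the initialization event.

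\emph{Inductive step.} Consider merge $t+1$, combining $(L_{i,2j},L_{i,2j+1})$.
\begin{itemize}
\item By the invariant, the frozen bank $C=L_{i,2j}$, the movers $L_{i,2j+1}$, and all other available lists are independent iid-uniform conditional on $(\cL,\mathcal E_t)$. The merge's private randomness is fresh.
\item Lemma~\ref{lem:bank-good} applies to $C$ and gives bank goodness~\eqref{eq:bank-good-full} with conditional failure probability $\exp(-\exp(\Omega(n/\ell)))$.
\item On bank goodness, Lemma~\ref{lem:constant-throughput} gives at least $M_{i+1}$ successes except with probability $e^{-\Omega(M_i)}$.
\item Theorem~\ref{thm:exact-regeneration} then shows that, conditional on $C$ being good and on enough successes, the output $L_{i+1,j}$ has law $\Unif(\cL\cap S_{\gamma R_i})^{\otimes M_{i+1}}$, and this law does not depend on $C$.
\end{itemize}
Independence of $L_{i+1,j}$ from the remaining available lists needs one observation. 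The event $\mathcal E_{t+1}\setminus\mathcal E_t$ and the output are measurable with respect to $(C,\text{movers},\text{fresh randomness})$, and this triple is independent of the other available lists. Since the output's conditional law is the same for every good realization of $C$, integrating over $C$ preserves both the product law and the independence. This closes the induction. The terminal statement for $L_{m,0},L_{m,1}$ is the case $t=$ all merges, with $R_m$ and $M_m$ read off from~\eqref{eq:final-radius} and~\eqref{eq:Mi-scale}.

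\emph{Probability of $\mathcal E$.} The failure probability sums the initialization failure from Theorem~\ref{thm:init-iid}, at most $2^{m+1}=2^{o(n)}$ bank failures, and at most $2^{o(n)}$ abort events of size $e^{-\Omega(M_i)}$. All of these are $e^{-\Omega(n)}$ or smaller. The $e^{-\Omega(n/\log n)}$ in the statement comes from the lattice events $\mathcal G_{\rm cen}$ and the bank-goodness union bound quoted in Lemma~\ref{lem:bank-good}. Because conditioning is applied sequentially, each conditional failure bound is valid given the previous successes, and the chain rule gives the total.

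\emph{Coupling and storage.} The coupling to the GPV-generated execution follows from Lemma~\ref{lem:execution-transfer} with $\delta=e^{-\Omega(n)}$ from Theorem~\ref{thm:init-iid} and $\zeta=0$, since reporting is exact and exhaustive. The storage count is $(2^{m+2}-1)\cdot M_0=N2^{o(n)}$, because $m=O(\log^3 n)$ and each mover yields at most one output.

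\emph{Main obstacle.} The delicate step is the conditioning argument: the success event $\mathcal E_{t+1}$ depends on the realized bank, so one must check that conditioning on it does not bias either the output or the untouched lists. I would handle this by writing the joint conditional law explicitly as an integral over good banks. Theorem~\ref{thm:exact-regeneration}'s bank-independence makes the integrand constant, and the untouched lists factor out by independence.
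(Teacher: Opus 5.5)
Your proposal is correct and takes essentially the same route as the paper. It runs a bottom-up induction over the merges, using Lemma~\ref{lem:bank-good}, Lemma~\ref{lem:constant-throughput}, and the bank-independence in Theorem~\ref{thm:exact-regeneration} to carry the product law through sequential conditioning, then applies Lemma~\ref{lem:execution-transfer} to the joint initialization bound of Theorem~\ref{thm:init-iid}; your explicit check that the untouched lists stay independent after conditioning on the new success event spells out a step the paper leaves implicit. (Minor slip: since $\mathcal E_{t+1}\subseteq\mathcal E_t$, the set $\mathcal E_{t+1}\setminus\mathcal E_t$ is empty, so name the merge-$(t+1)$ success event separately, e.g.\ write $\mathcal E_{t+1}=\mathcal E_t\cap G_{t+1}$.)
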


\begin{proof}
First analyze the ideal exact-uniform leaves with exact exhaustive reporting. The deterministic radius sequence has $m=O(\log^3n)$ transitions and is covered by Lemma~\ref{lem:shell-interface}. Process both trees bottom-up in any fixed order. Conditional on all construction-success events exposed before the current merge, its sibling inputs are independent iid lists by induction. The current bank-good event depends only on the frozen sibling and, by Lemma~\ref{lem:bank-good}, fails with the stated superexponentially small conditional probability. Conditional on this current event, the realized bank, and enough successes at this merge, Theorem~\ref{thm:exact-regeneration} gives an iid-uniform output law that does not depend on the bank. Lemma~\ref{lem:constant-throughput} makes the current abort probability $e^{-\Omega(M_i)}$. Conditioning additionally on these current success events therefore preserves the law of the newly produced output list. Induction over the $2^{o(n)}$ nodes gives the claimed independent product laws at $L_{m,0},L_{m,1}$ conditional on $\mathcal E$, while a union bound makes construction failure $\exp[-\exp(\Omega(n/\ell))]$.

We finally apply Lemma~\ref{lem:execution-transfer} to the joint initialization bound in Theorem~\ref{thm:init-iid}. Equation~\eqref{eq:Mi-scale} gives the list size. A depth-first approach stores only a polylogarithmic number of lists at once; even storing the complete $2^{o(n)}$ forest preserves the same leading exponent. \qed
\end{proof}


\section{Finalization: Constructing the short-vector catalogue}
\label{sec:finish}

At the critical scale, both terminal lists $L_{m,0},L_{m,1}$ cover the complete source shell. One representation per short vector then suffices: the critical-ball coverage lemma gives the complete catalogue. A minimum-norm scan of its nonzero entries then solves SVP. The same catalogue is used by batch DGS.

\begin{lemma}[Typical shortest-vector scale]
\label{lem:lambda-scale}
For $\cL\sim X_n$, with probability $1-e^{-\Omega(n/\log n)}$,
\begin{equation}
 (1-1/\ell)\Rlambda\le\lambda_1(\cL)\le(1+1/\ell)\Rlambda.
\end{equation}
\end{lemma}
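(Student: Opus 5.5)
The plan has two halves: a lower tail from a first-moment bound via Siegel, and an upper tail from a second-moment argument on a ball slightly larger than the critical one.

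\emph{Lower tail.} Put $r_-:=(1-1/\ell)\Rlambda$. By Lemma~\ref{thm:siegel-ext} applied to the indicator of $\mathcal{B}_{r_-}$ and Markov's inequality,
\begin{align}
 \Prb[\lambda_1(\cL)<r_-]\le\E\,\#\bigl((\cL\setminus\{\mathbf 0\})\cap\mathcal{B}_{r_-}\bigr)=\vol(\mathcal{B}_{r_-})=(1-1/\ell)^n .
\end{align}
Since $(1-1/\ell)^n\le e^{-n/\ell}=e^{-\Omega(n/\log n)}$, this half is done.

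\emph{Upper tail.} Put $r_+:=(1+1/\ell)\Rlambda$. I would not use Lemma~\ref{lem:shell-counts} directly, because it only covers radii at least $1.1\Rlambda$. The natural candidate is the centered analogue of Athreya's void bound behind Lemma~\ref{lem:affine-minimum}. For Haar lattices, Rogers' second moment formula gives the following: if $Z$ counts the nonzero lattice points in a centrally symmetric ball of volume $v$, then $\Prb[Z=0]\le O(1/v)$ for $n\ge3$. This follows from Paley--Zygmund applied to $Z/2$, the number of $\pm$ pairs, whose variance is $O(v)$ by Rogers' formula. I would cite this as the centered counterpart of \cite[Theorem~1]{Athreya2015}, in the same way the affine proof used the void bound.

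Taking $v=\vol(\mathcal{B}_{r_+})=(1+1/\ell)^n$ gives
\begin{align}
 \Prb[\lambda_1(\cL)>r_+]\le O\bigl((1+1/\ell)^{-n}\bigr)=e^{-n/\ell+O(n/\ell^2)}=e^{-\Omega(n/\log n)} .
\end{align}
A union bound over the two tails then finishes the argument.

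\emph{Main obstacle.} The hard step is the upper tail. It needs a second-moment (Rogers) estimate, because the first moment alone cannot rule out an empty ball. If citing the centered void bound is unavailable, my fallback is to apply Lemma~\ref{lem:shell-counts} to the single predetermined shell $S_{1.1\Rlambda}$. That lemma allows radii down to $1.1\Rlambda$, so non-emptiness of $S_{1.1\Rlambda}$ gives $\lambda_1\le1.1(1+n^{-2})\Rlambda$. That bound is weaker than the claimed $(1+1/\ell)\Rlambda$ and does not suffice on its own, so I would still try to import the Rogers variance bound $\operatorname{Var}Z\le 2v+O(1)$ and run Chebyshev.
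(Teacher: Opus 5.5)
Your proposal is correct and is essentially the paper's argument: Siegel plus Markov bounds the inner-ball event by $(1-1/\ell)^n$, and Rogers' two-vector formula gives $\operatorname{Var} Z=O(V)$ for the outer ball of volume $V=(1+1/\ell)^n$, so a second-moment inequality bounds emptiness by $O(V^{-1})=e^{-\Omega(n/\log n)}$. The paper applies Chebyshev where you apply Paley--Zygmund, and both give the same $O(V^{-1})$ bound, so the fallback via Lemma~\ref{lem:shell-counts} is never needed.
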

\begin{proof}
Siegel and Markov bound the inner-ball event by $(1-1/\ell)^n$. The outer ball has volume $V=(1+1/\ell)^n$. Rogers' two-vector formula~\cite{Rogers1955,RogersMoments1955} bounds its nonzero-count variance by $O(V)$, so Chebyshev bounds emptiness by $O(V^{-1})$. Both errors are $e^{-\Omega(n/\log n)}$. \qed
\end{proof}

\begin{lemma}[Complete coverage by the terminal lists]
\label{lem:terminal-coverage}
Fix a lattice on the population and complete-shell events. Conditional on $(\cL,\mathcal E)$, both lists $L_{m,0},L_{m,1}$ contain every point of $U_{R_m}$ except with probability at most
\begin{equation}
 2|U_{R_m}|\exp(-M_m/|U_{R_m}|)
 =\exp[-\exp(\Omega(n/\ell))].\label{eq:terminal-coverage}
\end{equation}
\end{lemma}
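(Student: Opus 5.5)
The plan is a coupon-collector bound applied to each terminal list, followed by a union bound over points and over the two lists. By Theorem~\ref{thm:slow-tree}, conditional on $(\cL,\mathcal E)$ the lists $L_{m,0}$ and $L_{m,1}$ are independent and each has law $\Unif(U_{R_m})^{\otimes M_m}$. The lattice is fixed, so $U_{R_m}=\cL\cap S_{R_m}$ is a deterministic finite set. For a single list $L$ and a fixed point $\mathbf x\in U_{R_m}$, the $M_m$ entries are iid uniform, hence
\begin{align}
 \Prb[\mathbf x\notin L\mid\cL,\mathcal E]=\bigl(1-1/|U_{R_m}|\bigr)^{M_m}\le\exp(-M_m/|U_{R_m}|).
\end{align}
A union bound over the $|U_{R_m}|$ points of the shell and over the two lists gives the left-hand side of~\eqref{eq:terminal-coverage}. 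Independence of the two lists is not needed for this step.

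Next I show that this quantity is $\exp[-\exp(\Omega(n/\ell))]$, and this is where the parameter choices enter. The radius $R_m=\sqrt{4/3}\,e^{1/\ell}\Rlambda$ lies above $1.1\Rlambda$ and is predetermined, so on the population event Lemma~\ref{lem:shell-counts} gives $|U_{R_m}|=(1+o(1))V_{R_m}$. Using~\eqref{eq:shell-scales},
\begin{align}
 \log V_{R_m}=n\log\bigl(\sqrt{4/3}\,e^{1/\ell}\bigr)+O(\log n)=\log N+\frac{n}{\ell}+O(\log n).
\end{align}
From~\eqref{eq:Mi-scale}, $\log M_m=\log N+2n/\ell-O(\log^3 n)$. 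Subtracting the two expressions gives
\begin{align}
 \frac{M_m}{|U_{R_m}|}=\exp\!\Bigl(\frac{n}{\ell}-O(\log^3 n)\Bigr)=\exp(\Omega(n/\ell)).
\end{align}
The prefactor $2|U_{R_m}|=\exp(O(n))$ is absorbed because $\exp(-\exp(\Omega(n/\ell)))$ decays much faster than any $\exp(-O(n))$.

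The main point to check is the exponent bookkeeping. The surplus $\exp(2n/\ell)$ in $M_0$ must outlast both the $16^{-m}=\exp(-O(\log^3 n))$ loss accumulated over the levels and the $e^{n/\ell}$ excess of the terminal shell volume over $N$. It does, leaving a margin of $n/\ell$ in the exponent. The other requirement is that the uniform-law statement is exact conditional on $\mathcal E$, so the coupon-collector computation involves no approximation error. Theorem~\ref{thm:slow-tree} supplies exactly this, and the only remaining care is to condition on the same event $\mathcal E$ throughout.
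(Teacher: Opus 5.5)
Your proposal is correct and follows essentially the same argument as the paper. Both use the per-point miss bound $(1-1/|U_{R_m}|)^{M_m}\le e^{-M_m/|U_{R_m}|}$ from the exact iid-uniform law of Theorem~\ref{thm:slow-tree}, a union bound over shell points and the two lists, and the bookkeeping $|U_{R_m}|=N\exp(n/\ell+O(\log n))$, $M_m/|U_{R_m}|=\exp(n/\ell-O(\log^3 n))$ (and you are right that independence of the two lists is not needed for the union bound).
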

\begin{proof}
Theorem~\ref{thm:slow-tree} gives independent iid terminal lists. Moreover,
\begin{equation}
 |U_{R_m}|=N\exp(n/\ell+O(\log n)),\qquad
 \frac{M_m}{|U_{R_m}|}=\exp(n/\ell-O(\log^3n)).
 \label{eq:dgs-critical-source-size}
\end{equation}
A given point is missed by one terminal list with probability at most $e^{-M_m/|U_{R_m}|}$. A union bound proves the claim. \qed
\end{proof}

\begin{corollary}[Complete catalogue and exact SVP]
\label{cor:complete-short-catalogue}
On the events of Lemmas~\ref{lem:critical-coverage}, \ref{lem:lambda-scale}, and \ref{lem:terminal-coverage}, exact reporting of all pairs in $L_{m,0}\times L_{m,1}$ whose differences have norm in $(0,\Rcat]$, followed by deduplication and addition of zero, returns $\mathcal C_{\rm short}=\cL\cap\mathcal B_{\Rcat}$. A minimum-norm scan of its nonzero entries solves SVP.
\end{corollary}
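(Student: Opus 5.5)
We prove the two inclusions between the reported set and $\mathcal C_{\rm short}=\cL\cap\mathcal B_{\Rcat}$. Then we argue that scanning its nonzero entries yields a shortest nonzero vector.

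\emph{Soundness.} Every entry of $L_{m,0}$ and $L_{m,1}$ is a lattice point. This holds by construction, since every merge outputs differences of lattice vectors and the leaves are lattice points. Hence every reported difference $\mathbf y-\mathbf x$ with $\mathbf x\in L_{m,0}$, $\mathbf y\in L_{m,1}$ and $0<\|\mathbf y-\mathbf x\|\le\Rcat$ lies in $(\cL\cap\mathcal B_{\Rcat})\setminus\{\mathbf 0\}$. After deduplication and adjoining $\mathbf 0$, the output is therefore a subset of $\mathcal C_{\rm short}$.

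\emph{Completeness.} Fix a nonzero $\mathbf v\in\cL\cap\mathcal B_{\Rcat}$. On the event of Lemma~\ref{lem:critical-coverage}, taken in the centered experiment with $\boldsymbol\tau=0$, there exist $\mathbf y,\mathbf x\in U_{R_m}$ with $\mathbf v=\mathbf y-\mathbf x$. On the event of Lemma~\ref{lem:terminal-coverage}, both terminal lists contain every point of $U_{R_m}$. So $\mathbf x$ occurs in $L_{m,0}$ and $\mathbf y$ occurs in $L_{m,1}$. The indexed pair $(\mathbf x,\mathbf y)$ therefore has difference of norm in $(0,\Rcat]$, and exact reporting of all pairs returns it. Hence $\mathbf v$ appears in the reported set. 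Adding $\mathbf 0$ covers the remaining element of $\mathcal C_{\rm short}$, so equality holds.

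\emph{SVP.} By Lemma~\ref{lem:lambda-scale}, $\lambda_1(\cL)\le(1+1/\ell)\Rlambda$. By \eqref{eq:critical-radii}, $\Rcat=(\sqrt{4/3}-\Theta(\ell^{-2}))\Rlambda$, and this exceeds $(1+1/\ell)\Rlambda$ for large $n$. So every shortest nonzero vector lies in $\mathcal C_{\rm short}\setminus\{\mathbf 0\}$. A minimum-norm scan over the nonzero catalogue entries, which are all lattice vectors, returns a vector of norm exactly $\lambda_1(\cL)$.

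The main subtlety is making sure the events combine correctly. Lemma~\ref{lem:critical-coverage} is a statement about the lattice. Lemma~\ref{lem:terminal-coverage} holds conditional on $(\cL,\mathcal E)$ and uses the independence and iid uniformity of the two terminal lists from Theorem~\ref{thm:slow-tree}. Because the corollary is stated on the intersection of these events, no further probability bookkeeping is needed here. The remaining things to check are that exact reporting enumerates all indexed pairs in the product, and that deduplication by value does not remove any element of the catalogue.
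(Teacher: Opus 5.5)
Your proof is correct and matches the paper's argument. Critical-ball coverage together with full coverage of the terminal shell by both lists puts every nonzero catalogue vector into the difference set of $L_{m,0}$ and $L_{m,1}$, exact norm tests exclude everything else, and $\lambda_1(\cL)\le(1+1/\ell)\Rlambda<\Rcat$ ensures a shortest vector is included; you simply spell out the soundness inclusion and the orientation of the differences more explicitly than the paper does.
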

\begin{proof}
Critical-ball coverage and full shell coverage give
$(\cL\cap\mathcal B_{\Rcat})\setminus\{0\}\subseteq L_{m,0}-L_{m,1}$.
Exact norm tests exclude every other nonzero vector. Since
$\lambda_1\le(1+1/\ell)\Rlambda<\Rcat$, a shortest vector is included. \qed
\end{proof}

To bound storage on every execution, abort if the catalogue exceeds
\begin{equation}
 L_{\rm cap}:=\lceil N e^{n/\ell}\rceil.
 \label{eq:catalogue-size-cap}
\end{equation}
Since $\Rcat<\sqrt{4/3}\Rlambda$, Siegel gives
$\E|\cL\cap\mathcal B_{\Rcat}|\le1+N$; Markov bounds overflow by $2e^{-n/\ell}$. The good-lattice event includes this size bound. On each such lattice, construction, shell coverage, initialization transfer, and the search procedure finding all good pairs fail with conditional probability $e^{-\Omega(n)}$. Efficient reporting and its separate work cap are justified next.

\section{Provable nearest-neighbor speedups}
\label{sec:nns-middle}

Efficient implementation requires reporting all eligible pairs, including those used only to compute rejection probabilities. We must show that every eligible pair is found and that the search is sufficiently fast. Fresh random product codes provide a pointwise guarantee that required pairs are reported, while the lattice-shell common-cap estimates bound the number of filter collisions examined. Together, these ingredients give the classical \(0.292\) time exponent for the complete collection of forward, reverse, and terminal searches, with the required bounded-space implementation.

We isolate the one lattice-specific pseudorandomness statement needed by the classical $0.292$ analysis and defer the search procedure construction to Appendix~\ref{app:nns-full}.

For $D\ge2$ and unit vectors $\mathbf x,\mathbf y\in \mathbb S^{D-1}$ with angle $\theta$, define the cap and wedge probabilities at the threshold used throughout:
\begin{align*}
 C_D &:= \Prb_{\mathbf z\sim\Unif(\mathbb S^{D-1})}[\langle\mathbf z,\mathbf x\rangle\ge1/2],\\
 \mathsf W_D(\theta) &:= \Prb_{\mathbf z\sim\Unif(\mathbb S^{D-1})}[\langle\mathbf z,\mathbf x\rangle\ge1/2,\ \langle\mathbf z,\mathbf y\rangle\ge1/2].
\end{align*}
Rotational invariance makes these definitions independent of the chosen directions. The product code uses a padded dimension $D=n+O(\log n)$. Uniformly for $\cos\theta\in[2/5,2/3]$, the standard cap and wedge integrals give
\begin{equation}
 C_D=C_n2^{o(n)},\qquad \mathsf W_D(\theta)=\mathsf W_n(\theta)2^{o(n)}.
 \label{eq:nns-padding-stability}
\end{equation}
Their logarithms are $D$ times continuous rate functions plus $O(\log D)$ on this range, and $D-n=O(\log n)$.

Put
$W_n := \mathsf W_n(\pi/3)$,
so by the BDGL cap and wedge estimates~\cite[Lemmas~2.1--2.2]{BDGL2016}
\begin{align}
 C_n^{-1} = \left(\frac43\right)^{n/2+o(n)}, \qquad W_n^{-1} = \left(\frac32\right)^{n/2+o(n)}.
\end{align}
For nonzero vectors, write $K(\mathbf x,\mathbf y):=\mathsf W_n(\angle(\mathbf x,\mathbf y))$ for the probability that a random filter contains both directions. For two complete shells, define its volume-normalized sum
\begin{equation}
 \Phi_{\cL}(R,R') := \frac{1}{V_RV_{R'}}\sum_{\mathbf x\in\cL\cap S_R}\sum_{\mathbf y\in\cL\cap S_{R'}}K(\mathbf x,\mathbf y).
 \label{eq:nns-Phi}
\end{equation}

The search procedure has two ingredients. Gao--Feng--Hu give pointwise fixed-pair coverage and a bounded-space Cartesian-subcode traversal~\cite[Theorem~4.2, Algorithm~1, Proposition~4.3, and Corollary~4.4]{GaoFengHuBDGL2026}. Their specialization to our parameters, with its work variables and indexed-record implementation, is given in Appendix~\ref{app:nns-full}. Our random-lattice input is the following specialization of the common-cap statistics in~\cite[Corollary~A.3]{LaarhovenSpherical2026}; the appendix handles bank-dependent reverse queries by pointwise domination.

\begin{lemma}[Bounds on the expected filtering cost]
\label{lem:nns-workload}
With probability $1-e^{-\Omega(h_n)}$ over the Haar lattice, simultaneously for every predetermined centered shell pair used by the SVP, affine-CVP, or DGS
constructions, including the critical short-vector catalogue,
\begin{equation}
 \Phi_{\cL}(R,R') \le C_n^2 \exp\!\left(h_n\right). \label{eq:nns-Phi-bound}
\end{equation}
For the saturated search procedure parameters, this bounds the expected work of assigning vectors to filters and examining pairs within each filter, conditional on the lattice, by $W_n^{-1}2^{o(n)}$ per batch. The same bound holds for correlated reverse calls after joint bank--proposal averaging; one global Markov bound controls their aggregate work. Appendix~\ref{app:nns-full} verifies the hypotheses and the bank--query averaging step.
\end{lemma}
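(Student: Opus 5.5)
The plan has two parts: the Haar-probability bound on $\Phi_{\cL}$, which is the new statement here, and the transfer of that bound to expected search work.

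\emph{Bounding $\Phi_{\cL}$.} For each fixed predetermined pair of shells $(R,R')$ I would compute $\E_{\cL}\Phi_{\cL}(R,R')$. Split the double sum into three parts: diagonal terms $\mathbf x=\mathbf y$, terms with $\mathbf y=\pm\mathbf x$ or more generally $\mathbf x,\mathbf y$ linearly dependent, and independent pairs. For the independent pairs, Rogers' second-moment formula equates the expectation with $\int\!\int_{S_R\times S_{R'}}K(\mathbf x,\mathbf y)\,d\mathbf x\,d\mathbf y$, up to a $1+o(1)$ factor for primitive versus non-primitive terms. By rotation invariance this integral equals $V_RV_{R'}\cdot\E_{\theta}\mathsf W_n(\theta)$ with $\theta$ the angle between independent uniform directions. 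Since a filter is a random cap, $\E_\theta\mathsf W_n(\theta)=C_n^2$ exactly. The dependent and diagonal terms contribute at most $\E|\cL\cap S_R|\cdot\max_\theta K/(V_RV_{R'})$. The critical catalogue shell has volume $\geq N e^{-O(n/\ell)}$, and $C_n\le K$ while $C_n^{-1}=N2^{o(n)}$, so this extra term is $\le C_n\cdot 2^{o(n)}/V_{R'}$, which is $C_n^2 2^{o(n)}$. I would have to make the $2^{o(n)}$ here at most $e^{h_n/2}$, which uses $V_{R'}\ge N e^{-O(n/\ell)}$ and $n/\ell\ll h_n$. Markov's inequality at level $C_n^2e^{h_n}$ then gives a failure probability of $e^{-h_n}\cdot(1+o(1))$ per pair. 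A union bound over the predetermined family, which has only $\mathrm{poly}\log$ many pairs ($m=O(\log^3 n)$ levels times a constant number of shell types for SVP, CVP and DGS), keeps the total failure probability at $e^{-\Omega(h_n)}$. This is why the probability is only $1-e^{-\Omega(h_n)}$ and not $e^{-\Omega(n)}$. Alternatively I would simply cite \cite[Corollary~A.3]{LaarhovenSpherical2026}, which packages exactly this computation.

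\emph{From $\Phi_{\cL}$ to expected work.} For a batch comparing iid uniform lists of size $M$ on $U_R$ and $U_{R'}$, the expected number of colliding pairs per filter is $M^2\Phi_{\cL}(R,R')V_RV_{R'}/(|U_R||U_{R'}|)$ times $K$-averaged terms, which is $M^2C_n^2e^{h_n}(1+o(1))$ by Lemma~\ref{lem:shell-counts}. The saturated parameters use about $W_n^{-1}$ filters. Assignment costs $M\cdot C_n W_n^{-1}2^{o(n)}$, and the collision examination costs $W_n^{-1}M^2C_n^2 2^{o(n)}$. With $M=N2^{o(n)}=C_n^{-1}2^{o(n)}$, both are $W_n^{-1}2^{o(n)}$, using the GFH structure as a black box.

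\emph{Main obstacle: the reverse queries.} Here the queries $-\mathbf d$ depend on the bank. I would dominate the law of the accepted proposal pairs pointwise: by the first rejection step each edge $(\mathbf u,\mathbf c)$ is chosen with probability $\le 1/(a_{\max}|U_R|)$. So the joint bank--proposal expected collision count is at most a constant times the count for an independent uniform query $\mathbf d\in U_{\gamma R}$, since $|U_{\gamma R}|$ and $\alpha_R$ match in scale. The $\Phi_{\cL}(\gamma R,R)$ bound then applies. Summing over all merges and applying Markov once globally gives the aggregate bound.
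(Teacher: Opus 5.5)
Your proposal is correct and follows essentially the paper's route. The steps match: the first-moment bound $\E_{\cL}\Phi_{\cL}(R,R')\le C_n^2+O(C_n/\max(V_R,V_{R'}))$, which the paper imports directly from Corollary~A.3 of \cite{LaarhovenSpherical2026} instead of redoing your Rogers computation; the shell-volume lower bound for radii at least $\Rcat$; Markov with slack $e^{h_n}$ plus a union bound; marginally uniform fresh codewords for the incidence and bucket work; and pointwise domination $q_C(\mathbf d)\le O(1)/|U_{\gamma R}|$ of the bank-dependent reverse proposals, averaged over the bank marginals.

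Two small corrections, neither affecting the conclusion. First, the DGS outer-shell construction uses $J=O(n^2\log^2 n)$ aligned trees of $O(\log^3 n)$ levels each, so there are polynomially rather than polylogarithmically many distinct shell pairs; this is still harmless since $h_n\gg\log n$. Second, because the code lives in the padded dimension $D$, the work bound also needs $C_D\le C_n$ and $\mathsf W_D(\theta)\le\mathsf W_n(\theta)$ at every angle, which the paper checks explicitly, so that filter collision probabilities are dominated by $K$.
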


\begin{theorem}[Finding all eligible pairs efficiently]
\label{thm:nns-reporter}
The Gao–Feng–Hu construction gives the following guarantees for the centered and affine searches.
\begin{enumerate}
\item[(i)] For a set of lattices in $X_n$ of Haar measure $1-e^{-\Omega(n/\log n)}$, and for every lattice $\cL$ in this set, all centered search procedure calls made by the SVP and DGS constructions have joint conditional miss probability $e^{-\Omega(n)}$. Completing their finite traversals gives, conditional on $\cL$, expected aggregate arithmetic work and peak auxiliary space
\begin{align}
 W_n^{-1} 2^{o(n)} = 2^{0.2924\ldots n + o(n)}, \qquad C_n^{-1} 2^{o(n)} = 2^{0.2075\ldots n + o(n)},
\end{align}
respectively, and the same work bound holds with probability $1-e^{-\Omega(n/\log n)}$ over the construction and search procedure randomness.
\item[(ii)] In the random-affine experiment, all affine-CVP search procedure calls jointly report every required pair and have aggregate work $W_n^{-1} 2^{o(n)}$ and peak auxiliary space $C_n^{-1} 2^{o(n)}$ with probability $1-e^{-\Omega(n/\log n)}$ over $(\cL, \boldsymbol\tau)$ and all internal randomness.
\end{enumerate}
An optional deterministic work cap at this scale overflows with probability $e^{-\Omega(n/\log n)}$ in the corresponding experiment. The SVP and CVP algorithms use this
cap, whereas the batch-DGS output law is defined using the uncapped traversal.
\end{theorem}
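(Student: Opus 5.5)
The proof of Theorem~\ref{thm:nns-reporter} combines two ingredients: the GFH random-product-code guarantees, used as a black box, and the lattice workload bound of Lemma~\ref{lem:nns-workload}.

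\textbf{Step 1: no missed pairs.} Every search call asks for all pairs $(\mathbf x,\mathbf y)$ whose angle is at most $\pi/3$, plus the exact-norm shell test.

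I first check that every eligible pair has an angle in this range. Take a forward pair with mover $\mathbf u\in S_R$, center $\mathbf c\in S_R$ and $\mathbf u-\mathbf c\in S_{\gamma R}$. Then
\[
\cos\angle(\mathbf u,\mathbf c)=1-\gamma^2/2+O(n^{-2}),
\]
which lies in $[2/5,2/3]$ for large $n$. The same holds for reverse queries $-\mathbf d$ against centers, and for the terminal catalogue pairs at radius $\le \Rcat$ inside $S_{R_m}$.

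Next I invoke the GFH pointwise fixed-pair coverage guarantee. With $2^{o(n)}$ independent fresh code repetitions, a fixed eligible pair is missed with probability $e^{-\Omega(n^2)}$, or at least $e^{-\omega(n)}$.

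The number of queried pairs across all $2^{o(n)}$ merges, both batches, and the terminal call is at most $(N2^{o(n)})^2$. A union bound therefore gives joint miss probability $e^{-\Omega(n)}$, conditional on $\cL$. This uses only the randomness of the codes, which is independent of the lists.

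\textbf{Step 2: expected work and space.} The GFH traversal cost splits into two parts.

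The first part is list-to-filter assignment, bounded by the list size times the number of filters times $C_D$. With filter count $W_n^{-1}$ up to $2^{o(n)}$, this is $W_n^{-1}2^{o(n)}$ by~\eqref{eq:nns-padding-stability}.

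The second part is the number of colliding pairs. Conditional on the lists, this equals the filter count times the sum over list pairs of $K(\mathbf x,\mathbf y)$. For the forward batches and the terminal call, both lists are iid uniform on shells by Theorem~\ref{thm:slow-tree}, so the expectation is
\[
M_iM_{i'}\cdot\frac{V_RV_{R'}}{|U_R||U_{R'}|}\,\Phi_\cL(R,R').
\]
Lemma~\ref{lem:nns-workload} and Lemma~\ref{lem:shell-counts} bound this by $N^2C_n^2 e^{h_n}2^{o(n)}$. Since $N=C_n^{-1}2^{o(n)}$, multiplying by the filter count gives $W_n^{-1}2^{o(n)}$. Peak space is the bounded-space Cartesian traversal of GFH, which stores one list plus one filter bucket at a time, hence $C_n^{-1}2^{o(n)}$.

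\textbf{Step 3: reverse queries (main obstacle).} Reverse queries $-\mathbf d$ depend on the bank, so the iid averaging of Step~2 is not directly available.

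I would use the pointwise domination deferred to the appendix. Consider the probability that a proposal $\mathbf d$ is a particular value, jointly over the uniform mover and the uniform eligible-center choice. On the bank-good event, this probability is at most $b_{\max}/(a_C(\mathbf u)\cdots)$, which in turn is at most $\mathrm{const}\cdot$ the uniform density on $U_{\gamma R}$ times $2^{o(n)}$. Here I use the lower bound $b_C\ge b_{\min}$ and the upper bound $a_{\max}$, whose ratio is $2^{o(n)}$.

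The expected reverse workload, after joint bank--proposal averaging, is thus dominated by $2^{o(n)}$ times a $\Phi_\cL(\gamma R,R)$ term. Lemma~\ref{lem:nns-workload} then bounds it by $W_n^{-1}2^{o(n)}$.

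\textbf{Step 4: from expectation to high probability.} Summing over the $2^{o(n)}$ calls gives expected aggregate work $W_n^{-1}2^{o(n)}$ on the lattice event of Lemma~\ref{lem:nns-workload}, whose probability is $1-e^{-\Omega(h_n)}$.

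I then apply one global Markov bound with slack $e^{h_n}$, consistent with the cap $\mathsf T_{\rm cap}$ using $e^{3h_n}$. This gives overflow probability $e^{-h_n}$, which is $e^{-\Omega(n/\log n)}$. The same bound covers the deterministic work cap.

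The Haar-measure set in (i) is the intersection of $\mathcal G_{\rm cen}$, the workload event, and the events under which Theorem~\ref{thm:slow-tree} applies.

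\textbf{Part (ii).} This repeats the argument in the random-affine experiment. Lemmas~\ref{lem:shell-counts} and~\ref{lem:shell-interface} supply the affine degree and representation bounds on $\mathcal G_{\rm aff}$. The affine workload statement of Lemma~\ref{lem:nns-workload} covers the affine-path batches, and probabilities are now taken jointly over $(\cL,\boldsymbol\tau)$.
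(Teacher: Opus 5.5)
Your skeleton matches the paper's: GFH fixed-pair recall with a union bound, the kernel bound of Lemma~\ref{lem:nns-workload} averaged over iid shell lists, pointwise domination of the bank-dependent reverse proposals averaged over the unconditioned bank, and one global Markov bound. Several steps, however, are wrong or missing.

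\textbf{Step 1: reverse correlations and the minimum event.} The reverse pairs do not have correlation $1-\gamma^2/2$. From $\|\mathbf c\|\approx R$, $\|\mathbf d\|\approx\gamma R$ and $\|\mathbf d+\mathbf c\|\approx R$ one gets $\cos\angle(-\mathbf d,\mathbf c)=\gamma/2+O(n^{-2})=1/2-1/(2\ell)+O(n^{-2})$. So every reverse pair has angle strictly \emph{larger} than $\pi/3$, contrary to your framing. A code saturated only for angles up to $\pi/3$ (i.e.\ at $W_n$) would miss these pairs with high probability. Their wedge probability is smaller by $e^{-\Theta(n/\ell)}$, which exceeds the $2^{2D^{2/3}}$ saturation margin, and every $b_C$ would then be wrong. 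This is why Lemma~\ref{lem:gfh-specialized} calibrates the code at $W_{\rm rel}=\mathsf W_D(\arccos(1/2-1/\ell))=W_ne^{-O(n/\ell)}$. Similarly, the terminal pairs have correlation at most $5/8+o(1)<2/3$, inside the admissible GFH range, only on the minimum event of Lemma~\ref{lem:lambda-scale}. Your good-lattice set omits that event.

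\textbf{Step 3: the domination constant.} Your derivation is garbled. The Bernoulli factor $a_C(\mathbf u)/a_{\max}$ cancels the $1/a_C(\mathbf u)$ from the uniform center choice. Hence $q_C(\mathbf d)=b_C(\mathbf d)/(a_{\max}|U_R|)\le b_{\max}/(a_{\max}|U_R|)=O(1)/|U_{\gamma R}|$, using the \emph{upper} bound $b_{\max}$; $b_{\min}$ plays no role. The loss must also be quantified. An unspecified $2^{o(n)}$ is not compatible with the Markov step against the $e^{3h_n}$ slack in $\mathsf T_{\rm cap}$, whereas $O(1)$, or any $e^{o(h_n)}$, is.

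\textbf{Failure accounting for the expectation.} The expectation in (i) is over the implemented execution given $\cL$. Your averaging uses exact iid uniformity, which holds only for the ideal execution on the construction-success events. You need a deterministic worst-case bound on each completed traversal, e.g.\ $M_{\rm rpc}N^2e^{O(n/\ell)}\le e^{(0.491+o(1))n}$. You also need failure probabilities small enough that their product with this bound is negligible against $W_n^{-1}=e^{(0.2027\ldots)n+o(n)}$. These are the $n^{O(1)}e^{-3n}$ per-entry GPV coupling error summed over $N2^{o(n)}$ entries, the $e^{-3n}$ joint miss bound, and the doubly exponential bank and throughput failures. A generic $e^{-\Omega(n)}$, as stated at the end of your Step~1, would not suffice.

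\textbf{Part (ii).} This is not a repetition of the centered argument. Lemma~\ref{lem:nns-workload} concerns centered shell pairs only, whereas the affine path pairs affine movers, residuals and catalogue entries with centered lists. That requires the mixed common-cap estimate, which after averaging over the uniform coset gives exactly $C_n^2$ after volume normalization. Moreover, affine reverse queries lie in $-U^{\boldsymbol\tau}_{\gamma R}=U^{-\boldsymbol\tau}_{\gamma R}$, so the symmetry $-U_{\gamma R}=U_{\gamma R}$ used in the centered domination fails. The paper instead uses that $-\boldsymbol\tau$ is again Haar-uniform given $\cL$. It takes the expectation and the Markov bound jointly over $(\cL,\boldsymbol\tau)$, because the workload event is no longer a lattice-only event (Lemma~\ref{lem:nns-affine-workload}). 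The affine catalogue correlations additionally need the lower bound $\rho_{\rm aff}\ge(1-1/\ell)\Rlambda$.
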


\begin{proof}
The centered statement is the parameter substitution proved in Appendix~\ref{app:nns-full}. The random-affine transfer is Lemma~\ref{lem:nns-affine-workload} and its use in Appendix~\ref{app:cvp}. \qed
\end{proof}

\begin{theorem}[Complete catalogue and exact SVP]
\label{thm:main-svp}
For every sufficiently large $n$ there is a randomized algorithm $\mathcal A_n$ with the following guarantee. First sample $\cL\sim X_n$ and then sample the internal randomness of $\mathcal A_n$ independently. With joint probability $1-e^{-\Omega(n/\log n)}$, $\mathcal A_n$ returns the complete catalogue $\cL\cap\mathcal B_{\Rcat}$ and a shortest nonzero vector of $\cL$, where $\Rcat$ is defined in~\eqref{eq:critical-radii}. Every execution, including an execution that aborts at one of the stated deterministic caps, runs in time at most $\mathsf{T}$ and space at most $\mathsf{S}$, with:
\begin{align}
 \mathsf{T} &= \left(\frac32\right)^{n/2}\exp(O(h_n)),\qquad
 \mathsf{S}=\left(\frac43\right)^{n/2}\exp(O(n/\log n)).
\end{align}
\end{theorem}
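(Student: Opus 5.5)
The plan is to define $\mathcal A_n$ as the concatenation of the stages already analyzed, and then assemble their guarantees with a single union bound. The stages are:
\begin{enumerate}
\item Seysen preprocessing (Lemma~\ref{lem:preprocess}).
\item GPV initialization with radial rejection of all $2^{m+1}$ leaves (Theorem~\ref{thm:init-iid}).
\item The $2^{m+1}-2$ frozen-center merges of Algorithm~\ref{alg:fullshell-merge}, with all pair searches implemented by the Gao--Feng--Hu reporter of Theorem~\ref{thm:nns-reporter}(i).
\item The terminal cross-pair search $L_{m,0}\times L_{m,1}$ at threshold $\Rcat$, with deduplication, insertion of $\mathbf 0$, and a minimum-norm scan.
\end{enumerate}
We add deterministic aborts at every stage: the $n^3$ GPV attempt caps, the internal GPV loop caps, the per-query thresholds $a_{\max},b_{\max}$, the global work budget $\mathsf T_{\rm cap}(n)$, and the catalogue cap $L_{\rm cap}$. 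These make the time and space bounds hold on \emph{every} execution, not only with high probability.

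\textbf{Correctness.} Intersect the good-lattice events:
\begin{itemize}
\item $\mathcal G_{\rm cen}$, which covers Lemmas~\ref{lem:shell-counts}, \ref{lem:shell-interface} and~\ref{lem:critical-coverage};
\item the preprocessing event of Lemma~\ref{lem:preprocess};
\item Lemma~\ref{lem:lambda-scale};
\item the workload event of Lemma~\ref{lem:nns-workload}, which fails with probability $e^{-\Omega(h_n)}\le e^{-\Omega(n/\log n)}$;
\item the Markov bound on catalogue overflow.
\end{itemize}
Each fails with probability $e^{-\Omega(n/\log n)}$, so the intersection does too. Fix a lattice in this intersection. In the ideal execution with exact uniform leaves and exact reporting, Theorem~\ref{thm:slow-tree} shows that $\mathcal E$ holds except with probability $e^{-\Omega(n/\log n)}$, and that the terminal lists are independent and iid uniform. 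Lemma~\ref{lem:terminal-coverage} then shows that both terminal lists cover $U_{R_m}$, and Corollary~\ref{cor:complete-short-catalogue} shows that the exact terminal report equals $\cL\cap\mathcal B_{\Rcat}$ and contains a shortest vector.

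To pass to the implemented algorithm, apply Lemma~\ref{lem:execution-transfer} with two error terms:
\begin{itemize}
\item $\delta=e^{-\Omega(n)}$, the initialization coupling error from Theorem~\ref{thm:init-iid};
\item $\zeta$, the probability that the NNS reporter misses an eligible pair or that the work cap overflows. Theorem~\ref{thm:nns-reporter} bounds this by $e^{-\Omega(n/\log n)}$.
\end{itemize}
Declared failures count as outputs, so the success probability transfers. Averaging over the lattice gives joint success probability $1-e^{-\Omega(n/\log n)}$.

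\textbf{Complexity.}
\begin{itemize}
\item Preprocessing costs $2^{O(n/\log n)}$.
\item Initialization is deterministically capped at $M_0 2^{m+1}n^{O(1)}$ steps, which is $N\exp(O(n/\ell))$.
\item All search and merge work is capped by $\mathsf T_{\rm cap}=(3/2)^{n/2}\exp(3h_n)$. The bookkeeping per reported pair and the per-mover streaming reservoirs add only polynomial factors.
\item The catalogue scan costs $O(L_{\rm cap})$.
\end{itemize}
Because $N\le(3/2)^{n/2}$ and $n/\ell=O(h_n)$, the total time is $(3/2)^{n/2}\exp(O(h_n))$. For space, we process the tree depth-first, so only $O(m)$ lists of length at most $M_0=N\exp(2n/\ell)$ are live at any time. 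The reporter's auxiliary space is $C_n^{-1}2^{o(n)}$, which we need to see is in fact $N\exp(O(n/\log n))$. The catalogue holds at most $Ne^{n/\ell}$ entries. Together these give $\mathsf S$.

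\textbf{Expected obstacle.} The delicate step is making the time bound hold on \emph{every} execution while keeping the output law of the capped algorithm controlled. A cap-triggered abort must count only as a failure, absorbed into $\zeta$, and must not bias the conditional laws that the ideal analysis relies on. Handling this through Lemma~\ref{lem:execution-transfer} means treating the cap overflow as a reporting disagreement. This in turn requires checking that the $o(n)$ factors hidden in Theorem~\ref{thm:nns-reporter} really are $\exp(O(h_n))$ and not merely $2^{o(n)}$, so that $\mathsf T_{\rm cap}$ has the claimed form. I would verify this by tracing the subexponential factors through the sources:
\begin{itemize}
\item the padding in \eqref{eq:nns-padding-stability}, which contributes $O(\log n)$ in the exponent;
\item the list surplus $\exp(2n/\ell)$;
\item the $\exp(h_n)$ term in \eqref{eq:nns-Phi-bound};
\item the $2^{m+1}$ merges, which contribute $\exp(O(\log^3 n))$.
\end{itemize}
Each of these is $\exp(O(h_n))$.
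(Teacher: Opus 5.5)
Your proposal is correct and follows the paper's route. Correctness comes from Corollary~\ref{cor:complete-short-catalogue} on the ideal execution (via Theorem~\ref{thm:slow-tree} and Lemma~\ref{lem:terminal-coverage}); recall, cap overflow and auxiliary space come from Theorem~\ref{thm:nns-reporter}; storage comes from~\eqref{eq:catalogue-size-cap}; and a union bound covers lattice, construction, coupling and cap failures. The paper additionally makes explicit that on the event of Lemma~\ref{lem:lambda-scale} every terminal catalogue pair has correlation inside the GFH coverage window, which you inherit by citing Theorem~\ref{thm:nns-reporter}(i). One precision for your ``obstacle'' check: against the fixed cap $\mathsf T_{\rm cap}=(3/2)^{n/2}\exp(3h_n)$, Markov needs expected work $W_n^{-1}\exp(h_n+O(n/\log n))$, not merely $W_n^{-1}\exp(O(h_n))$. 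Your factors do give this, since only~\eqref{eq:nns-Phi-bound} contributes an $h_n$-scale term; the rest, including the RPC-size and $B_{\rm pc}$ factors you omitted, are $\exp(O(n/\log n))=\exp(o(h_n))$.
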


\begin{proof}
Corollary~\ref{cor:complete-short-catalogue} proves correctness of the catalogue construction. On Lemma~\ref{lem:lambda-scale}'s event, every required nonzero catalogue pair has correlation between $1-\gamma^2/2+O(n^{-2})$ and $5/8+o(1)$, within the search procedure's coverage interval. Theorem~\ref{thm:nns-reporter} reports these pairs and bounds aggregate work;~\eqref{eq:catalogue-size-cap} bounds catalogue storage. The lattice, work-cap, and construction failures sum to $e^{-\Omega(n/\log n)}$. After completing the catalogue, scan its nonzero entries for a minimum norm and return both the catalogue and the selected vector. This proves the centered case of Theorem~\ref{thm:intro-catalogue} and Corollary~\ref{cor:intro-svp}. \qed
\end{proof}

\appendix
\renewcommand*{\theHsection}{app.\Alph{section}}
\renewcommand*{\theHsubsection}{app.\Alph{section}.\arabic{subsection}}


\section{Proofs: The nearest neighbor speedups}
\label{app:nns-full}

The nearest-neighbor implementation must recover the exact indexed pairs needed by the sampling argument while controlling both candidate work and memory. We combine the product-code search procedure of Gao--Feng--Hu~\cite{GaoFengHuBDGL2026} with the common-cap estimates of Laarhoven~\cite{LaarhovenSpherical2026} for random lattice shells. The main additional difficulty is that reverse queries are generated using the same center bank against which they are subsequently searched. We handle this dependence by bounding their proposal probabilities pointwise before averaging the workload.

We state only the parameter specialization used here and then account for indexed occurrences, bank-dependent queries, and aggregate work.


\subsection{Product-code search procedure}

A spherical filter discovers a pair when both directions fall into its cap. We organize the filters using the GFH product-code construction, whose implicit Cartesian structure permits efficient traversal with limited memory. The parameters provide sufficient coverage for every required fixed pair, and independent repetitions amplify this guarantee across all calls. Sampling each fresh code only after its input lists have been fixed allows the same argument to apply to lists generated during the sieve.

\begin{lemma}[Specialized GFH search procedure]
\label{lem:gfh-specialized}
Pad directions to $D = n + O(\log n)$ dimensions. There is an implicit random product code $\mathcal C$ of size $M_{\rm rpc}=W_n^{-1}\exp(O(n/\log n+n^{2/3}+\log n))$, with $B_{\rm pc}=\exp(O(n/\log n))$ codewords in each factor. It has the following properties.
\begin{enumerate}
\item[(i)] Each codeword is marginally uniform on $\mathbb S^{D-1}$. For fixed indexed lists of nonzero vectors $A = (\mathbf x_i)$ and $B = (\mathbf y_j)$, put
\begin{align}
 A_{\mathbf c} := \{i: \langle\widehat{\mathbf x_i}, \mathbf c\rangle \ge 1/2\}, \quad B_{\mathbf c} := \{j: \langle\widehat{\mathbf y_j}, \mathbf c\rangle \ge 1/2\},
\end{align}
where $\widehat{\mathbf x} = \mathbf x / \|\mathbf x\|$, and define
\begin{align}
 Z_A := \sum_{\mathbf c\in\mathcal C}|A_{\mathbf c}|,\quad Z_B := \sum_{\mathbf c\in\mathcal C}|B_{\mathbf c}|,\quad P_{\rm bucket} := \sum_{\mathbf c\in\mathcal C}|A_{\mathbf c}|\,|B_{\mathbf c}|.
\end{align}
For any workspace budget $M_{\rm mem}\ge|A|$, the traversal streams every indexed pair sharing a filter, possibly repeatedly, in time
\begin{equation}
 \poly(n)\left[Z_A + Z_B + P_{\rm bucket} + \left(1 + \frac{Z_A}{M_{\rm mem}}\right)(|A| + |B|) B_{\rm pc}\right] \label{eq:gfh-traversal-time}
\end{equation}
and space $O(M_{\rm mem} + |A| + |B| + \poly(n) B_{\rm pc})$. If a work limit is imposed, the procedure stops before the next counted operation would exceed it.
\item[(ii)] Every fixed pair with correlation $\zeta \in [1/2 - 1/\ell, 2/3]$ shares a filter except with probability $\exp(-\Omega(\log^2 n))$.  Sequentially using $n$ independent fresh codes gives joint miss probability at most $e^{-3n}$ over all the $2^{o(n)}$ calls on lists of size $N 2^{o(n)}$.
\end{enumerate}
Every fresh code is sampled after its input records and directions are fixed; no distributional assumption on those directions is used for recall.
\end{lemma}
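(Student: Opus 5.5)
We instantiate the random product code of Gao--Feng--Hu with our parameters and check each clause. Set the product structure to $k=\Theta(\log n)$ blocks, padding the dimension to $D=n+O(\log n)$ so that $k$ divides $D$ with blocks of dimension $D/k=\Theta(n/\log n)$. In each block we take $B_{\rm pc}=\exp(O(n/\log n))$ independent uniform block codewords, and rescale and randomly rotate the Cartesian products. \cite[Theorem~4.2]{GaoFengHuBDGL2026} then gives codewords that are marginally uniform on $\mathbb S^{D-1}$, which is the first part of (i). We choose the number of codewords so that the expected number of filters containing a fixed pair at correlation $\zeta$ is
\begin{equation*}
 M_{\rm rpc}\,\mathsf W_D(\arccos\zeta)\ge \exp(\Omega(\log^2 n)).
\end{equation*}
By monotonicity of the wedge in the angle, the worst case on $[1/2-1/\ell,2/3]$ is $\zeta=1/2-1/\ell$. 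Its wedge differs from $W_n=\mathsf W_n(\pi/3)$ by a factor $\exp(O(n/\ell))$, since the rate function is smooth, and by~\eqref{eq:nns-padding-stability} padding costs only $2^{o(n)}$. Hence $M_{\rm rpc}=W_n^{-1}\exp(O(n/\log n+n^{2/3}+\log n))$ suffices, where the $n^{2/3}$ absorbs the product-code loss relative to truly independent filters from \cite[Theorem~4.2]{GaoFengHuBDGL2026}.

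For the traversal bound~\eqref{eq:gfh-traversal-time}, we follow \cite[Algorithm~1, Proposition~4.3]{GaoFengHuBDGL2026}. For each vector we list, per block, the block codewords passing the blockwise threshold test, which costs $\poly(n)B_{\rm pc}$ per vector. The full caps are then enumerated by a pruned Cartesian tree search, whose total output is $Z_A+Z_B$. To respect the workspace budget $M_{\rm mem}$, we process the $A$-side buckets in chunks of total size at most $M_{\rm mem}$. Each chunk requires re-scanning both lists, which produces the $(1+Z_A/M_{\rm mem})(|A|+|B|)B_{\rm pc}$ term. Within a chunk, all pairs in shared buckets are emitted, costing $P_{\rm bucket}$. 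The space bound is the chunk size, plus the lists, plus the block codebooks. Implementing the work limit is simply an operation counter checked before each counted step.

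For recall in (ii), a single code covers a fixed pair with probability $1-\exp(-\Omega(\log^2 n))$ by the GFH second-moment/Paley--Zygmund argument applied to the count of shared filters; its mean is $\exp(\Omega(\log^2 n))$ and its variance is controlled by the product structure. With $n$ independent fresh codes, each sampled after the inputs are fixed, a fixed pair is missed by all of them with probability $\exp(-\Omega(n\log^2 n))$. We then union bound over all calls and pairs, of which there are at most $2^{o(n)}\cdot (N2^{o(n)})^2=2^{O(n)}$; this gives total miss probability at most $e^{-3n}$. Because each code is fresh and independent of its inputs, the coverage bound holds conditionally on the realized lists, so no assumption on the distribution of the directions is needed.

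The main obstacle is the coverage step for correlations near the lower end $1/2-1/\ell$. There we must verify that the GFH product-code concentration (the $n^{2/3}$ loss) remains valid uniformly over the whole interval with the stated block sizes. We would first try to quote their uniform-in-angle statement directly, and fall back on the explicit blockwise variance computation if needed.
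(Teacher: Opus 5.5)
Your proposal follows the paper's proof. You instantiate the GFH random product code with $k\approx\log_2 D$ blocks and $B_{\rm pc}\approx M_{\rm rpc}^{1/k}$, and reduce to the worst correlation $1/2-1/\ell$ by monotonicity of the wedge, giving $W_{\rm rel}=W_n\exp(-O(n/\ell+\log n))$. You take $M_{\rm rpc}$ to be $W_{\rm rel}^{-1}$ times a $2^{\Theta(D^{2/3})}$ slack; the paper uses $2^{2D^{2/3}}$. You then cite GFH's Algorithm~1 and Proposition~4.3 for the traversal and space bounds, and amplify with $n$ fresh codes plus a union bound over $2^{O(n)}$ indexed pairs.

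One correction: the condition on the mean that matters is exceeding GFH's saturation threshold $2^{O(\sqrt D\operatorname{polylog} D)}$, not a mean of $\exp(\Omega(\log^2 n))$. Accordingly, the one-code recall bound should be quoted from GFH's theorem at saturation, not re-derived by a bare second-moment bound.

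The obstacle you flag is handled in the paper by checking GFH's admissibility inequalities for $t_{\rm f}=1/2$ on $\zeta\in[2/5,2/3]$: $t_{\rm f}(1-\zeta)\ge 1/6$ and $1-2t_{\rm f}^2/(1+\zeta)\ge 9/14$, so $\delta_0=1/8$ works. The paper also stores side and list index as record identity, so equal values at distinct indices keep their multiplicity.
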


\begin{proof}
Use~\cite[Theorem~4.2, Eq.~(17), Algorithm~1, Proposition~4.3, and Corollary~4.4]{GaoFengHuBDGL2026} with $t_{\rm f} = 1/2$ and one common label. For large $n$ the correlation interval lies in $[2/5, 2/3]$; their admissibility inequalities hold with $\delta_0 = 1/8$, since $t_{\rm f}(1 - \zeta) \ge 1/6$ and $1 - 2 t_{\rm f}^2 / (1 + \zeta) \ge 9/14$. For a pair of correlation $\zeta$, the cap/wedge estimates on this compact box, including padding, give
\begin{equation}
 \mathsf W_D(\arccos\zeta)\ge W_{\rm rel}:=\mathsf W_D\!\left(\arccos(1/2-1/\ell)\right)
 \label{eq:nns-Wrel}
\end{equation}
by monotonicity in the correlation. This explicit choice satisfies $W_{\rm rel}=W_n\exp(-O(n/\ell+\log n))$. Choose the padding so that $D=kb$ with integers $b$ and $k=\lceil\log_2D\rceil$; here $b$ is the block dimension. Put
\begin{align}
 \overline M:=\lceil W_{\rm rel}^{-1}2^{2D^{2/3}}\rceil,\quad
 B_{\rm pc}:=\lceil\overline M^{1/k}\rceil,\quad
 M_{\rm rpc}:=B_{\rm pc}^{\,k}.
\end{align}
Then $\overline M\le M_{\rm rpc}\le2^k\overline M$ and $M_{\rm rpc}\mathsf W_D(\arccos\zeta)\ge2^{2D^{2/3}}$, exceeding GFH's saturation threshold $2^{O(\sqrt D\,\operatorname{polylog}D)}$. This proves the size and one-code recall claims. Independent repetition with $n$ codes and a union bound over $2^{O(n)}$ indexed pairs give the stated $e^{-3n}$ bound.

For occurrences, store the side and list index as record identity and the geometric direction as decoding data. GFH's traversal proof counts records, bucket accesses, and live incidences; it does not require distinct decoding directions. Thus equal values at different indices retain their multiplicity. All records receive the same label. The resulting equal-value pairs are already charged to $P_{\rm bucket}$ and can be rejected by the exact post-filter. The factors $D$ in GFH's work formula are absorbed in $\poly(n)$ in~\eqref{eq:gfh-traversal-time}. \qed
\end{proof}

\begin{corollary}[Exhaustive-search fallback]
\label{cor:exhaustive-fallback}
The guarantees for complete catalogues, SVP, random-target CVP, joint batch-DGS, and adaptive queries remain valid with time $N^2\exp(O(n/\log n))=2^{0.4150\ldots n+o(n)}$ and space $N\exp(O(n/\log n))$. This version uses the point-counting imports and standard preprocessing, but neither the GFH search procedure nor the common-cap workload analysis.
\end{corollary}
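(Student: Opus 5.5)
The plan is to rerun the whole pipeline of Sections~\ref{sec:init}--\ref{sec:finish} with the NNS search procedure replaced by a brute-force pair scan, and to check that every correctness argument is unchanged and only the costs move. The key observation is that correctness never used how the pairs are found. Theorem~\ref{thm:exact-regeneration}, Lemma~\ref{lem:constant-throughput}, Theorem~\ref{thm:slow-tree}, Lemma~\ref{lem:terminal-coverage} and Corollary~\ref{cor:complete-short-catalogue} are all stated for \emph{exact exhaustive pair reporting}. The only place the GFH construction enters is through the joint miss probability $\zeta$ fed into Lemma~\ref{lem:execution-transfer}. With a deterministic scan we can take $\zeta=0$, so the transfer bound reduces to the initialization error $\delta=e^{-\Omega(n)}$ of Theorem~\ref{thm:init-iid}. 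The probabilistic statements thus rely only on the point-counting imports (Lemmas~\ref{lem:shell-counts}--\ref{lem:affine-minimum}), Lemma~\ref{lem:lambda-scale}, and the preprocessing and DGS tools. Lemma~\ref{lem:nns-workload} and Theorem~\ref{thm:nns-reporter} drop out entirely.

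For each merge I would implement Algorithm~\ref{alg:fullshell-merge} as follows. The forward batch compares every mover against every center, which gives the exact $a_C(\mathbf u_j)$ (truncated at $a_{\max}$) and a reservoir choice of one eligible center. The reverse batch compares each proposed $-\mathbf d$ against all $M_i$ centers, which gives the exact $b_C(\mathbf d)$. Both batches cost $\poly(n)M_i^2$ time. They need only $O(M_i)$ working space, since each query holds a counter and a reservoir slot. The terminal catalogue step scans $L_{m,0}\times L_{m,1}$ at cost $\poly(n)M_m^2$, storing at most $L_{\rm cap}$ entries. By~\eqref{eq:Mi-scale}, $M_i\le N\exp(2n/\ell)$, and there are $2^{m+1}=\exp(O(\log^3 n))$ nodes. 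The total time is therefore $N^2\exp(O(n/\log n))=(4/3)^{n+o(n)}=2^{0.4150\ldots n+o(n)}$, and the space is $N\exp(O(n/\log n))$, using the depth-first storage argument of Theorem~\ref{thm:slow-tree}. Initialization costs $N2^{o(n)}$, so it does not affect this.

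I would then go through the extensions one by one. For random-target CVP, the affine path uses the same merges with affine movers, and the affine analogues of Lemmas~\ref{lem:shell-interface}, \ref{lem:critical-coverage} and \ref{lem:affine-minimum} are again counting imports. The brute-force reporting again has $\zeta=0$. For batch DGS and adaptive queries, the proofs in Appendix~\ref{app:dgs} use the uncapped traversal only as a device that reports all pairs exactly. Replacing it by an exhaustive scan leaves the output law identical, because both yield the same exact indexed pair sets. The outer-shell uniform streams come from initialization and merges, so they cost the same as above.

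I expect the main obstacle to be a careful audit that no step of those appendix proofs silently uses GFH-specific facts. The candidates are the correlation-interval restriction in Lemma~\ref{lem:gfh-specialized}(ii), the bank--proposal averaging for reverse queries, and probability bounds phrased as ``with probability $1-e^{-\Omega(n/\log n)}$ over search randomness''. My plan is to check that each such appearance only bounds either the miss probability, which is now $0$, or the work, which is now deterministic. That would make every cost statement hold on every execution, with no Markov step needed.
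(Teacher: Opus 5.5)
Your proposal is correct and follows the same route as the paper's proof: drop the workload event from the good-lattice set, and replace every forward, reverse and catalogue call by an exhaustive indexed scan so that the exact-regeneration and coverage proofs apply verbatim. The cost is then $\exp(O(\log^3 n))$ calls on lists of size $N\exp(O(n/\log n))$, with the catalogue cap bounding retained output. The paper also states explicitly that the adaptive DGS queries add only $N\exp(O(n/\log n))$ work; you leave this implicit, but it holds because query costs do not depend on the search procedure.
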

\begin{proof}
Omit the workload event from the good-lattice set. Scan all indexed pairs in each forward, reverse, and catalogue call, streaming counts and differences. The exact regeneration and coverage proofs are unchanged. There are $\exp(O(\log^3n))$ calls on lists of size $N\exp(O(n/\log n))$; the catalogue cap bounds retained output. The adaptive queries of Corollary~\ref{cor:dgs-adaptive} add only $N\exp(O(n/\log n))$ work. \qed
\end{proof}


\subsection{Exact counts from indexed reporting}
\label{app:indexed-reporting}

The rejection sampler requires exact counts of list occurrences, including repeated occurrences of the same lattice vector. We therefore retain the indices of every reported pair and translate each forward, reverse, and terminal operation into the appropriate geometric query. Exact post-filtering removes irrelevant candidates, while deduplication by occurrence indices prevents repeated filter reports from inflating the counts. The corresponding angle calculations ensure that every required pair lies within the search procedure's coverage guarantee.

Consider a transition from $U_R^{\boldsymbol\tau}$ to $U_{\gamma R}^{\boldsymbol\tau}$ with a centered bank $C = (\mathbf c_j) \subseteq U_R^0$. The dictionary in Table~\ref{tab:reporter-calls} specializes the search procedure to both experiments.

\begin{table}[!t]
\centering\small
\begin{tabularx}{\textwidth}{@{}l l l X@{}}
\toprule
Call & Insertion & Query & Exact post-filter and output\\
\midrule
Forward & $\mathbf c_j$ & $\mathbf u_i$ & $\mathbf u_i-\mathbf c_j\in U_{\gamma R}^{\boldsymbol\tau}$; count distinct $j$ to obtain $a_C(\mathbf u_i)$.\\
Reverse & $\mathbf c_j$ & $-\mathbf d_i$ & $\mathbf d_i+\mathbf c_j\in U_R^{\boldsymbol\tau}$; count distinct $j$ to obtain $b_C(\mathbf d_i)$.\\
Finalization & $\mathbf y_j$ & $\mathbf x_i$ & Retain $0<\|\mathbf x_i-\mathbf y_j\|\le\Rcat$ and deduplicate by value. Add zero for SVP/DGS; for CVP map residuals $\mathbf r$ to $\mathbf t-\mathbf r$.\\
\bottomrule
\end{tabularx}
\caption{Search procedure calls; duplicate reports of a counting edge are identified by the ordered occurrence indices.}
\label{tab:reporter-calls}
\end{table}

The corresponding correlations are
\begin{align}
 \text{forward: }1-\gamma^2/2+O(n^{-2}),\qquad \text{reverse: }\gamma/2+O(n^{-2}).
\end{align}
On the minimum event of Lemma~\ref{lem:lambda-scale}, every required nonzero centered catalogue difference has correlation between $1/2+o(1)$ and $5/8+o(1)$. For CVP, the lower bound $\rho_{\rm aff}\ge(1-1/\ell)\Rlambda$ puts every affine catalogue difference in the same interval; no upper bound on $\rho_{\rm aff}$ is used. All lie in the interval of Lemma~\ref{lem:gfh-specialized}(ii) for large $n$ and satisfy the GFH output threshold $2(1/2)^2-1=-1/2$. Hence, except on its joint miss event, a nonaborting traversal gives the exact reports in Table~\ref{tab:reporter-calls}. False candidates and repetitions are included in $P_{\rm bucket}$. Deduplication by value forms the complete catalogue. SVP selects a shortest nonzero entry, while CVP selects a catalogue point nearest the target.


\subsection{Expected search cost on random lattices}

Pair coverage does not by itself bound the number of candidates produced by the filters. To control this work, we express the expected number of shared filters through a common-cap kernel on complete lattice shells. The Haar shell estimates bound this kernel simultaneously for the radius pairs. Averaging independent uniform input lists over those shells then gives the required bounds on point–filter incidences and bucket cross products.

The coverage theorem is pointwise and needs no distributional assumption on the data. We prove the main-body Lemma~\ref{lem:nns-workload} here.

\begin{proof}[Lemma~\ref{lem:nns-workload}]
Corollary~A.3 of~\cite{LaarhovenSpherical2026}, with both cap thresholds equal to $1/2$, gives directly
\begin{align}
 \E_{\cL} \Phi_{\cL}(R,R') \le C_n^2 + O\!\left(\frac{C_n}{\max(V_R,V_{R'})}\right).
\end{align}
The implied constant is uniform because the radius ratios lie in a fixed compact subset of $(0,\infty)$. Every input radius is at least $\Rcat$; in particular, the critical catalogue searches two shells at $R_m$, even when their reported differences are shorter. The shell-volume formula therefore gives
\begin{gather*}
 V_R, V_{R'} \ge C_n^{-1} \exp\!\bigl(-O(n / \log n + \log n)\bigr),\\
 \E_{\cL} \Phi_{\cL}(R, R') \le C_n^2\exp\!\bigl(O(n / \log n + \log n)\bigr).
\end{gather*}
Markov's inequality applied to this mean gives failure $\exp(-h_n+O(n/\log n))$ for~\eqref{eq:nns-Phi-bound} at any one predetermined pair. There are only polynomially many distinct pairs: $O(\log^3 n)$ in the ordinary and affine approaches for SVP and CVP, and $O(n^2\log^2 n)$ outer-shell transitions for batch-DGS from Appendix~\ref{app:dgs-outer} of $O(\log^3 n)$ levels each. Repeated subtrees reuse these pairs, so a union bound proves~\eqref{eq:nns-Phi-bound} simultaneously.

For workload upper bounds, zero padding causes no boundary issue: $C_D\le C_n$ and $\mathsf W_D(\theta)\le\mathsf W_n(\theta)$ for every $\theta$. Indeed, the first $n$ coordinates of a uniform direction in $\mathbb S^{D-1}$ have the form $r \mathbf z$, with $r \le 1$ and $\mathbf z$ uniform on $\mathbb S^{n-1}$; either positive-threshold event implies its counterpart for $\mathbf z$.

Now take two independent iid-uniform batches of sizes $M_A, M_B \le N 2^{o(n)}$ and a fresh RPC with $M_{\rm rpc}$ codewords. Marginal uniformity, the shell-population event, and~\eqref{eq:nns-Phi-bound} give
\begin{align*}
 \E[Z_A + Z_B \mid \cL] &\le (M_A + M_B) M_{\rm rpc} C_n,\\
 \E[P_{\rm bucket} \mid \cL] &\le M_A M_B M_{\rm rpc} C_n^2 \exp(h_n) 2^{o(n)}.
\end{align*}
Since $N C_n = 2^{o(n)}$ and $M_{\rm rpc} = W_n^{-1} 2^{o(n)}$, both quantities are $W_n^{-1} 2^{o(n)}$. Reverse batches satisfy the same aggregate bound by Lemma~\ref{lem:nns-reverse-work} below. \qed
\end{proof}


\subsection{Reverse-query domination}

A reverse proposal depends on the center bank that will be searched, so its workload cannot be analyzed by assuming independent uniform queries. The useful substitute is a pointwise bound: for every good bank, each proposed difference receives at most a constant multiple of its uniform-shell probability. Applying this domination inside the nonnegative workload expression allows us to average over the original bank distribution and recover the same complete-shell workload bound.

For a good centered bank, the first rejection step and center choice give the proposal subprobability, before the final representation rejection,
\begin{equation}
 q_C(\mathbf d) = \frac{b_C(\mathbf d)}{a_{\max}} \cdot \frac1{|U_R|} \le \frac{O(1)}{|U_{\gamma R}|}. \label{eq:reverse-domination}
\end{equation}
Indeed, each representing bank occurrence selects its unique uniform mover with probability $1/(|U_R| a_{\max})$; the upper bound follows from $b_C \le b_{\max}$,~\eqref{eq:bank-thresholds}, shell populations, and \eqref{eq:count-normalizations}. The same argument applies to an affine mover against a centered bank. The following lemma handles the dependence of this proposal on that very bank.

\begin{lemma}[Bounding the search cost for dependent queries]
\label{lem:nns-reverse-work}
Let $U,U'$ be finite nonempty sets, and let $C=(\mathbf c_j)$ be an indexed bank of fixed finite length, with each entry of $C$ uniformly distributed on $U$. On an event $G_C$, suppose its proposal subprobability satisfies $q_C(\mathbf d)\le c/|U'|$ for every $\mathbf d\in U'$, with a fixed constant $c\ge0$. For any fixed nonnegative function $H: U \times U' \to [0, \infty)$,
\begin{equation}
 \begin{split}
 \E_C\!\left[\1_{G_C}\sum_{\mathbf d\in U'}q_C(\mathbf d)\sum_{j=1}^{|C|}H(\mathbf c_j,\mathbf d)\right] &\le \frac{c|C|}{|U||U'|}\sum_{\mathbf c\in U,\mathbf d\in U'}H(\mathbf c,\mathbf d),
 \end{split}
 \label{eq:dominated-bank-work}
\end{equation}
where the integrand is defined as zero off $G_C$.
\end{lemma}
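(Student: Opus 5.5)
The idea is to use the domination hypothesis pointwise inside the nonnegative integrand, so that the bank-dependent weight $q_C$ disappears. What remains is a plain linear function of the bank, whose expectation only needs the uniform marginals.

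\emph{Step 1 (remove the dependence).} Fix a realization of $C$. On $G_C$ we have $0\le q_C(\mathbf d)\le c/|U'|$ for every $\mathbf d\in U'$, and the inner sums $\sum_j H(\mathbf c_j,\mathbf d)$ are nonnegative because $H\ge0$. Hence, pointwise in $C$,
\begin{equation*}
 \1_{G_C}\sum_{\mathbf d\in U'}q_C(\mathbf d)\sum_{j=1}^{|C|}H(\mathbf c_j,\mathbf d)\le \frac{c}{|U'|}\sum_{\mathbf d\in U'}\sum_{j=1}^{|C|}H(\mathbf c_j,\mathbf d).
\end{equation*}
Here we have also dropped the indicator $\1_{G_C}\le1$, which is allowed since the right-hand side is nonnegative. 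Off $G_C$ the left side is zero by convention, so the inequality holds for every $C$.

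\emph{Step 2 (average over the bank).} Take $\E_C$ of both sides using monotonicity of expectation. By linearity, the right side becomes $\frac{c}{|U'|}\sum_{j}\sum_{\mathbf d}\E[H(\mathbf c_j,\mathbf d)]$. Each entry $\mathbf c_j$ is marginally uniform on $U$, so
\begin{equation*}
 \E[H(\mathbf c_j,\mathbf d)]=\frac{1}{|U|}\sum_{\mathbf c\in U}H(\mathbf c,\mathbf d).
\end{equation*}
Summing over the $|C|$ indices gives exactly the right side of~\eqref{eq:dominated-bank-work}. Only marginal uniformity is used here; independence among the entries is not needed. This matters, because the joint law may be conditioned on earlier success events, and only the marginals are assumed.

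\emph{Main obstacle.} There is no real technical difficulty; the point to check is that dropping $\1_{G_C}$ and bounding $q_C$ are legitimate. This relies on the integrand being nonnegative and the bound holding uniformly on $G_C$. All sums are finite, so there are no integrability issues. In the application, one takes $H(\mathbf c,\mathbf d)$ to be the expected filter-collision indicator $K(\mathbf c,-\mathbf d)$ times the code size. The right side then becomes a complete-shell sum controlled by $\Phi_{\cL}$ and~\eqref{eq:nns-Phi-bound}, with $c=O(1)$ supplied by~\eqref{eq:reverse-domination}.
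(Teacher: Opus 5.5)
Your proposal is correct and follows essentially the same route as the paper. The paper's proof likewise applies the pointwise bound $q_C(\mathbf d)\le c/|U'|$ on $G_C$, discards the indicator in the resulting nonnegative upper bound, and averages over the bank's uniform marginals, requiring neither conditioning on $G_C$ nor any independence between bank and proposal.
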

\begin{proof}
Apply the pointwise bound on $G_C$, discard its indicator in the resulting nonnegative upper bound, and average over the original bank marginals. The bank is not conditioned on $G_C$, and no independence between bank and proposal is required. \qed
\end{proof}

For the centered application, fix a lattice on the population, full-shell, and Haar-workload events, and use the shells $U_R$ and $U_{\gamma R}$. Equation~\eqref{eq:reverse-domination} supplies $c=O(1)$ on the good-bank event. Since $-U_{\gamma R} = U_{\gamma R}$, applying~\eqref{eq:dominated-bank-work} with $H(\mathbf c, \mathbf d) = K(\mathbf c, -\mathbf d)$ and using the shell populations yields
\begin{equation}
 \begin{split}
 \E_C\!\left[\1_{G_C}
   \sum_{\mathbf d\in U_{\gamma R}}q_C(\mathbf d)
   \sum_{j=1}^{|C|} K(\mathbf c_j,-\mathbf d)\right]
 &\le O\!\left(|C|\Phi_\cL(R,\gamma R)\right).
 \end{split}
 \label{eq:centered-reverse-work}
\end{equation}
Thus $N 2^{o(n)}$ mover trials against a bank of size $|C|\le N2^{o(n)}$, with a fresh $M_{\rm rpc}$-word code, have indicator-weighted expected bucket work $M_{\rm rpc}N^2 C_n^2 2^{o(n)}$ and point-incidence work $M_{\rm rpc}N C_n 2^{o(n)}$. For the latter, condition on the realized lists and use the cap marginal of each fresh codeword. Linearity suffices throughout.


\subsection{Total running time and space}

We now combine the search procedure guarantees across the full construction. Linearity of expectation accounts for all batches and code repetitions, while streaming and bounded occurrence storage control peak memory. A single aggregate Markov bound shows that the SVP and CVP work cap is exceeded only with vanishing probability. For batch-DGS, we complete every finite traversal and establish the runtime bound separately, keeping the output-law analysis independent of a runtime-success event.

\begin{proof}[Theorem~\ref{thm:nns-reporter}]
Condition on $\mathcal G_{\rm cen}$, the basis-preprocessing event of Lemma~\ref{lem:preprocess}, the minimum event of Lemma~\ref{lem:lambda-scale}, the workload event of Lemma~\ref{lem:nns-workload}, and the catalogue-size event~\eqref{eq:catalogue-size-cap}; this excludes at most $e^{-\Omega(n/\log n)}$ of the lattices. Table~\ref{tab:reporter-calls} and Lemma~\ref{lem:gfh-specialized} give the required exact reports with joint miss probability at most $e^{-3n}$. Fresh codes are drawn after each current batch is fixed.

Recall $h_n=n\log\log n/\log n$, so $n/\log n=o(h_n)$. All list and call-count factors, code repetition, $B_{\rm pc}$, and polynomial overheads are bounded by $\exp(O(n/\ell))$, since $n/\log n+n^{2/3}+\log^3n=O(n/\ell)$. Every input batch has size at most $N\exp(O(n/\ell))$. The forward and terminal expectations follow from Lemma~\ref{lem:nns-workload}. For reverse calls, weight work by the current good-bank indicator and apply Lemma~\ref{lem:nns-reverse-work} before averaging over the previously established construction-success events. Summing gives expected point-incidence and bucket work at most $W_n^{-1} \exp(h_n + O(n/\ell))$.

Choose the workspace budget $M_{\rm mem}=N\exp(O(n/\ell))$ large enough that $M_{\rm mem}\ge|A|+|B|$. The extra term in \eqref{eq:gfh-traversal-time} is bounded by $N \exp(O(n/\ell)) + Z_A \exp(O(n/\ell))$, so it obeys the same aggregate expectation. The space bound in Lemma~\ref{lem:gfh-specialized} is $N 2^{o(n)}$. For counting calls, store each query's distinct retained bank indices up to $a_{\max} + 1$ or $b_{\max} + 1$; the shell-volume ratios in \eqref{eq:bank-thresholds} bound these caps by $\exp(O(n/\ell))$. Terminal differences are streamed, and the critical catalogue is stored by value on its $N2^{o(n)}$ size event.

To include failures in the conditional expected work, bound every finite traversal crudely by $M_{\rm rpc}N^2\exp(O(n/\ell)) \le \exp((0.491 + o(1)) n)$. Ideal bad banks have doubly exponentially small probability. The $n^{O(1)} e^{-3n}$ per-entry initialization error, summed over $N 2^{o(n)}$ entries, and the $e^{-3n}$ bound on the probability of missing any required pair still make their products with this worst-case work negligible; bank and throughput failures are smaller. Thus completed search procedure work $\mathsf{T}$ satisfies
\begin{align*}
 \E[\mathsf{T} \mid \cL] &\le W_n^{-1} \exp(h_n + O(n/\ell)), \\
 \Prb[\mathsf{T} > \mathsf{T}_{\rm cap}(n) \mid \cL] &\le \exp(-2h_n + O(n/\ell)) = o(1),
\end{align*}
where $\mathsf{T}_{\rm cap}(n) = (3/2)^{n/2} \exp(3h_n)$ and $W_n^{-1} = (3/2)^{n/2}\exp(O(\log n))$. The second bound is Markov's inequality. Passing the remaining global charged-operation budget to each traversal enforces the SVP/CVP cap, including its polynomial conversion to arithmetic work. DGS completes the finite traversals without this cap: its output law is never conditioned on the event that the total search cost satisfies the stated bound. The expected and high-probability time is $W_n^{-1}\exp(O(h_n))$, with peak space $N\exp(O(n/\log n))$. \qed
\end{proof}


\section{Proofs: The average-case CVP extension}
\label{app:cvp}

For CVP, we apply the sieve to residuals of the form \(\mathbf t-\mathbf v\), which lie in a fixed affine translate of the lattice. Subtracting ordinary lattice centers preserves this translate, so one affine branch can be carried through the same construction using independent centered sibling trees. Finalization with an independent centered list recovers every residual within $\Rcat$, giving a complete catalogue around the target. A minimum-norm scan solves CVP whenever that catalogue is nonempty.

This appendix proves the affine case of Theorem~\ref{thm:intro-catalogue} and Corollary~\ref{cor:avg-cvp-main}. First sample $\cL\sim X_n$, then a Haar-uniform target class $\boldsymbol\tau=\mathbf t\bmod\cL$. Work on $\mathcal G_{\rm cen}\cap\mathcal G_{\rm aff}$, which supplies the counts, incidences, critical-ball coverage, and lower bound on the affine minimum from Section~\ref{sec:prelim}. No upper bound on the target distance is assumed in the catalogue construction.


\subsection{Affine initialization and exact iid tree}
\label{app:cvp-prelim}

The affine construction uses the same sampling mechanism as the centered sieve. Shifted Gaussian sampling followed by radial rejection initializes the affine shell, and subtracting centered bank vectors keeps every subsequent output in the same coset. The degree and representation corrections retain their exact cancellation identities. Using independent centered sibling trees therefore preserves the ideal iid law along the affine branch, and a joint coupling transfers the analysis to the GPV-based implementation.

\begin{lemma}[Affine initialization at the common radius]
\label{lem:affine-init}
At the deterministic radius $R_0$ of Section~\ref{sec:init}, run GPV with center $\mathbf t$ and apply radial rejection~\eqref{eq:radial-rejection} to $\mathbf x=\mathbf t-\mathbf v\in\cL_{\boldsymbol\tau}$. Abort after $n^3$ unsuccessful attempts. On the preprocessing and population events, the abort probability is $e^{-\Theta(n^{3/2})}$, including internal GPV caps. Conditional on non-abort, the output is within total variation $n^{O(1)} e^{-3n}$ of $\Unif(U_{R_0}^{\boldsymbol\tau})$. Independent invocations take polynomial time and use independent randomness.
\end{lemma}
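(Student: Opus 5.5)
The plan is to repeat the proof of Lemma~\ref{lem:single-init} verbatim, with $\mathbf 0$ replaced by the center $\mathbf t$ and the centered shell replaced by the affine shell $U_{R_0}^{\boldsymbol\tau}$. The three places where this needs care are the GPV guarantee for a shifted center, the shift-uniform normalizer, and the affine shell population.

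First, I would invoke Lemma~\ref{lem:quant-gpv}. On the preprocessing event it already holds uniformly in the shift $\mathbf t$: each capped GPV call outputs a law within $e^{-6n}$ of $D_{\cL,s_n,\mathbf t}$, and an internal cap is hit with probability $e^{-\Omega(n^2)}$. Writing $\mathbf x=\mathbf t-\mathbf v$, the law of $\mathbf x$ under ideal sampling is proportional to $\rho_{s_n}(\mathbf x)$ on $\cL_{\boldsymbol\tau}$ (using $-\cL=\cL$). Its normalizer is $\rho_{s_n}(\cL-\mathbf t)=s_n^n(1+O(\varepsilon_n))$ by~\eqref{eq:quant-normalizer}.

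Second, I would redo the mass estimate~\eqref{eq:initial-shell-mass} for the affine shell:
\begin{align*}
\Prb[\mathbf x\in U_{R_0}^{\boldsymbol\tau}]=\frac{1}{s_n^n(1+O(\varepsilon_n))}\sum_{\mathbf x\in\cL_{\boldsymbol\tau}\cap S_{R_0}}\rho_{s_n}(\mathbf x).
\end{align*}
On $S_{R_0}$ the Gaussian weight varies only by a factor $1+O(1/n)$. The point count is $(1+o(1))\vol(S_{R_0})$ by Lemma~\ref{lem:shell-counts} in the random-affine experiment. This step is legitimate because $S_{R_0}$ is a predetermined shell with radius in the admissible range, and $\mathcal G_{\rm aff}$ includes this population event. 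Combined with~\eqref{eq:continuous-initial-shell-mass}, this gives $\Theta(n^{-3/2})$.

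Third, I would apply the radial rejection~\eqref{eq:radial-rejection}. It depends only on $\|\mathbf x\|$, so $\rho_{s_n}(\mathbf x)p_0(\mathbf x)=e^{-\pi R_+^2/s_n^2}$ is constant on the shell. Ideal accepted outputs are therefore exactly $\Unif(U_{R_0}^{\boldsymbol\tau})$, and the acceptance probability per attempt is $\Theta(n^{-3/2})$. The remaining bounds then follow as in Lemma~\ref{lem:single-init}:
\begin{itemize}
\item all $n^3$ attempts fail with probability $(1-\Theta(n^{-3/2}))^{n^3}=e^{-\Theta(n^{3/2})}$;
\item a union bound over $n^3$ calls gives internal-cap probability $e^{-\Omega(n^2)}$ and coupling error $n^3e^{-6n}$;
\item this coupling error also keeps the actual acceptance probability at $\Theta(n^{-3/2})$;
\item conditioning on non-abort inflates the error by only $1+o(1)$, giving $n^{O(1)}e^{-3n}$;
\item polynomial time and independent randomness are inherited from the capped GPV calls.
\end{itemize}

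The main obstacle is the second step, the affine shell mass. In particular, one must check that the shell-population estimate is used in the random-affine experiment, not simultaneously over all shifts. The statement is claimed only on the population event for the sampled coset, so this is exactly what is needed. The shift-uniformity of the normalizer from Poisson summation handles the denominator.
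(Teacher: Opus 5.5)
Your proposal is correct and follows essentially the same route as the paper. The shift-uniform normalizer \eqref{eq:quant-normalizer}, the random-affine population event of Lemma~\ref{lem:shell-counts}, and \eqref{eq:continuous-initial-shell-mass} give per-attempt acceptance $\Theta(n^{-3/2})$; radial rejection makes the ideal accepted residual exactly uniform on $U_{R_0}^{\boldsymbol\tau}$; and Lemma~\ref{lem:quant-gpv} with a union bound over $n^3$ attempts gives the abort and coupling bounds, with your caveat that the population estimate is needed only for the sampled coset being exactly the right one.
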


\begin{proof}
The normalization estimate, which holds for every shift~\eqref{eq:quant-normalizer}, Lemma~\ref{lem:shell-counts}, and \eqref{eq:continuous-initial-shell-mass} give per-attempt acceptance $\Theta(n^{-3/2})$. Radial rejection cancels the ideal Gaussian weight, so its first accepted residual is exactly uniform on $U_{R_0}^{\boldsymbol\tau}$. Lemma~\ref{lem:quant-gpv} couples each actual call with error $e^{-6n}$, outside an internal-cap event of probability $e^{-\Omega(n^2)}$. Union-bound over $n^3$ attempts; conditioning on non-abort costs only $1+o(1)$. Independent trials give the stated abort bound. \qed
\end{proof}

\begin{corollary}[Uniformity and independence in the affine case]
\label{cor:affine-regeneration}
Fix a transition radius $R$ and a centered bank $C=(\mathbf c_j)\subseteq U_R^0$. Define
\begin{align}
 a_C(\mathbf u):=\#\{j:\mathbf u-\mathbf c_j\in U_{\gamma R}^{\boldsymbol\tau}\}, \qquad b_C(\mathbf d):=\#\{j:\mathbf d+\mathbf c_j\in U_R^{\boldsymbol\tau}\}.
\end{align}
If $a_C\le a_{\max}$ throughout $U_R^{\boldsymbol\tau}$ and $b_{\min}\le b_C\le b_{\max}$ throughout $U_{\gamma R}^{\boldsymbol\tau}$, apply Algorithm~\ref{alg:fullshell-merge} with movers in $U_R^{\boldsymbol\tau}$, target $U_{\gamma R}^{\boldsymbol\tau}$, and the centered bank in $U_R^0$, using the counts above. It outputs each fixed $\mathbf d\in U_{\gamma R}^{\boldsymbol\tau}$ from one uniform mover with probability
\begin{align}
 \frac{b_{\min}}{a_{\max}} \cdot \frac1{|U_R^{\boldsymbol\tau}|}.
\end{align}
Successful outputs have the exact iid-uniform law on $U_{\gamma R}^{\boldsymbol\tau}$, independent of the realized good bank, as in Theorem~\ref{thm:exact-regeneration}.
\end{corollary}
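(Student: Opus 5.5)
The plan is to repeat the edge-counting argument of Theorem~\ref{thm:exact-regeneration}, with the affine coset carried through each step. The first step is to check that the construction stays in the right sets. A mover $\mathbf u\in U_R^{\boldsymbol\tau}\subseteq\cL_{\boldsymbol\tau}$ and a centered bank entry $\mathbf c_j\in U_R^0\subseteq\cL$ give $\mathbf u-\mathbf c_j\in\cL_{\boldsymbol\tau}$, so every proposed difference lies in the affine coset. The post-filter $\mathbf u-\mathbf c_j\in U_{\gamma R}^{\boldsymbol\tau}$ therefore places it in the affine target shell. Conversely, for $\mathbf d\in U_{\gamma R}^{\boldsymbol\tau}$ and any $j$ with $\mathbf d+\mathbf c_j\in U_R^{\boldsymbol\tau}$, the point $\mathbf u:=\mathbf d+\mathbf c_j$ is a legitimate mover value. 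This gives a bijection between the representing bank occurrences counted by $b_C(\mathbf d)$ and the indexed edges $(\mathbf u,j)$ that produce $\mathbf d$.

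Next I fix $\mathbf d\in U_{\gamma R}^{\boldsymbol\tau}$ and sum over these $b_C(\mathbf d)$ edges. Each edge needs three things to happen:
\begin{enumerate}
\item the mover takes the value $\mathbf u=\mathbf d+\mathbf c_j$, which has probability $1/|U_R^{\boldsymbol\tau}|$;
\item the first Bernoulli trial passes, which has probability $a_C(\mathbf u)/a_{\max}$;
\item occurrence $j$ is selected among the $a_C(\mathbf u)$ eligible centers, which has probability $1/a_C(\mathbf u)$.
\end{enumerate}
The product is $1/(a_{\max}|U_R^{\boldsymbol\tau}|)$, and it is well defined because $1\le a_C(\mathbf u)\le a_{\max}$. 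Since $a_C\le a_{\max}$ on all of $U_R^{\boldsymbol\tau}$, the first cap test never rejects. Since $b_{\min}\le b_C(\mathbf d)\le b_{\max}$, neither the reverse cap nor the lower test rejects, so the final acceptance probability is exactly $b_{\min}/b_C(\mathbf d)$. Multiplying and summing over the $b_C(\mathbf d)$ edges gives $\frac{b_{\min}}{a_{\max}}\cdot\frac{1}{|U_R^{\boldsymbol\tau}|}$, which is the claimed formula.

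Independence then follows as in Theorem~\ref{thm:exact-regeneration}. Given $(C,\cL,\boldsymbol\tau)$, the movers and their private randomness are iid, so each trial independently outputs either $\bot$ or a uniform point of $U_{\gamma R}^{\boldsymbol\tau}$ with constant total mass. The accepted values are independent of the success pattern. Conditioning on at least $M_{i+1}$ successes and keeping the first $M_{i+1}$ therefore yields $\Unif(U_{\gamma R}^{\boldsymbol\tau})^{\otimes M_{i+1}}$. The per-point probability does not involve $C$ beyond the goodness hypotheses, so the law does not depend on which good bank was realized.

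No step here is a real obstacle; the only point needing care is the membership bookkeeping. The reverse count must use the affine condition $\mathbf d+\mathbf c_j\in U_R^{\boldsymbol\tau}$ and not the centered one, so that the edges counted by $b_C(\mathbf d)$ are exactly those that could have produced $\mathbf d$. The forward count must likewise use the affine target shell, so that $a_C(\mathbf u)$ is the true number of eligible occurrences seen by the algorithm. The counts defined in the statement satisfy both conditions, and the thresholds are the same deterministic values, so the cancellation goes through verbatim.
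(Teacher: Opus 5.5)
Your proposal is correct and follows the paper's proof essentially verbatim. The paper likewise identifies each representing occurrence $j$ with the unique mover $\mathbf u=\mathbf d+\mathbf c_j$, assigns that edge probability $1/(|U_R^{\boldsymbol\tau}|a_{\max})$, sums over the $b_C(\mathbf d)$ occurrences with the final factor $b_{\min}/b_C(\mathbf d)$, and then invokes the independence argument of Theorem~\ref{thm:exact-regeneration} unchanged; your extra bookkeeping (coset membership, $a_C(\mathbf u)\ge1$ on edges, and the observation that the cap and lower-threshold tests never fire under the goodness hypotheses) just makes explicit what the paper leaves implicit.
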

\begin{proof}
Each representing occurrence $j$ has the unique mover $\mathbf u=\mathbf d+\mathbf c_j$.  Balancing and center choice give this edge probability $1/(|U_R^{\boldsymbol\tau}|a_{\max})$. Summing over $b_C(\mathbf d)$ occurrences and applying rejection $b_{\min}/b_C(\mathbf d)$ proves the formula; the independence argument of Theorem~\ref{thm:exact-regeneration} then applies unchanged. \qed
\end{proof}

\begin{theorem}[Ideal iid affine branch and GPV coupling]
\label{thm:affine-tree}
Start with one exact iid-uniform affine leaf, use exact exhaustive reporting, and supply each level with a disjoint centered sibling subtree. Except with probability $e^{-\Omega(n)}$, construction succeeds throughout; conditional on the construction-success events established up to that level, each affine list at level $i$ has law
\begin{align}
 \Unif(U_{R_i}^{\boldsymbol\tau})^{\otimes M_i}.
\end{align}
Conditional on $(\cL,\boldsymbol\tau)$ and successful construction of both trees, its terminal list $L_{m,1}$ is independent of the centered terminal list $L_{m,0}$ of the other tree. The implemented affine leaf and all centered leaves in both the sibling subtrees and the separate final-root tree jointly couple to these ideal leaves with error $e^{-\Omega(n)}$. Lemma~\ref{lem:execution-transfer} transfers the complete execution. The number of lists is still $2^{o(n)}$, preserving the leading time and space exponents.
\end{theorem}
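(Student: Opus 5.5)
The proof mirrors Theorem~\ref{thm:slow-tree}, replacing the centered regeneration step by Corollary~\ref{cor:affine-regeneration} along the affine path.

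\textbf{Ordering and induction.} Fix $(\cL,\boldsymbol\tau)$ on $\mathcal G_{\rm cen}\cap\mathcal G_{\rm aff}$ and process all merges bottom-up in a fixed order in which, at each level $i$, the centered sibling subtree supplying the bank is completed before the affine merge at that level. The inductive hypothesis is the product-law invariant~\eqref{eq:full-iid-invariant}, extended to the affine lists. Conditional on the success events exposed so far, every available centered list is $\Unif(U_{R_i}^0)^{\otimes M_i}$, the unique available affine list is $\Unif(U_{R_i}^{\boldsymbol\tau})^{\otimes M_i}$, and all available lists are mutually independent. The base case is the ideal affine leaf together with independent ideal centered leaves, with the initialization randomness disjoint by construction.

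\textbf{Inductive step.} For a centered merge, I invoke Lemma~\ref{lem:bank-good} and Theorem~\ref{thm:exact-regeneration} exactly as in Theorem~\ref{thm:slow-tree}. For the affine merge at level $i$, the bank is the frozen centered sibling, which is iid uniform on $U_{R_i}^0$ and independent of the affine movers. The affine good-bank event has the following ingredients:
\begin{itemize}
\item the bound $a_C\le a_{\max}$ on $U_{R_i}^{\boldsymbol\tau}$;
\item the bounds $b_{\min}\le b_C\le b_{\max}$ on $U_{\gamma R_i}^{\boldsymbol\tau}$;
\item its failure probability, which I bound by repeating the Chernoff argument of Lemma~\ref{lem:bank-good}.
\end{itemize}
For the means I use the affine versions of~\eqref{eq:fullshell-degree}--\eqref{eq:fullshell-repr} from Lemma~\ref{lem:shell-interface}. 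I replace $|U_R^0|$ by $V_R$ via Lemma~\ref{lem:shell-counts}, and use $|U_R^{\boldsymbol\tau}|\le 2V_R$ for the union bound over affine source and target points. The means remain of size $\exp(\Omega(n/\ell))$ by~\eqref{eq:bank-mean-large}, so the failure probability is again doubly exponentially small. On this event, Corollary~\ref{cor:affine-regeneration} gives an iid-uniform affine output whose law does not depend on the bank. The acceptance computation of Lemma~\ref{lem:constant-throughput} goes through verbatim with the affine shell counts, giving success probability $1/12+o(1)$ per mover and abort probability $e^{-\Omega(M_i)}$. Since the output law is bank-independent and the bank is consumed, conditioning on the current success events preserves independence of the new affine list from the remaining lists. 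Summing over the $2^{o(n)}$ merges gives total construction failure $e^{-\Omega(n)}$.

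\textbf{Independence of the terminal lists.} The final-root centered tree uses randomness disjoint from the affine branch and from its sibling subtrees. Its terminal list $L_{m,0}$ is therefore a function of disjoint randomness, and the invariant at the end yields independence of $L_{m,0}$ and $L_{m,1}$ conditional on $(\cL,\boldsymbol\tau)$ and success.

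\textbf{Coupling and cost.} For the implemented leaves, I combine Lemma~\ref{lem:affine-init} for the single affine leaf with Theorem~\ref{thm:init-iid} for all centered leaves, and union-bound the per-entry errors $n^{O(1)}e^{-3n}$ over $N2^{o(n)}$ entries to get joint coupling error $e^{-\Omega(n)}$. Lemma~\ref{lem:execution-transfer} then transfers the whole execution. The list count is at most doubled relative to the centered construction, hence remains $2^{o(n)}$.

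\textbf{Main obstacle.} The delicate point is the affine bank-goodness step. Its degree and representation bounds must hold simultaneously over the whole affine shell for the random coset. This is where the random-affine experiment in Lemma~\ref{lem:shell-interface} is essential. I also need to verify that the union bound over the sequentially exposed affine merges uses only conditional iid hypotheses on the centered banks, and never on the affine lists themselves.
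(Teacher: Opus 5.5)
Your proposal is correct and follows the paper's proof essentially step for step. The shared steps are: affine bank goodness via the random-affine case of Lemma~\ref{lem:shell-interface} fed into the Chernoff argument of Lemma~\ref{lem:bank-good}; exact regeneration via Corollary~\ref{cor:affine-regeneration} with per-mover success probability $1/12+o(1)$; the induction of Theorem~\ref{thm:slow-tree}; independence of the two terminal lists from disjoint randomness; and a joint leaf coupling from Lemma~\ref{lem:affine-init} and Theorem~\ref{thm:init-iid}, transferred by Lemma~\ref{lem:execution-transfer}. You simply make explicit some details the paper leaves implicit, such as the merge ordering and the $|U_R^{\boldsymbol\tau}|\le 2V_R$ union bound.
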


\begin{proof}
Lemma~\ref{lem:shell-interface} makes the bank-concentration argument of Lemma~\ref{lem:bank-good} applicable to affine movers and targets. Corollary~\ref{cor:affine-regeneration}, the populations, and \eqref{eq:count-normalizations} give success probability
\begin{align}
 |U_{\gamma R}^{\boldsymbol\tau}|\frac{b_{\min}}{|U_R^{\boldsymbol\tau}|a_{\max}} = (1 + o(1)) \frac{V_{\gamma R} \beta_R}{12 V_R \alpha_R} = \frac1{12} + o(1).
\end{align}
Thus the $M_i/16$ throughput target fails with probability $e^{-\Omega(M_i)}$. The induction of Theorem~\ref{thm:slow-tree} gives the product laws and simultaneous success. The two terminal lists are generated using independent randomness.

The sibling subtrees have $2^m - 1$ leaves in total and the centered tree has $2^m$, where $m = O(\log^3 n)$. The affine leaf and all centered leaves therefore contain only $N 2^{o(n)}$ entries. Apply Lemma~\ref{lem:affine-init} and the per-call bounds of Theorem~\ref{thm:init-iid} jointly to these entries: summing $n^{O(1)} e^{-3n}$ coupling errors and $e^{-\Theta(n^{3/2})}$ attempt-cap errors gives $e^{-\Omega(n)}$. Apply Lemma~\ref{lem:execution-transfer}. The leaf count also proves the claimed resource overhead. \qed
\end{proof}


\subsection{Constructing the affine catalogue and bounding the running time}

Finalization retains all reported affine residuals of norm at most $\Rcat$ and deduplicates them. Full shell coverage and critical-ball difference coverage make this catalogue complete. Mixed common-cap estimates control the search procedure work, with pointwise domination handling the dependent reverse queries.

\begin{lemma}[Complete affine catalogue in the terminal lists]
\label{lem:cvp-terminal}
On $\mathcal G_{\rm cen}\cap\mathcal G_{\rm aff}$, with probability $1-e^{-\Omega(n)}$ the independent ideal affine and centered terminal lists $L_{m,1},L_{m,0}$ contain a representing pair for every residual in $(\boldsymbol\tau+\cL)\cap\mathcal B_{\Rcat}$. Exact reporting, norm filtering, and deduplication therefore recover this entire set.
\end{lemma}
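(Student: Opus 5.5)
The argument mirrors Lemma~\ref{lem:terminal-coverage} and Corollary~\ref{cor:complete-short-catalogue}, now in the affine setting. There are three steps, taken in order.

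\textbf{Step 1: critical-ball coverage.} Since we work on $\mathcal G_{\rm cen}\cap\mathcal G_{\rm aff}$, the affine case of Lemma~\ref{lem:critical-coverage} applies. Every residual $\mathbf r\in(\boldsymbol\tau+\cL)\cap\mathcal B_{\Rcat}$ can therefore be written as $\mathbf r=\mathbf x-\mathbf y$ with $\mathbf x\in U^{\boldsymbol\tau}_{R_m}$ and $\mathbf y\in U^0_{R_m}$. We must check that $\mathbf r\neq\mathbf 0$, so that the exclusion of $\mathbf 0$ in \eqref{eq:critical-ball-coverage-imported} is harmless. This follows from the lower bound $\rho_{\rm aff}\ge(1-1/\ell)\Rlambda>0$ in $\mathcal G_{\rm aff}$: it shows that $\mathbf 0\notin\boldsymbol\tau+\cL$.

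\textbf{Step 2: full shell coverage by both terminal lists.} It then suffices to show that $L_{m,1}\supseteq U^{\boldsymbol\tau}_{R_m}$ and $L_{m,0}\supseteq U^0_{R_m}$ as sets.

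By Theorem~\ref{thm:affine-tree}, conditional on $(\cL,\boldsymbol\tau)$ and construction success, $L_{m,1}\sim\Unif(U^{\boldsymbol\tau}_{R_m})^{\otimes M_m}$. By Theorem~\ref{thm:slow-tree}, $L_{m,0}\sim\Unif(U^0_{R_m})^{\otimes M_m}$, and the two lists are independent.

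Shell populations (Lemma~\ref{lem:shell-counts}, affine and centered versions) give
\[
|U^{\boldsymbol\tau}_{R_m}|,\ |U^0_{R_m}|\le 2V_{R_m}=N\exp(n/\ell+O(\log n)).
\]
As in \eqref{eq:dgs-critical-source-size}, this yields $M_m/|U_{R_m}^{\boldsymbol\tau}|\ge\exp(n/\ell-O(\log^3 n))$. A fixed shell point is missed by an iid list with probability at most $\exp(-M_m/|U|)$. A union bound over both shells then gives miss probability $\exp[-\exp(\Omega(n/\ell))]$.

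Adding the construction-failure probability $e^{-\Omega(n)}$ from Theorem~\ref{thm:affine-tree} gives the overall $1-e^{-\Omega(n)}$. Independence is not needed for coverage; each list is handled separately. Independence matters only in that both laws hold simultaneously under the same conditioning.

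\textbf{Step 3: recovery.} Exact reporting of all cross pairs $(\mathbf x_i,\mathbf y_j)\in L_{m,1}\times L_{m,0}$ with $0<\|\mathbf x_i-\mathbf y_j\|\le\Rcat$ (Table~\ref{tab:reporter-calls}) outputs exactly the residuals in the ball. Each such difference lies in $\boldsymbol\tau+\cL$, because the coset is preserved under subtraction of lattice points. Deduplication by value then returns the entire set.

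\textbf{Main obstacle.} The step requiring the most care is the conditioning bookkeeping. We condition on construction success for both trees, and the uniform law must hold under that conditioning. This is exactly what Theorem~\ref{thm:affine-tree} provides, since the output law is independent of the realized good banks. The ideal-versus-implemented gap is deferred to Lemma~\ref{lem:execution-transfer} and is not part of this lemma.
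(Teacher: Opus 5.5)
Your proof is correct and follows the paper's argument essentially step for step: affine critical-ball coverage from $\mathcal G_{\rm aff}$, coupon-collector coverage of both terminal shells using $M_m/|U_{R_m}^{\boldsymbol\tau}|\ge\exp(n/\ell-O(\log^3 n))$, adding the construction failure from Theorem~\ref{thm:affine-tree}, and recovery by coset preservation with exact norm filtering. Your explicit check that $\mathbf 0\notin\boldsymbol\tau+\cL$ via the lower bound on $\rho_{\rm aff}$, and your remark that the union bound needs no independence, are details the paper leaves implicit.
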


\begin{proof}
Lemma~\ref{lem:critical-coverage} supplies a representation in $U_{R_m}^{\boldsymbol\tau}\times U_{R_m}^0$ for every such residual. Conditional on successful ideal construction, the independent root lists cover both shells by the coupon-collector proof of Lemma~\ref{lem:terminal-coverage}. Their populations are $N\exp(n/\ell+O(\log n))$ and their lengths are $M_m$, so coverage fails with probability $\exp[-\exp(\Omega(n/\ell))]$. Add the construction failure from Theorem~\ref{thm:affine-tree}. Every reported difference belongs to the same coset, and exact norm tests remove all points outside the ball. \qed
\end{proof}

Mapping each retained residual $\mathbf r$ to $\mathbf t-\mathbf r$ gives $\mathcal C_{\mathbf t}$. Use the same storage cap $L_{\rm cap}$ as in~\eqref{eq:catalogue-size-cap}. The affine first moment gives $\E|\mathcal C_{\mathbf t}|=\vol(\mathcal B_{\Rcat})\le N$, so Markov bounds catalogue overflow by $e^{-n/\ell}$.

\begin{lemma}[Affine--centered NNS workload]
\label{lem:nns-affine-workload}
The ideal affine execution satisfies the one-block and aggregate bounds of Lemma~\ref{lem:nns-workload} and Theorem~\ref{thm:nns-reporter}, within $2^{o(n)}$, with probability $1-e^{-\Omega(n/\log n)}$ over $(\cL, \boldsymbol\tau)$ and the list/search procedure randomness. The coupling in Theorem~\ref{thm:affine-tree} transfers these bounds to the implementation with additional error $e^{-\Omega(n)}$.
\end{lemma}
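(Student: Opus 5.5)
The plan is to rerun the centered workload argument of Lemma~\ref{lem:nns-workload}, Lemma~\ref{lem:nns-reverse-work} and the proof of Theorem~\ref{thm:nns-reporter}, replacing each centered complete-shell kernel by its mixed affine--centered analogue. The affine construction issues three kinds of search calls. First, ordinary centered forward and reverse calls inside the sibling subtrees. These are literally centered and are already covered by Lemma~\ref{lem:nns-workload} on $\mathcal G_{\rm cen}$. Second, forward and reverse calls on the affine branch. Third, the terminal call between $L_{m,1}\subseteq U^{\boldsymbol\tau}_{R_m}$ and $L_{m,0}\subseteq U^0_{R_m}$. For the second and third kinds I define the mixed kernel
\begin{align*}
 \Phi_{\cL}^{\boldsymbol\tau}(R,R') := \frac{1}{V_RV_{R'}}\sum_{\mathbf x\in U_R^{\boldsymbol\tau}}\sum_{\mathbf y\in U_{R'}^0}K(\mathbf x,\mathbf y),
\end{align*}
where the reverse calls use $K(\mathbf c,-\mathbf d)$ with $\mathbf d$ affine.

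The first step bounds $\E_{(\cL,\boldsymbol\tau)}\Phi^{\boldsymbol\tau}_{\cL}(R,R')$. I would use the affine/mixed common-cap statistics of \cite{LaarhovenSpherical2026}, the affine counterpart of Corollary~A.3 used in Lemma~\ref{lem:nns-workload}. A direct route also seems available. Averaging over a Haar-uniform coset, the affine point $\mathbf x$ is distributed like Lebesgue measure, so
\begin{align*}
 \E_{\boldsymbol\tau}\sum_{\mathbf x\in U^{\boldsymbol\tau}_R}f(\mathbf x)=\int_{S_R}f.
\end{align*}
Taking $f(\mathbf x)=\sum_{\mathbf y\in U^0_{R'}}K(\mathbf x,\mathbf y)$ and then applying Siegel to the $\mathbf y$-sum gives
\begin{align*}
 \E\,\Phi^{\boldsymbol\tau}_{\cL}(R,R') \le C_n^2(1+o(1)),
\end{align*}
which is even cleaner than the centered case because no diagonal term arises. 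Markov's inequality with threshold $C_n^2e^{h_n}$, union-bounded over the $O(\log^3 n)$ predetermined affine shell pairs, then yields the analogue of~\eqref{eq:nns-Phi-bound} with failure probability $e^{-\Omega(h_n)}\subseteq e^{-\Omega(n/\log n)}$. Combined with the affine shell populations of Lemma~\ref{lem:shell-counts} and the marginal uniformity of fresh codewords, this bounds $\E[Z_A+Z_B]$ and $\E[P_{\rm bucket}]$ for the forward and terminal calls by $W_n^{-1}2^{o(n)}$, exactly as in the proof of Lemma~\ref{lem:nns-workload}.

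The second step handles the affine reverse calls. Here the proposals $\mathbf d\in U^{\boldsymbol\tau}_{\gamma R}$ depend on the centered bank $C$. On the good-bank event, Corollary~\ref{cor:affine-regeneration} together with~\eqref{eq:reverse-domination} gives
\begin{align*}
 q_C(\mathbf d)\le O(1)/|U^{\boldsymbol\tau}_{\gamma R}|.
\end{align*}
I then apply Lemma~\ref{lem:nns-reverse-work} with $U=U^0_R$, $U'=U^{\boldsymbol\tau}_{\gamma R}$ and $H(\mathbf c,\mathbf d)=K(\mathbf c,-\mathbf d)$. The lemma needs no symmetry of $U'$: the sum is over $U'$ itself, so the identity $-U_{\gamma R}=U_{\gamma R}$ used in the centered case is not required, and the result is bounded by $O(|C|)\Phi^{-\boldsymbol\tau}_{\cL}(R,\gamma R)$. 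Since $-\boldsymbol\tau$ is also Haar-uniform, the first step covers this kernel as well.

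The third step aggregates. I sum over the $2^{o(n)}$ calls and include the crude worst-case work bound times the failure probabilities, as in the proof of Theorem~\ref{thm:nns-reporter}. A global Markov bound then gives the same $\mathsf T_{\rm cap}$ overflow estimate, and the peak space argument is unchanged. Finally, the coupling of Theorem~\ref{thm:affine-tree} combined with Lemma~\ref{lem:execution-transfer} transfers these bounds to the implementation with extra error $e^{-\Omega(n)}$.

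I expect the main obstacle to be making the first step rigorous. One must check that the mixed kernel is bounded in expectation over $(\cL,\boldsymbol\tau)$ jointly, rather than over $\boldsymbol\tau$ for a fixed lattice. One must also ensure that the probability statements refer to the random-affine experiment only, since the affine bounds are not simultaneous over all shifts.
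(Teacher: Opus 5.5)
Your proposal is correct and follows the paper's proof. The steps match: the mixed affine--centered common-cap sum has expectation $C_n^2$; forward and terminal calls follow from marginal uniformity of fresh codewords on the population event; reverse calls use Lemma~\ref{lem:nns-reverse-work} with $H(\mathbf c,\mathbf d)=K(\mathbf c,-\mathbf d)$ and the Haar-uniformity of $-\boldsymbol\tau$; and the aggregation and transfer repeat Appendix~\ref{app:nns-full} and Theorem~\ref{thm:affine-tree}. The differences are minor: you derive the mixed expectation directly by coset unfolding plus Siegel instead of citing the mixed case of \cite[Corollary~A.3]{LaarhovenSpherical2026}, and you insert a per-$(\cL,\boldsymbol\tau)$ Markov bound on the kernel, whereas the paper averages the $\1_G$-weighted work jointly over $(\cL,\boldsymbol\tau)$ before a single aggregate Markov bound.
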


\begin{proof}
Let $G$ be the population and full-shell part of $\mathcal G_{\rm cen}\cap\mathcal G_{\rm aff}$. The mixed case of~\cite[Corollary~A.3]{LaarhovenSpherical2026} gives expected volume-normalized common-cap sum exactly $C_n^2$. With independent centered and affine batch sizes $M_A, M_B$ and a fresh $M_{\rm rpc}$-word code, count normalization on $G$ and the padding bounds therefore give
\begin{align*}
 \E[\1_G P_{\rm bucket}] &\le (1 + o(1)) M_{\rm rpc} M_A M_B C_n^2,\\
 \E[Z_A + Z_B] &\le M_{\rm rpc} (M_A + M_B) C_n.
\end{align*}
For reverse calls, \eqref{eq:reverse-domination} gives $q_C(\mathbf d) \le O(1)/|U_{\gamma R}^{\boldsymbol\tau}|$ on the good-bank event. Apply Lemma~\ref{lem:nns-reverse-work} to the original centered bank, with $H(\mathbf c, \mathbf d) = K(\mathbf c, -\mathbf d)$. Its right-hand side is $O(M_A)$ times the count-normalized mixed common-cap sum. Since $-\boldsymbol\tau$ is also Haar-uniform conditional on $\cL$, Corollary~A.3 bounds each reverse trial's indicator-weighted expected bucket work by $O(M_{\rm rpc}M_A C_n^2)$, without independence between bank and proposal. Sum over calls and codes and use the aggregate expectation, failure accounting, Markov slack, and space argument of Appendix~\ref{app:nns-full}. Lemma~\ref{lem:gfh-specialized} supplies recall; the joint initialization coupling supplies the stated transfer. \qed
\end{proof}

\begin{proof}[Affine case of Theorem~\ref{thm:intro-catalogue} and Corollary~\ref{cor:avg-cvp-main}]
Theorem~\ref{thm:affine-tree} and Lemma~\ref{lem:cvp-terminal} give representing pairs for all residuals in the critical ball. Table~\ref{tab:reporter-calls} and Lemma~\ref{lem:gfh-specialized} supply exact reporting. Lemma~\ref{lem:nns-affine-workload} and the catalogue cap give $W_n^{-1}2^{o(n)}$ time and $C_n^{-1}2^{o(n)}$ space. Including initialization transfer and all geometric, construction, and cap failures, the complete catalogue is returned with probability $1-e^{-\Omega(n/\log n)}$. On this event, if $\operatorname{dist}(\mathbf t,\cL)\le\Rcat$, a nearest catalogue point is globally closest. Finally, Lemma~\ref{lem:affine-minimum} makes the catalogue nonempty with the same probability bound, proving the random-target CVP assertion. \qed
\end{proof}

\begin{remark}[Distance range and conditioning]
The catalogue guarantee uses no upper bound on $\rho_{\rm aff}$; on its success event, exact CVP follows throughout $\rho_{\rm aff}\le\Rcat$, including distances above $(1+1/\ell)\Rlambda$. All probabilities refer to the original random-affine experiment. For an input event $E$ of positive probability, the catalogue failure bound $\delta_n=e^{-\Omega(n/\log n)}$ gives only $\Prb[\text{catalogue failure}\mid E]\le\min\{1,\delta_n/\Prb[E]\}$. A high success probability after conditioning on an exceptionally large target distance would require additional conditional estimates.
\end{remark}


\section{Proofs: Joint batch DGS and repeated queries}
\label{app:dgs}

We build a Gaussian sampler from two parameter-independent ingredients: an exact catalogue of short vectors and independent uniform streams on thin outer shells. For any fixed Gaussian parameter, suitable mixture weights and rejection corrections turn these ingredients into a batch of samples. The reference law has exact iid samples. We use the Hellinger distance to bound the error in the joint distribution of the entire batch.

Discrete Gaussian sampling via suitably weighted uniform sampling in lattice balls already appears in Stephens-Davidowitz~\cite[Section~1.2]{SD2016DGS}. Aggarwal--Dadush--Regev--Stephens-Davidowitz~\cite{ADRS2015SVP} also give worst-case batch DGS: $2^{n/2}$ samples at arbitrary widths in $2^{n+o(n)}$ time and space, with both costs dropping to $2^{n/2+o(n)}$ for sufficiently large widths above smoothing. The additional ingredients here are the thin-shell implementation, preprocessing independent of $s$, and the stated batch guarantees on Haar-random lattices.

We combine the shell-population and critical-ball inputs of Lemmas~\ref{lem:shell-counts} and~\ref{lem:critical-coverage} with the exhaustive reporting guarantee from Appendix~\ref{app:nns-full} and the exact-iid trees of the main body.

All lattice events, the catalogue, and the outer-shell streams below are independent of $s$. We first prove a uniform guarantee for one query at any width, then allow widths chosen from earlier outputs. Unused samples remain hidden when answering a query.

Recall $N=(4/3)^{n/2}$ and $K_n=\lfloor N\rfloor$. Set $s_{\rm hi}:=s_n$ and $R_{\rm hi}:=s_{\rm hi}\sqrt{10n/(2\pi)}$, using the parameter-independent scale from Section~\ref{sec:init}. We prepare streams of length
\begin{equation}
 M_{\rm out}:=\left\lceil N\exp\!\left(\frac{3n}{2\ell}\right)\right\rceil.
\end{equation}
Write $g_s(r):=e^{-\pi r^2/s^2}$ for the radial Gaussian weight. The constant $10$ in $R_{\rm hi}$ fixes a convenient uniform tail margin.


\subsection{Uniform sample lists for the outer shells}
\label{app:dgs-outer}

We cover the required outer radial range by polynomially many adjacent thin shells and prepare a uniform sample stream for each shell. To reach a prescribed shell exactly, we align the sieve procedure by adjusting its initial radius and Gaussian parameter. Independent constructions supply independent streams across shells. There are polynomially many trees, each of polylogarithmic depth. This allows the initialization, regeneration, and reporting bounds to be combined without changing the leading complexity exponents.

Let $b_0 := \Rcat$ and
\begin{align}
 q_n:=\frac{1+n^{-2}}{1-n^{-2}}=1+\frac{2}{n^2}+O(n^{-4}).
\end{align}
For $j \ge 1$, put $b_j = q_n^j b_0$ and $A_j := \{\mathbf x: b_{j-1} < \|\mathbf x\| \le b_j\}$, and denote its volume by $V_j := \vol(A_j)$.
Apart from deterministic spherical boundaries, which contain no lattice point with Haar probability one, $A_j = S_{b_{j-1}/(1 - n^{-2})}$. Let $J$ be the first index with $b_J \ge R_{\rm hi}$. Since $\log(R_{\rm hi} / \Rcat) = O(\log^2 n)$, we have $J = O(n^2 \log^2 n)$. Lemma~\ref{lem:shell-counts}, applied to these $J$ shells, gives with probability $1 - e^{-\Omega(n)}$
\begin{equation}
 \left|\frac{|\cL\cap A_j|}{V_j}-1\right|\le\varepsilon_{\rm pop}\qquad(1\le j\le J). \label{eq:dgs-shell-count-event}
\end{equation}

\begin{lemma}[Aligned iid outer-shell lists]
\label{lem:dgs-aligned-lists}
Fix a lattice satisfying the centered shell, basis-preprocessing, and Haar search procedure-workload events. The preprocessing constructs lists $L_j$ on the $J$ shells $A_j$, each of length $M_{\rm out}$, which can be jointly coupled to mutually independent iid-uniform lists with error $2^{-\Omega(n)}$ over its internal randomness. Using completed search procedure traversals, it has aggregate expected time $W_n^{-1} 2^{o(n)}$, the same time bound with probability $1-e^{-\Omega(n/\log n)}$, and space $N 2^{o(n)}$.
\end{lemma}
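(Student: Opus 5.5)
The plan is to run, for each outer shell $A_j$ ($1\le j\le J$), one independent copy of the centered sieve tree from Sections~\ref{sec:algorithm}--\ref{sec:iteration}. The only change is that the tree is aligned so that its terminal level lands exactly on $A_j$ rather than on $S_{R_m}$. Concretely, write $A_j=S_{r_j}$ with $r_j:=b_{j-1}/(1-n^{-2})$, up to null spherical boundaries. If $r_j$ is already at least the initial scale $R_0$, or large enough that GPV sampling at $s=r_j\sqrt{2\pi/n}\ge s_n$ is valid, we initialize directly on $A_j$ by Lemma~\ref{lem:single-init}. Lemma~\ref{lem:quant-gpv} explicitly allows any $s\ge s_n$. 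Otherwise we choose the smallest $m_j$ with $\gamma^{-m_j}r_j\sqrt{2\pi/n}\ge s_n$ and start at $R_0^{(j)}:=\gamma^{-m_j}r_j$ with Gaussian parameter $R_0^{(j)}\sqrt{2\pi/n}$. This gives depth $m_j=O(\log^3 n)$ and radii $R^{(j)}_i=\gamma^{i}R_0^{(j)}$, all predetermined and lying in $[\Rcat,\exp(O(\log^2 n))\Rlambda]$. The leaf lengths are inflated so that the terminal length is $M_{\rm out}$, i.e.\ $M_0^{(j)}=16^{m_j}M_{\rm out}\,2^{O(1)}=N2^{o(n)}$.

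The first step is to check that the imported lattice events cover all these new transitions. The family of all shells and transitions over all $j$ has size $J\cdot O(\log^3 n)=\poly(n)$ and is predetermined. Lemmas~\ref{lem:shell-counts} and~\ref{lem:shell-interface} therefore apply simultaneously. Here I assume that the ``centered shell event'' in the hypothesis is exactly this enlarged good event, together with~\eqref{eq:dgs-shell-count-event}. There is one subtlety: Lemma~\ref{lem:shell-interface} requires $R\ge R_m$ for source radii. Every source radius satisfies $R^{(j)}_i\ge r_j/\gamma\ge\Rcat/\gamma=R_m$, since $r_j\ge b_0=\Rcat$. So this requirement holds.

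Next, I redo the bank concentration of Lemma~\ref{lem:bank-good} with the new list sizes. The sampled means remain $\exp(\Omega(n/\ell))$ because $M_i^{(j)}\ge M_{\rm out}\ge N e^{3n/(2\ell)}$ and $N\alpha_R/V_R\ge e^{-O(n/\ell)}$ with a controlled constant. This is the step I expect to be the main obstacle. I have to verify that the exponent bookkeeping $\log(N\beta_R/V_R)=n/(3\ell)+\dots$ still leaves $\frac{3n}{2\ell}+\frac{n}{3\ell}-\frac{n}{\ell}>0$ at the smallest source radius $R_m$. Checking this, it gives $\frac{5n}{6\ell}>0$, which is fine. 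For larger $R$ the factor $(1-\gamma^2/4)^{n/2}\gamma^n$ is radius-independent, so nothing worsens. With the means under control, Theorem~\ref{thm:exact-regeneration} and Lemma~\ref{lem:constant-throughput} give the exact iid-uniform law on each $A_j$ conditional on construction success. The argument of Theorem~\ref{thm:slow-tree} then runs per tree, and a union bound over the $\poly(n)\cdot 2^{o(n)}$ merges keeps the failure probability at $\exp[-\exp(\Omega(n/\ell))]$. Because all trees use disjoint randomness, the ideal terminal lists are mutually independent.

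For the coupling, we apply Theorem~\ref{thm:init-iid} jointly to all leaves of all $J$ trees, which number $N2^{o(n)}$ in total. This gives a joint initialization error of $e^{-\Omega(n)}$. Lemma~\ref{lem:execution-transfer}, with $\zeta\le e^{-3n}$ from Lemma~\ref{lem:gfh-specialized}(ii), then transfers the whole execution, for a total of $2^{-\Omega(n)}$. For cost, Lemma~\ref{lem:nns-workload} already includes the $O(n^2\log^2 n)$ outer-shell transitions in its workload event. Hence the proof of Theorem~\ref{thm:nns-reporter}, with the reverse-query domination of Lemma~\ref{lem:nns-reverse-work}, gives expected work $W_n^{-1}2^{o(n)}$ per tree. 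Summing over $J=\poly(n)$ trees preserves this. Markov's inequality with $e^{h_n}$ slack gives the high-probability bound, using the same failure accounting as in Appendix~\ref{app:nns-full}. For space, the trees are processed one after another, each depth-first. Only the $J$ output lists of length $M_{\rm out}$ are retained, which totals $\poly(n)N e^{3n/(2\ell)}=N2^{o(n)}$.
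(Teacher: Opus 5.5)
Your proposal is correct and follows essentially the same route as the paper. You align each tree via the unique $m_j$ with $\gamma^{-m_j}r_j\in[R_0,R_0/\gamma)$, or $m_j=0$ when $r_j\ge R_0$, use leaf lengths $16^{m_j}M_{\rm out}$, check that every source radius is at least $r_j/\gamma\ge\Rcat/\gamma=R_m$, verify that the smaller bank mean is $\exp((3/2-2/3)n/\ell-O(n/\ell^2+\log n))=\exp(\Omega(n/\ell))$, and then combine the polynomially many disjoint-randomness trees and transitions by union bounds and sums, with the sequential depth-first storage and Markov slack you add being consistent with what the paper delegates to Sections~\ref{sec:init}--\ref{sec:nns-middle} and Theorem~\ref{thm:nns-reporter}.
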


\begin{proof}
Fix the center radius $r_j = b_{j-1} / (1 - n^{-2})$ of $A_j$. Its last transition source, when a transition is needed, is $r_j / \gamma \ge \Rcat / \gamma = R_m$. If $r_j < R_0$, choose the unique integer $m_j \ge 1$ for which
\begin{align}
 \widehat R_{0,j} := r_j\gamma^{-m_j} \in [R_0, R_0/\gamma);
\end{align}
if $r_j \ge R_0$, put $m_j = 0$ and $\widehat R_{0,j} = r_j$. The GPV parameter $\widehat s_j = \widehat R_{0,j} \sqrt{2 \pi / n}$ is at least $s_{\rm hi}$, so the initialization proof of Section~\ref{sec:init} applies verbatim at this aligned radius. Also $m_j = O(\log^3 n)$ and all radii are at most $\exp(O(\log^2 n)) \Rlambda$.

Choose every leaf length as $16^{m_j}M_{\rm out}$, so that successive divisions by $16$ leave exactly $M_{\rm out}$ entries at the root. The factor $16^{m_j}$ is $2^{o(n)}$, and the smaller sampled-bank mean is at least $\exp((3/2-2/3)n/\ell-O(n/\ell^2+\log n))=\exp(\Omega(n/\ell))$. Every source transition is covered by Lemma~\ref{lem:shell-interface}; Theorem~\ref{thm:exact-regeneration} then gives an exact iid-uniform root conditional on construction success. Use disjoint initialization and construction randomness for distinct $j$. There are only $J$ trees and $J O(\log^3 n) = \poly(n)$ distinct transitions, so the exponentially small GPV coupling and search procedure-recall errors, the doubly exponentially small bank and throughput failures, and the time and space estimates of Sections~\ref{sec:init}--\ref{sec:nns-middle} survive their union and sum. Here the search procedure traversal is completed rather than stopped at the optional global work cap; Theorem~\ref{thm:nns-reporter} supplies its conditional expected and high-probability work bounds. \qed
\end{proof}


\subsection{Exact enumeration below the critical output radius}
\label{app:short-catalogue}

The short catalogue is exactly the one used for SVP in Section~\ref{sec:finish}; here we use smaller terminal lists sufficient for DGS.

\begin{proposition}[Exact critical short-vector catalogue]
\label{prop:short-catalogue}
For a set of lattices of Haar measure $1-e^{-\Omega(n/\log n)}$, completed preprocessing returns $\mathcal C_{\rm short}=\cL\cap\mathcal B_{\Rcat}$ exactly except with conditional probability $2^{-\Omega(n)}$. On each such lattice it takes expected time $W_n^{-1}\exp(O(h_n))$, obeys that bound with probability $1-e^{-\Omega(n/\log n)}$, and uses space $N\exp(O(n/\log n))$.
\end{proposition}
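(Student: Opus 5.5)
The plan is to reuse the SVP pipeline of Sections~\ref{sec:init}--\ref{sec:finish} verbatim. Only three things change: the terminal lists may be shorter, every search traversal is completed rather than capped, and the failure accounting must be conditional on a fixed good lattice, not joint over lattice and algorithm.

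First fix the good-lattice set. It is the intersection of $\mathcal G_{\rm cen}$, the preprocessing event of Lemma~\ref{lem:preprocess}, the minimum event of Lemma~\ref{lem:lambda-scale}, the workload event of Lemma~\ref{lem:nns-workload}, and the catalogue-size event behind~\eqref{eq:catalogue-size-cap}. Each piece has Haar measure $1-e^{-\Omega(n/\log n)}$, and none depends on $s$, so a union bound gives the stated measure.

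On such a lattice we run the centered construction with terminal length $M_m$; any length with $M_m/|U_{R_m}|=\exp(\Omega(n/\ell))$ suffices for the coverage bound~\eqref{eq:terminal-coverage}. The conditional failure probability of the ideal execution then decomposes into four parts:
\begin{itemize}
\item bank goodness (Lemma~\ref{lem:bank-good}) and throughput (Lemma~\ref{lem:constant-throughput}), both doubly exponentially small;
\item terminal coverage (Lemma~\ref{lem:terminal-coverage}), again doubly exponentially small;
\item the GPV initialization coupling (Theorem~\ref{thm:init-iid}), which is $e^{-\Omega(n)}$;
\item the joint search-miss probability $e^{-3n}$ from Lemma~\ref{lem:gfh-specialized}(ii).
\end{itemize}
Lemma~\ref{lem:execution-transfer} moves these bounds from the ideal execution to the implemented one, with $\delta+\zeta=2^{-\Omega(n)}$. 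On the success event, Corollary~\ref{cor:complete-short-catalogue} together with Lemma~\ref{lem:critical-coverage} gives $\mathcal C_{\rm short}=\cL\cap\mathcal B_{\Rcat}$ exactly.

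The main obstacle is the quantifier structure. Lemma~\ref{lem:critical-coverage} and the count bounds are already part of the lattice set, so every remaining error must be purely algorithmic and of size $2^{-\Omega(n)}$. This is satisfied: the only $e^{-\Omega(n/\log n)}$ terms (the Markov bounds on work and catalogue size, and Lemma~\ref{lem:critical-coverage}) either lie inside the lattice set or affect only runtime. A second subtlety is that DGS completes the traversals uncapped. The correctness event therefore never conditions on runtime, and the catalogue-size cap $L_{\rm cap}$ is applied only on a lattice event, so it causes no algorithmic failure.

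For resources, follow the proof of Theorem~\ref{thm:nns-reporter}(i). Lemma~\ref{lem:nns-workload} handles forward and terminal calls, and Lemma~\ref{lem:nns-reverse-work} with~\eqref{eq:reverse-domination} handles reverse calls. Bounding the failure branches by the crude worst case $e^{(0.491+o(1))n}$, weighted by their tiny probabilities, gives $\E[\mathsf T\mid\cL]\le W_n^{-1}\exp(h_n+O(n/\ell))$. Markov's inequality then gives the high-probability bound $W_n^{-1}\exp(3h_n)$ with failure $e^{-2h_n+O(n/\ell)}$, which is at most $e^{-\Omega(n/\log n)}$. Space is $N\exp(O(n/\log n))$, combining $M_{\rm mem}$, the per-query count caps, and the catalogue stored on its size event.
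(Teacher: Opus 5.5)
Your proposal is correct and follows essentially the same route as the paper's proof: coupon-collector coverage of the terminal shell plus Corollary~\ref{cor:complete-short-catalogue} for exactness, completed traversals with the retained size cap~\eqref{eq:catalogue-size-cap}, and Theorem~\ref{thm:nns-reporter} with Lemma~\ref{lem:execution-transfer} for the work and conditional error bounds. The paper just fixes the terminal length concretely as $M_{\rm out}$ (leaves of length $16^mM_{\rm out}$, giving $M_{\rm out}/|U_{R_m}|=\exp(n/(2\ell)+O(\log n))$), which your generic length criterion covers; it also builds these two trees independently of the outer-shell trees, which matters only for the later joint DGS bound.
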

\begin{proof}
Run two independent aligned trees at target radius $R_m$, each with leaf length $16^mM_{\rm out}$. Their terminal lists $L_{m,0},L_{m,1}$ each have length $M_{\rm out}$. By~\eqref{eq:dgs-critical-source-size},
$M_{\rm out}/|U_{R_m}|=\exp(n/(2\ell)+O(\log n))$.
Thus Lemma~\ref{lem:terminal-coverage}'s coupon-collector proof still gives failure $\exp[-\exp(\Omega(n/\ell))]$. Corollary~\ref{cor:complete-short-catalogue} recovers the exact catalogue. Use completed search procedure traversals and retain the size cap~\eqref{eq:catalogue-size-cap}; on its lattice event this cap never removes a genuine catalogue. Theorem~\ref{thm:nns-reporter} and Lemma~\ref{lem:execution-transfer} give the work and conditional error bounds. Construct both trees independently of all outer-shell trees. \qed
\end{proof}

\subsection{Uniform Gaussian tails}

The finite sampling support must capture all but exponentially small Gaussian mass throughout the parameter range where it is used. For small parameters, a first-moment bound shows that the mass is concentrated at zero. For intermediate parameters, shell-population estimates compare the discrete tail with a continuous Gaussian tail. A chi-square estimate controls the latter, while Poisson summation and monotonicity provide the normalization and far-tail bounds needed to make the conclusion uniform.

For $C > 1$ put
\begin{align}
 I(C) := \frac{C - 1 - \log C}{2} > 0.
\end{align}
The continuous Gaussian identity
\begin{equation}
 \int_{\|\mathbf x\| > s \sqrt{C n / (2 \pi)}} e^{-\pi \|\mathbf x\|^2 / s^2} d\mathbf x = s^n \Prb[\chi_n^2 > C n] \le s^n e^{-I(C) n + o(n)} \label{eq:dgs-continuous-tail}
\end{equation}
is the standard Chernoff bound for a chi-square variable with $n$ degrees of freedom.

\begin{lemma}[Small parameters and the uniform upper tail]
\label{lem:dgs-tails}
On an event of probability $1-e^{-\Omega(n)}$, the following statements hold simultaneously.
\begin{enumerate}
\item[(i)] For every $0 < s \le 1/2$,
\begin{align}
 1 - D_{\cL,s}(\mathbf0) \le e^{-n/3}.
\end{align}
\item[(ii)] For every $1/2<s<s_{\rm hi}$, put $r_s:=s\sqrt{10n/(2\pi)}$, which exceeds $\Rcat$ for all sufficiently large $n$. Then
\begin{align}
 \frac{\rho_s(\cL\setminus \mathcal{B}_{r_s})}{\rho_s(\cL)} \le e^{-n/2}.
\end{align}
\end{enumerate}
\end{lemma}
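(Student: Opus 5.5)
Part (i) rests on a first-moment bound over all nonzero lattice vectors; part (ii) compares the discrete tail with the continuous chi-square tail, using the shell-population event near the typical radius and crude point counts far out. Both parts should hold simultaneously for all $s$ in their ranges, so every event I use will be width-independent. Monotonicity in $s$ then lets a fixed set of events cover the whole continuum of widths.

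For (i), note that $1-D_{\cL,s}(\mathbf 0)\le\rho_s(\cL\setminus\{\mathbf0\})$, and this quantity is increasing in $s$. It therefore suffices to bound it at $s=1/2$. By Lemma~\ref{thm:siegel-ext},
\[
\E_{\cL}\rho_{1/2}(\cL\setminus\{\mathbf0\})=\int_{\R^n}e^{-4\pi\|\mathbf x\|^2}\,d\mathbf x=2^{-n}.
\]
Markov's inequality then gives $\rho_{1/2}(\cL\setminus\{\mathbf0\})\le e^{-n/3}$ except with probability $2^{-n}e^{n/3}=e^{-\Omega(n)}$, since $\log 2>1/3$. This event does not depend on $s$, so it covers all $s\le 1/2$ at once.

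For (ii), I first lower-bound the denominator and then split the tail. On the event of (i) we have $\rho_s(\cL)\ge 1$. In addition, $\rho_s(\cL)\ge\rho_s(\cL\cap S)$ for a suitable shell $S$ near radius $s\sqrt{n/(2\pi)}$, which gives roughly $s^n e^{-o(n)}$ once the shell populations are known. For the numerator I decompose $\cL\setminus\mathcal B_{r_s}$ into the thin shells of a predetermined geometric grid of radii between $1.1\Rlambda$ and $\exp(O(\log^2 n))\Rlambda$, with ratio $q_n$. This gives $\poly(n)$ shells, to which I apply Lemma~\ref{lem:shell-counts} once, so the event is width-independent. On each shell $A$, the Gaussian weight varies by at most a factor $1+O(1/n)$ relative to its continuous counterpart, and $|\cL\cap A|\le 2\vol(A)$. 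Summing, the numerator restricted to radii up to $\exp(O(\log^2 n))\Rlambda$ is at most $(2+o(1))$ times the continuous tail~\eqref{eq:dgs-continuous-tail} with $C=10$, namely $s^n e^{-I(10)n+o(n)}$, where $I(10)=(9-\log 10)/2\approx3.35$. Radii beyond the grid carry Gaussian weight $\exp(-\exp(\Omega(\log^2 n)))$ relative to $s^n$. I will bound that remainder by a crude Siegel/Markov first moment, or via a Poisson-summation bound comparing $\rho_s(\cL\setminus\mathcal B_r)$ at the largest width $s_{\rm hi}$.

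The main obstacle is the regime of small $s$, roughly $1/2<s\lesssim\Rlambda\sqrt{2\pi/n}\cdot O(1)$. There the typical radius $s\sqrt{n/(2\pi)}$ falls below $1.1\Rlambda$, so the shell-population lemma gives no lower bound for the denominator. I would handle it by using only $\rho_s(\cL)\ge 1$, because in this range the continuous tail $s^ne^{-I(10)n}$ is itself at most $e^{-n/2}$. Indeed, $s^n\le (O(1))^n$ while $e^{-I(10)n}\approx e^{-3.35n}$, which leaves room after checking the constant. The two regimes meet provided the constants are compatible, and I expect that verification to be the main bookkeeping. For large $s$, up to $s_{\rm hi}$, I would instead use the lower bound $\rho_s(\cL)\ge s^n(1-o(1))$. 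This can come either from the shell near the Gaussian mode or from Poisson summation with smoothing (Lemma~\ref{lem:quant-gpv} only covers $s\ge s_n$, so I rely on the shell argument). The ratio is then $e^{-(I(10)-o(1))n}\le e^{-n/2}$. Finally I intersect the $\poly(n)$ width-independent events, which fail with probability $e^{-\Omega(n)}$ in total.
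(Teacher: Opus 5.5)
\noindent Part (i) is the paper's argument verbatim. For (ii), your numerator treatment follows the paper: shell populations on a $q_n$-geometric grid, comparison with the $\chi^2$ tail at $C\approx 10$, and a first-moment bound with monotonicity in $s$ for the far tail.

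The genuine difference is the denominator. You assume Poisson summation only helps above smoothing, so you lower-bound $\rho_s(\cL)$ by a grid shell at the Gaussian mode. That is what forces the case split near $s\approx 1.1\Rlambda\sqrt{2\pi/n}$, which you call the main obstacle. The paper instead notes that for unimodular $\cL$, $\rho_s(\cL)=s^n\rho_{1/s}(\cL^*)\ge s^n$ for every $s>0$, because all dual terms are positive and the zero term equals $1$. Smoothing is only needed for the matching upper bound. Combined with the primal zero term, this gives the deterministic bound $\rho_s(\cL)\ge\max\{1,s^n\}$, with no event, no mode-shell family and no case split. Your route still closes. The mode shell gives $\rho_s(\cL)\ge s^n\,\Theta(n^{-3/2})$; this is not $s^n(1-o(1))$ as you later write, but $I(10)\approx 3.35$ leaves ample room. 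The two regimes also overlap, since $s^n e^{-I(10)n}\le e^{-n/2}$ for all $s\le e^{I(10)-1/2}\approx 17$. It is simply more work.

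One step needs repair. The Gaussian weight does not vary by only $1+O(1/n)$ across every grid shell. Across a shell at radius $b$ the ratio is $\exp(\Theta(\pi b^2/(n^2 s^2)))$. This is $1+O(1/n)$ only for $b=O(s\sqrt n)$ and is unbounded for $b\gg sn$, which lies well inside your grid when $s$ is near $1/2$. The paper avoids this by comparing each shell with the preceding one. Since $g_s$ is radially decreasing and consecutive shell volumes differ by the factor $q_n^n=1+O(1/n)$, one has $|\cL\cap A_j|\,g_s(b_{j-1})\le(1+\varepsilon_{\rm pop})V_j g_s(b_{j-1})\le(1+\varepsilon_{\rm pop})q_n^n\int_{A_{j-1}}g_s$. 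Hence the discrete tail is at most $(1+o(1))$ times the continuous integral from $r_s/q_n^2$.

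For the far tail, your idea of comparing at $s_{\rm hi}$ is right, and the paper makes it precise in two cases. For $1\le s\le s_{\rm hi}$ and $r\ge R_{\rm hi}$, $s^{-n}e^{-\pi r^2/s^2}$ is increasing in $s$ because $2\pi r^2/s^2\ge 10n>n$. For $1/2<s<1$, the summands are dominated pointwise by their values at $s=1$. So a single Siegel--Markov bound at one width covers the whole continuum. Your alternative of pointwise domination by $\rho_{s_{\rm hi}}$ together with $\rho_s(\cL)\ge 1$ also works. It requires the grid to extend sufficiently far beyond $s_{\rm hi}$ (for instance to $n\,s_{\rm hi}$), which is still within the radius range allowed by Lemma~\ref{lem:shell-counts}.
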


\begin{proof}
For (i), Siegel gives
\begin{align}
 \E \rho_{1/2}(\cL \setminus\{\mathbf 0\}) = 2^{-n}.
\end{align}
Markov bounds the probability that this mass exceeds $e^{-n/3}$ by $e^{-(\log2-1/3)n}$. The nonzero mass is monotone in $s$, and $\rho_s(\cL) \ge 1$.

For (ii), first consider the part between $r_s$ and $R_{\rm hi}$ on the event of Lemma~\ref{lem:shell-counts}. Using the decreasing radial profile $g_s$ defined above,
\begin{align}
 |\cL \cap A_j| g_s(b_{j-1}) \le (1 + \varepsilon_{\rm pop}) V_j g_s(b_{j-1}).
\end{align}
The right side is at most $(1 + o(1))$ times the continuous Gaussian integral over the preceding radial shell: that shell has volume $q_n^{-n} V_j$ and $g_s(r) \ge g_s(b_{j-1})$ there. Include the whole shell that crosses $r_s$ if necessary. After comparison with preceding shells, the lower integration boundary is at least $r_s/q_n^2$. The continuous bound~\eqref{eq:dgs-continuous-tail}, with $C=10/q_n^4=10+O(n^{-2})$ and divided by the deterministic lower bound
\begin{equation}
 \rho_s(\cL) \ge \max\{1, s^n\}, \label{eq:dgs-normalizer-lower}
\end{equation}
therefore gives $e^{-I(10) n + o(n)}$.

It remains to control radii beyond $R_{\rm hi}$. The second inequality in~\eqref{eq:dgs-normalizer-lower} follows from Poisson summation: $\rho_s(\cL) = s^n \rho_{1/s}(\cL^*) \ge s^n$. For $1 \le s \le s_{\rm hi}$ and $r \ge R_{\rm hi}$, the function
\begin{align}
 s^{-n} e^{-\pi r^2 / s^2}
\end{align}
is increasing in $s$, because $2 \pi r^2 / s^2 \ge 10n$. Consequently,
\begin{align*}
 \frac{\rho_s(\cL\setminus \mathcal{B}_{R_{\rm hi}})}{\rho_s(\cL)} &\le s^{-n}\rho_s(\cL\setminus \mathcal{B}_{R_{\rm hi}}) \le s_{\rm hi}^{-n}\rho_{s_{\rm hi}}(\cL\setminus \mathcal{B}_{R_{\rm hi}}).
\end{align*}
Siegel and~\eqref{eq:dgs-continuous-tail} bound the expectation of the last expression by $e^{-I(10)n+o(n)}$, and Markov bounds it by $\tfrac13e^{-n/2}$ except with probability $e^{-(I(10)-1/2)n+o(n)}$. Here $I(10)>3$, so the intermediate radial part is also at most $\tfrac13e^{-n/2}$. For $1/2 < s < 1$, the unnormalized summands are dominated by their values at $s = 1$; since $R_{\rm hi} / 1 = \exp(\Omega(\log^2 n)) \Rlambda$, their Siegel expectation is superexponentially small. A final Markov bound at $\tfrac13e^{-n/2}$ proves the claim uniformly over the whole middle regime. \qed
\end{proof}


\subsection{Shell-mixture rejection and proof of the theorem}

For intermediate Gaussian widths, we choose a group using its Gaussian envelope weight and then correct the proposal by rejection sampling. Exact Gaussian sampling within the short catalogue and uniform sampling on outer shells produce a common law whose errors come from the shell-population approximation and the omitted tail. Consuming successive unused entries from independent shell streams preserves the product law for the batch. We then bound the proposal budget and combine this construction with the small-width and direct-GPV regimes.

\begin{lemma}[Rejection sampling from independent group streams]
\label{lem:group-stream-rejection}
Let $\mathcal I$ be finite. For each $j\in\mathcal I$, let $B_j>0$, let $Q_j$ be a probability distribution on a disjoint finite or countable set $\Omega_j$, and let $\xi_j:\Omega_j\to[0,1]$. Assume $\sum_{j\in\mathcal I}B_j\E_{Q_j}[\xi_j]>0$. Repeat independently: choose $j$ with probability $B_j/\sum_{i\in\mathcal I}B_i$; draw $X\sim Q_j$; and accept $X$ with probability $\xi_j(X)$. The accepted values are iid with law
\begin{equation}
 \Prb[X=\mathbf x\mid\mathrm{accept}]
 =\frac{B_jQ_j(\mathbf x)\xi_j(\mathbf x)}{\sum_{i\in\mathcal I}B_i\,\E_{Y\sim Q_i}\xi_i(Y)}
 \qquad(\mathbf x\in\Omega_j).
 \label{eq:group-stream-law}
\end{equation}
The same statement holds if the draws from every $Q_j$ are exposed sequentially from mutually independent iid streams. A finite prefix of each stream couples perfectly to the infinite-stream execution until some prefix is exhausted.
\end{lemma}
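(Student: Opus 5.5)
The plan is to treat one repetition as a single randomized experiment with output in $\{\bot\}\cup\bigsqcup_j\Omega_j$, and then pass to repetition. Since the sets $\Omega_j$ are disjoint, a value $\mathbf x\in\Omega_j$ identifies its group. Conditioning on the chosen index and then on the draw, one trial outputs $\mathbf x$ with probability
\begin{align}
 \frac{B_j}{\sum_i B_i}\,Q_j(\mathbf x)\,\xi_j(\mathbf x).
\end{align}
Summing over all $j$ and $\mathbf x\in\Omega_j$ gives the acceptance probability $p:=\sum_i B_i\E_{Q_i}\xi_i/\sum_i B_i$, which is positive by hypothesis. Dividing the first expression by $p$ yields exactly \eqref{eq:group-stream-law}.

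For the iid claim, I will regard the trials as iid random elements of $\{\bot\}\cup\bigsqcup_j\Omega_j$. The $k$-th accepted value is the output of the trial at the $k$-th success time, and these success times are a.s. finite because $p>0$. By the standard geometric-trials argument, the blocks between consecutive successes are iid. Equivalently, one can sum over the possible success times, factor the joint probability, and the waiting-time factors cancel. Hence the accepted values are iid, each with the law given by the per-trial conditional law.

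For the stream version, I will build an equivalent execution. Draw the index sequence and the acceptance coins independently, and let the $t$-th trial that selects $j$ read the next unused entry of stream $j$. Conditional on the index sequence, the entries read are distinct elements of mutually independent iid streams. So, jointly, the per-trial draws are independent with $X_t\sim Q_{j_t}$, and the trial sequence has the same law as in the fresh-draw model; the previous paragraph then applies verbatim. The main point needing care here is that which stream entry is consumed depends only on the index choices, not on the values seen. Consequently, no adaptive selection bias arises. Acceptance coins are used only after the entry is read and do not affect which entry is read next.

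Finally, for the finite-prefix claim, I will run the truncated and infinite-stream executions with the same indices, coins, and stream entries. The two executions read identical data and so agree until the first time some stream is asked for an entry beyond its prefix. This gives a perfect coupling up to that exhaustion time.
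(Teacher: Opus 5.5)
Your proposal is correct and follows the paper's proof essentially step by step. Both arguments use the one-proposal joint mass $B_jQ_j(\mathbf x)\xi_j(\mathbf x)/\sum_i B_i$ normalized by the acceptance probability, obtain iid accepted values by thinning independent trials, handle the stream version by reading only the next unused entry of the chosen stream, and get the finite-prefix coupling by running both executions on common data. The one small difference is that you condition on the entire index sequence in advance, which relies on the group choices being independent of the exposed values. The paper instead conditions sequentially on the exposed history (unused suffixes keep their product law), which also covers weights chosen adaptively from earlier outputs as used in Corollary~\ref{cor:dgs-adaptive}, though that extension is not part of the stated lemma.
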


\begin{proof}
In one proposal the joint mass of choosing group $j$, drawing $\mathbf x$, and accepting is $B_jQ_j(\mathbf x)\xi_j(\mathbf x)/\sum_{i\in\mathcal I}B_i$. Summing this expression gives the one-proposal acceptance probability; conditioning proves \eqref{eq:group-stream-law}. Independent proposals followed by ordinary thinning give an iid accepted sequence. Conditional on the history of exposed entries and randomness, every unused stream suffix retains its independent product law. Expose only the next entry of the chosen stream; induction gives the same proposal process and the finite-prefix coupling. The argument also applies when the weights change between queries as a function of the previously returned outputs. \qed
\end{proof}

For $1/2<s<s_{\rm hi}$, include the shells indexed by
\begin{equation}
 \mathcal J_s:=\{1\le j\le J:b_{j-1}<r_s\},\qquad r_s=s\sqrt{10n/(2\pi)}.
 \label{eq:dgs-included-shells}
\end{equation}
Use the shell volume multiplied by the largest Gaussian weight in the shell
\begin{equation}
 B_0:=\sum_{\mathbf x\in\mathcal C_{\rm short}}g_s(\|\mathbf x\|),
 \qquad B_j:=V_jg_s(b_{j-1})\quad(j\in\mathcal J_s).
 \label{eq:dgs-group-weights}
\end{equation}
For $\mathbf x\in A_j$, the within-shell acceptance probability is
\begin{equation}
 \xi_{s,j}(\mathbf x):=\frac{g_s(\|\mathbf x\|)}{g_s(b_{j-1})}
 =\exp\!\left(-\pi\frac{\|\mathbf x\|^2-b_{j-1}^2}{s^2}\right).
 \label{eq:dgs-shell-rejection}
\end{equation}
The ideal proposal process uses the exact short catalogue, mutually independent infinite iid-uniform streams on the included lattice shells, and fresh group-selection and acceptance randomness. It runs without preprocessing failures or a proposal cap. Algorithm~\ref{alg:batch-dgs} is the finite implementation.

\begin{algorithm}[!t]
\caption{One batch-DGS query}
\label{alg:batch-dgs}
\begin{algorithmic}[1]
\Require parameter $s$; short catalogue $\mathcal C_{\rm short}$; outer streams $(L_j)_{j=1}^J$; batch size $K_n$
\If{preprocessing declared a construction failure}
  \State \textbf{return} $(\mathbf 0, \ldots, \mathbf 0)$
\EndIf
\If{$s \le 1/2$}
  \State \textbf{return} $(\mathbf 0, \ldots, \mathbf 0)$
\EndIf
\If{$s \ge s_{\rm hi}$}
  \State run $K_n$ independent capped GPV samplers at parameter $s$
  \If{some GPV sampler hits its internal cap}
    \State \textbf{return} $(\mathbf 0, \ldots, \mathbf 0)$
  \EndIf
  \State \textbf{return} the $K_n$ GPV outputs
\EndIf
\State compute $r_s,\mathcal J_s$ from~\eqref{eq:dgs-included-shells} and the weights from~\eqref{eq:dgs-group-weights}
\State initialize an alias table for $(B_0, (B_j)_{j \in \mathcal J_s})$
\State initialize a second alias table for $(g_s(\|\mathbf x\|))_{\mathbf x\in\mathcal C_{\rm short}}$
\State $\mathsf{out} \gets ()$
\For{$t = 1, \ldots, 2K_n$}
  \If{$|\mathsf{out}| = K_n$}
    \State \textbf{return} $\mathsf{out}$
  \EndIf
  \State draw $I \in \{0\} \cup \mathcal J_s$ proportionally to $B_I$
  \If{$I = 0$}
    \State use the second alias table to draw $X\in\mathcal C_{\rm short}$ proportionally to $g_s(\|X\|)$
    \State append $X$ to $\mathsf{out}$
  \EndIf
  \If{$I \ne 0$}
    \If{$L_I$ has no unused entry} \State \textbf{return} $(\mathbf 0, \ldots, \mathbf 0)$ \EndIf
    \State expose the next unused $X$ from $L_I$
    \State with probability $\xi_{s,I}(X)$, append $X$ to $\mathsf{out}$
  \EndIf
\EndFor
\If{$|\mathsf{out}| = K_n$}
  \State \textbf{return} $\mathsf{out}$
\EndIf
\State \textbf{return} $(\mathbf 0, \ldots, \mathbf 0)$
\end{algorithmic}
\end{algorithm}

\begin{lemma}[Output distribution and acceptance probability]
\label{lem:dgs-mixture-law}
Fix a lattice satisfying Lemmas~\ref{lem:shell-counts} and \ref{lem:dgs-tails}, and fix $1/2 < s < s_{\rm hi}$. The ideal proposal process has accepted law $\widetilde D_{\cL,s}$ given by the normalized weights in~\eqref{eq:dgs-tilde-weight}. Its probability of acceptance in one proposal is $1-O(1/n)$.
\end{lemma}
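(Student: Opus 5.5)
The plan is to apply Lemma~\ref{lem:group-stream-rejection} with index set $\{0\}\cup\mathcal J_s$, weights $B_0,B_j$ from~\eqref{eq:dgs-group-weights}, and then compute the normalizing sum. For the catalogue group take $Q_0(\mathbf x)=g_s(\|\mathbf x\|)/B_0$ on $\Omega_0=\mathcal C_{\rm short}$ with $\xi_0\equiv1$. For $j\in\mathcal J_s$ take $Q_j=\Unif(\cL\cap A_j)$ and $\xi_j=\xi_{s,j}$ from~\eqref{eq:dgs-shell-rejection}, which lies in $[0,1]$ because $\|\mathbf x\|>b_{j-1}$ on $A_j$. The sets $\Omega_j$ are disjoint, since $b_0=\Rcat$ and the $A_j$ are disjoint annuli outside $\mathcal B_{\Rcat}$. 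Formula~\eqref{eq:group-stream-law} then gives the following accepted weights: a catalogue point $\mathbf x$ gets unnormalized weight $B_0\cdot g_s(\|\mathbf x\|)/B_0=\rho_s(\mathbf x)$, and a point $\mathbf x\in\cL\cap A_j$ gets
\begin{align*}
 V_j g_s(b_{j-1})\cdot\frac{1}{|\cL\cap A_j|}\cdot\frac{g_s(\|\mathbf x\|)}{g_s(b_{j-1})}=\frac{V_j}{|\cL\cap A_j|}\,\rho_s(\mathbf x).
\end{align*}
I take this as the definition~\eqref{eq:dgs-tilde-weight} of $\widetilde D_{\cL,s}$. It is the Gaussian restricted to $\mathcal B_{\Rcat}\cup\bigcup_{j\in\mathcal J_s}A_j$, reweighted on shell $j$ by $V_j/|\cL\cap A_j|\in[(1+\varepsilon_{\rm pop})^{-1},(1-\varepsilon_{\rm pop})^{-1}]$ by~\eqref{eq:dgs-shell-count-event}. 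The positivity hypothesis holds trivially because $B_0\ge g_s(0)=1$. Independence of the streams and the sequential-exposure clause of Lemma~\ref{lem:group-stream-rejection} justify using the outer lists as given.

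For the acceptance probability, the one-proposal acceptance equals
\begin{align*}
 \frac{\sum_i B_i\E_{Q_i}\xi_i}{\sum_i B_i}=\frac{B_0+\sum_{j\in\mathcal J_s}\frac{V_j}{|\cL\cap A_j|}\rho_s(\cL\cap A_j)}{B_0+\sum_{j\in\mathcal J_s}V_jg_s(b_{j-1})}.
\end{align*}
It therefore suffices to show that each shell term satisfies $\E_{Q_j}\xi_{s,j}\ge 1-O(1/n)$ uniformly over $j\in\mathcal J_s$. Since $\xi_0=1$, a termwise ratio bound then gives the claim. On $A_j$ we have $\|\mathbf x\|\le q_nb_{j-1}$, so
\begin{align*}
 \frac{\|\mathbf x\|^2-b_{j-1}^2}{s^2}\le (q_n^2-1)\frac{b_{j-1}^2}{s^2}\le O(n^{-2})\cdot\frac{r_s^2}{s^2}=O(n^{-2})\cdot\frac{10n}{2\pi}=O(1/n),
\end{align*}
using $b_{j-1}<r_s$ for $j\in\mathcal J_s$. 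Hence $\xi_{s,j}\ge e^{-O(1/n)}=1-O(1/n)$ pointwise, which gives the bound on the acceptance probability.

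The main obstacle is only bookkeeping. The truncation at $r_s$ and the hypothesis on Lemma~\ref{lem:dgs-tails} are not needed for this lemma; they matter later, when comparing $\widetilde D_{\cL,s}$ to $D_{\cL,s}$. The choice of $\mathcal J_s$ is exactly what keeps $b_{j-1}/s$ bounded by $O(\sqrt n)$, so the pointwise rejection loss stays $O(1/n)$ rather than growing with the radius. I would also remark that the shell counts are nonzero on the population event, since $V_j$ is exponentially large for $b_{j-1}\ge\Rcat>1.1\Rlambda$, so $Q_j$ is well defined.
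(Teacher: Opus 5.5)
Your proposal is correct and follows the paper's proof essentially step for step. Both apply Lemma~\ref{lem:group-stream-rejection} with a Gaussian-weighted, always-accepted catalogue group and uniform outer-shell groups with acceptance $\xi_{s,j}$, which yields the weights $(V_j/n_j)\,g_s(\|\mathbf x\|)$; both then bound the rejection loss by $(q_n^2-1)b_{j-1}^2/s^2\le O(n^{-2})\,r_s^2/s^2=O(1/n)$, using $b_{j-1}<r_s$ for $j\in\mathcal J_s$. One small inconsistency is in your closing remark: the truncation at $r_s$ (the definition of $\mathcal J_s$) is exactly what your acceptance bound uses, so only the tail estimate of Lemma~\ref{lem:dgs-tails} is unnecessary here.
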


\begin{proof}
Let $n_j:=|\cL\cap A_j|$. Apply Lemma~\ref{lem:group-stream-rejection} with the distribution weighted by $g_s(\|\mathbf x\|)$ on $\mathcal C_{\rm short}$ and acceptance one for group zero, and with the uniform law on $\cL \cap A_j$ and acceptance~\eqref{eq:dgs-shell-rejection} for every outer group. It assigns each $\mathbf x \in \cL \cap A_j$ the unnormalized accepted mass
\begin{align}
 V_j g_s(b_{j-1}) \frac1{n_j} \frac{g_s(\|\mathbf x\|)}{g_s(b_{j-1})} = \frac{V_j}{n_j} g_s(\|\mathbf x\|).
\end{align}
Every point in $\mathcal C_{\rm short}$ receives exactly $g_s(\|\mathbf x\|)$. Thus the ideal process is \emph{exactly} distributed according to the normalized weights
\begin{equation}
 \widetilde g_s(\mathbf x) :=
 \begin{cases}
  g_s(\|\mathbf x\|), &\mathbf x \in \mathcal C_{\rm short}, \\
  (V_j / n_j) g_s(\|\mathbf x\|), &\mathbf x \in \cL \cap A_j,\ j \in \mathcal J_s,\\
  0, &\text{otherwise}.
 \end{cases}
 \label{eq:dgs-tilde-weight}
\end{equation}
In particular, neither $\rho_s(\cL)$ nor the exact outer-shell Gaussian masses have to be computed. For every included outer shell,
\begin{align}
 \frac{b_j^2 - b_{j-1}^2}{s^2} \le O(n^{-2}) \frac{r_s^2}{s^2} = O(1/n).
\end{align}
Thus its within-shell acceptance probability is $1 - O(1/n)$. The exact short group accepts surely, so the same lower bound holds after mixing. \qed
\end{proof}

\begin{lemma}[Weighted shell errors and the joint batch law]
\label{lem:dgs-joint}
With probability $1-e^{-\Omega(n)}$ over $\cL$, simultaneously for the predetermined outer shells,
\begin{equation}
 \left|\frac{n_j}{V_j}-1\right|\le\frac{e^{n/50}}{\sqrt{V_j}},
 \qquad n_j:=|\cL\cap A_j|.\label{eq:dgs-square-counts}
\end{equation}
On this event and that of Lemma~\ref{lem:dgs-tails}, for every $1/2<s<s_{\rm hi}$,
\begin{equation}
 \Delta\!\left(\widetilde D_{\cL,s}^{\otimes K_n},D_{\cL,s}^{\otimes K_n}\right)
 \le e^{-\Omega(n)},\label{eq:dgs-product-comparison}
\end{equation}
with constants independent of $s$ and $\cL$ in these events.
\end{lemma}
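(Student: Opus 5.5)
Part (a), the count bound \eqref{eq:dgs-square-counts}, is a second-moment estimate. For each predetermined outer shell $A_j$, Siegel (Lemma~\ref{thm:siegel-ext}) gives $\E n_j=V_j$. Rogers' second-moment formula, already used in Lemma~\ref{lem:lambda-scale}, gives $\mathrm{Var}(n_j)=O(V_j)$. The non-primitive and $\pm$ pair terms contribute only $O(V_j)$ because all radii are at least $\Rcat$. Chebyshev then bounds
\[
\Prb\bigl[|n_j-V_j|>e^{n/50}\sqrt{V_j}\bigr]\le O(e^{-n/25}).
\]
A union bound over the $J=O(n^2\log^2n)$ shells yields probability $1-e^{-\Omega(n)}$.

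Part (b) is the batch comparison, which I would prove through the Hellinger distance. The first ingredient is subadditivity, $H^2(P^{\otimes K},Q^{\otimes K})\le K\,H^2(P,Q)$. Combined with $\Delta\le\sqrt2 H$, it reduces \eqref{eq:dgs-product-comparison} to showing $H^2(\widetilde D_{\cL,s},D_{\cL,s})\le e^{-cn}/K_n$ with $c>0$. Here $K_n\le N=e^{0.1438n}$, so a single-sample Hellinger bound of $e^{-0.2n}$ suffices.

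To obtain that bound, write $Z=\rho_s(\cL)$ and $\widetilde Z=\sum_{\mathbf x}\widetilde g_s(\mathbf x)$. On $\mathcal C_{\rm short}$ and on the included shells, $\widetilde g_s/g_s$ is exactly $w_j:=V_j/n_j$. The weight is $1$ on the catalogue and $0$ beyond the included shells.

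I would split the squared Hellinger distance into the region where $\widetilde g_s$ vanishes and the included region. The first region is the tail outside $\mathcal B_{r_s}$, up to one boundary shell. Its $D_{\cL,s}$-mass is at most $e^{-n/2}$ by Lemma~\ref{lem:dgs-tails}(ii), and it contributes at most this amount.

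On the included region I would use $(\sqrt{aw}-\sqrt{a})^2\le a(w-1)^2$ together with the normalizer ratio. This bounds the contribution by
\[
O\Bigl(\sum_j \frac{\rho_s(\cL\cap A_j)}{Z}(w_j-1)^2\Bigr)+O\bigl((\widetilde Z/Z-1)^2\bigr).
\]
The normalizer ratio satisfies $|\widetilde Z/Z-1|\le\sum_j \frac{\rho_s(A_j\cap\cL)}{Z}|w_j-1|+e^{-n/2}$. Cauchy--Schwarz then controls it by the same weighted sum of squares. So everything reduces to bounding
\[
E_s:=\sum_j \frac{\rho_s(\cL\cap A_j)}{Z}\,\frac{e^{n/25}}{V_j}.
\]

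The main obstacle is bounding $E_s$ uniformly in $s$. For this I would use $\rho_s(\cL\cap A_j)\le (1+o(1))V_j g_s(b_{j-1})$ on the count event, and the lower bound $Z\ge\max\{1,s^n\}$ from \eqref{eq:dgs-normalizer-lower}. These give $E_s\le e^{n/25}\sum_{j\in\mathcal J_s}g_s(b_{j-1})/\max\{1,s^n\}$. Every included shell lies beyond $\Rcat$, and there are only polynomially many shells.

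I expect two regimes. In the first regime, $s$ is small, so that $\Rcat/s\ge\sqrt{n/(2\pi)}\cdot(\text{const}>1)$. There the weight $g_s(\Rcat)=e^{-\pi\Rcat^2/s^2}$ must be compared with $\max(1,s^n)$. The worst case is $s\approx\sqrt{2\pi/n}\,\Rcat$, i.e.\ $s$ near $\sqrt{4/3}$ times the volume-one scale. There the relative mass is about $e^{-n/2}s^{n}\cdot s^{-n}$, so I expect to need the more careful per-shell estimate $g_s(b)V_j/(s^n)\lesssim$ the continuous Gaussian mass of the shell, which is at most $1$. That leaves only the factor $1/V_j\le (\Rcat/\Rlambda)^{-n}\mathrm{poly}(n)=N^{-1}e^{O(n/\ell)}$.

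The bound then becomes $E_s\le \mathrm{poly}(n)\,e^{n/25}N^{-1}e^{O(n/\ell)}$. Checking the constant: $N^{-1}=e^{-0.1438n}$ and $0.04-0.1438<-0.1$, which gives $H^2\le e^{-0.1n}$. This is not quite $e^{-0.2n}$, so $K_nH^2\le e^{0.044n}$ as it stands, and the budget fails.

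I therefore expect the real obstacle to be bookkeeping the per-shell mass relative to $Z$ more sharply. Specifically, I would use $\rho_s(A_j\cap\cL)/Z\cdot V_j^{-1}\approx g_s(b_j)/Z$ directly, which is at most $e^{-\pi b_j^2/s^2}/\max(1,s^n)$, and sum over shells. Since $g_s(b)\le 1$ and $g_s(b)\le s^n$ fails in general, I would instead bound by $\sup_s e^{-\pi\Rcat^2/s^2}s^{-n}$. Optimizing in $s$ gives $(2\pi e\Rcat^2/n)^{-n/2}=N^{-1}e^{O(n/\ell)}$ again. If the exponent budget remains tight, I would tighten the constant $1/50$ in the count event, since Chebyshev allows any $e^{\epsilon n}$. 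The final constants should close as $e^{2\epsilon n}N^{-1}K_n\cdot\mathrm{poly}\le e^{-\Omega(n)}$ only if $H^2$ appears linearly and not squared. I would check whether the Hellinger-of-product route needs $K_nH^2$ small, and if so use the sharper $(w_j-1)^2\le e^{2\epsilon n}/V_j$ with $\epsilon$ small.
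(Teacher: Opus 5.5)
Part (a) is fine and matches the paper, which cites $\mathrm{Var}(n_j)\le 3V_j$. One correction: the dependent-pair terms in Rogers' formula are $O(V_j)$ because the shells are thin, so only $\mathbf y=\pm\mathbf x$ survive; it is not because the radii exceed $\Rcat$. Your Hellinger reduction is a valid substitute for the paper's identity \eqref{eq:dgs-hellinger-normalization}: tensorization, the omitted tail of mass at most $e^{-n/2}$, $(\sqrt{aw}-\sqrt a)^2\le a(w-1)^2$, and Cauchy--Schwarz for the normalizer. You also reach the right intermediate bound, $E_s\lesssim e^{n/25}\sum_{j\in\mathcal J_s}g_s(b_{j-1})/\max\{1,s^n\}$. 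The genuine gap is the next step. As written the proof does not close; you concede $K_nH^2\le e^{0.04n}$. The cause is a miscomputed supremum, not a tight exponent budget.

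You maximize $e^{-\pi\Rcat^2/s^2}s^{-n}$ rather than $e^{-\pi\Rcat^2/s^2}/\max\{1,s^n\}$. The maximizer of the former, $s^2=2\pi\Rcat^2/n\approx 4/(3e)$, lies below $1$. There the binding normalizer bound is $\rho_s(\cL)\ge 1$, and using $s^n$ instead loses a factor $s^{-n}\approx e^{0.36n}$; this is exactly where your $N^{-1}$ comes from. Your detour through ``continuous shell mass at most $1$, times $1/V_j$'' discards the same information: each outer lattice point individually has tiny probability. With the correct denominator, $g_s(\Rcat)$ increases in $s$ on $(0,1]$, and $g_s(\Rcat)s^{-n}$ decreases on $[1,\infty)$ because $2\pi\Rcat^2<n$. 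So the supremum is attained at $s=1$ and equals $e^{-\pi\Rcat^2}=e^{-(2/(3e)+o(1))n}$. This is the paper's per-point estimate \eqref{eq:dgs-sample-point-ratio}: $K_np_j/n_j\le e^{-(\kappa_0-o(1))n}$ with $\kappa_0=\frac{2}{3e}-\frac12\log\frac43>0.101$. Hence $K_n\sum_j p_j(V_j/n_j-1)^2\le 8J\,e^{(2/50-1/10)n}$, and the lemma follows with the stated constant $1/50$.

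Your fallback of shrinking $1/50$ could not have rescued the $N^{-1}$ estimate. A Chebyshev threshold $t\sqrt{V_j}$ fails with probability $3/t^2$, so an $e^{-\Omega(n)}$ failure probability forces $t=e^{\Omega(n)}$. Since $K_n/V_1\ge 1$, your resulting bound $t^2K_n/V_1$ would stay above $t^2\ge 1$.
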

\begin{proof}
The centered shell second moment~\cite[Lemma~4.10]{LaarhovenSpherical2026} gives $\E n_j=V_j$ and $\operatorname{Var}(n_j)\le3V_j$ for large $n$. Chebyshev and a union bound give~\eqref{eq:dgs-square-counts} with failure at most $3J e^{-n/25}$. Put $a_0:=\tfrac12\log(4/3)$. Since $V_j\ge V_1=\exp(a_0n-o(n))$, every relative error in~\eqref{eq:dgs-square-counts} is less than $1/2$.

Write $P=D_{\cL,s}$ and $p_j=P(A_j)$. Each point of $A_j$ has norm greater than $\Rcat$, so~\eqref{eq:dgs-normalizer-lower} implies
\begin{align}
 \frac{K_np_j}{n_j}&\le
 \frac{K_ne^{-\pi\Rcat^2/s^2}}{\max(1,s^n)}
 \le \exp[-(\kappa_0-o(1))n]\le e^{-n/10},\label{eq:dgs-sample-point-ratio}\\
 \kappa_0&:=\frac{2}{3e}-\frac12\log\frac43>0.101.\notag
\end{align}
Indeed, $\pi\Rcat^2/n\to2/(3e)$ and $2\pi\Rcat^2<n$. The middle expression increases up to $s=1$ and decreases thereafter, so its maximum is at $1$. This bound is uniform over all widths and all outer shells.

Define $f(\mathbf x):=\widetilde g_s(\mathbf x)/g_s(\|\mathbf x\|)$ and $A:=\E_P f$. Thus $f=1$ on the short catalogue, $f=V_j/n_j$ on included shells, and $f=0$ elsewhere. The omitted mass $t_s$ is at most $e^{-n/2}$, and $A\ge1/2$ for large $n$. With the convention $H^2(P,Q)=1-\sum_{\mathbf x}\sqrt{P(\mathbf x)Q(\mathbf x)}$, normalization is controlled by the identity
\begin{equation}
 \E_P(\sqrt f-1)^2=(\sqrt A-1)^2+
 2\sqrt A\,H^2(P,\widetilde D_{\cL,s}).
 \label{eq:dgs-hellinger-normalization}
\end{equation}
Consequently,
\begin{align}
 K_nH^2(P,\widetilde D_{\cL,s})
 &\le K_n\sum_{j\in\mathcal J_s}p_j(V_j/n_j-1)^2+K_nt_s\notag\\
 &\le 8J e^{(2/50-1/10)n}+e^{-(1/2-a_0)n}.
 \label{eq:dgs-weighted-square-error}
\end{align}
For the second line use $|V_j/n_j-1|\le2e^{n/50}/\sqrt{V_j}$, $n_j/V_j\le2$, and~\eqref{eq:dgs-sample-point-ratio}. Finally,
$H^2(P^{\otimes K},Q^{\otimes K})\le K H^2(P,Q)$ and $\Delta(P,Q)\le\sqrt{2H^2(P,Q)}$ prove~\eqref{eq:dgs-product-comparison}. The polynomial factor $J$ is absorbed since $2/50-1/10=-3/50<0$. \qed
\end{proof}

\begin{proof}[Theorem~\ref{thm:dgs-main}]
Let $\mathcal H_n$ be the intersection of $\mathcal G_{\rm cen}$, the basis event of Lemma~\ref{lem:preprocess}, the minimum event of Lemma~\ref{lem:lambda-scale}, the uniform-tail and squared-count events of Lemmas~\ref{lem:dgs-tails} and~\ref{lem:dgs-joint}, the catalogue-size event~\eqref{eq:catalogue-size-cap}, and the workload event of Lemma~\ref{lem:nns-workload}. Every event depends on the lattice alone and is independent of $s$. The preceding estimates give $\Prb_{\cL\sim X_n}[\mathcal H_n]\ge1-e^{-\Omega(n/\log n)}$. Fix henceforth one lattice $\cL\in\mathcal H_n$.

The batch preprocessing uses the completed, uncapped search procedure traversals. Every traversal is finite. The initialization coupling, sampled-bank and throughput bounds, and the bound on missed pairs after repeating the searches show that the actual catalogue and outer streams can be jointly coupled to the exact catalogue and mutually independent ideal iid streams with error at most $2^{-c_0 n}$ for an absolute $c_0 > 0$. This bound includes detected construction failures, on which Algorithm~\ref{alg:batch-dgs} returns the displayed default tuple, and undetected search procedure misses. Crucially, it does not condition on a high-probability search procedure-work event.

If $s\le1/2$, the returned zero tuple has joint error at most $K_ne^{-n/3}=e^{-\Omega(n)}$ by Lemma~\ref{lem:dgs-tails}(i).

Suppose $1/2<s<s_{\rm hi}$. Lemmas~\ref{lem:group-stream-rejection} and~\ref{lem:dgs-mixture-law} show that the ideal proposal process produces iid samples from the law $\widetilde D_{\cL,s}$ defined by~\eqref{eq:dgs-tilde-weight}, whose $K_n$-fold product is $2^{-\Omega(n)}$-close to $D_{\cL,s}^{\otimes K_n}$ by Lemma~\ref{lem:dgs-joint}. Extend the ideal preprocessing lists to mutually independent infinite iid streams. Under their product law, each proposal accepts with probability $1-O(1/n)$, so a Chernoff bound gives probability $e^{-\Omega(K_n)}$ that $2K_n$ proposals contain fewer than $K_n$ acceptances. Since $M_{\rm out}>2K_n$, no stream prefix is exhausted before this limit. Whenever preprocessing agrees under the coupling, the actual and ideal queries use identical stream entries and randomness up to this limit. Adding the preprocessing disagreement probability to the ideal proposal-failure probability gives a joint coupling error at most $2^{-c_0n}+e^{-\Omega(K_n)}$ relative to $\widetilde D_{\cL,s}^{\otimes K_n}$. The triangle inequality and~\eqref{eq:dgs-product-comparison} give the claimed true-product guarantee.

For $s\ge s_{\rm hi}$, Lemma~\ref{lem:quant-gpv} bounds the joint error of $K_n$ independent capped GPV calls by $K_ne^{-6n}$ plus the joint internal-cap probability. Including preprocessing failure still gives $2^{-\Omega(n)}$.

The finite short weights, the $J$ continuum weights, and the probabilities used for alias sampling and Bernoulli trials can be computed with per-operation error $e^{-6n}$ using polynomial precision. There are $N 2^{o(n)}$ such operations, so their joint coupling error is again $2^{-\Omega(n)}$. Combining these bounds proves the asserted joint total-variation bound.

For resources, Lemma~\ref{lem:dgs-aligned-lists}, Proposition~\ref{prop:short-catalogue}, and Appendix~\ref{app:nns-full} give conditional expected preprocessing time $W_n^{-1}\exp(O(h_n))$ and overflow probability $e^{-\Omega(n/\log n)}$ at the stated scale. This runtime event is not used in the distributional proof. Weight computation, alias tables, and query outputs cost $N\exp(O(n/\log n))$ operations. The bounded-space traversal and depth-first procedure use space $N\exp(O(n/\log n))$, proving the theorem. \qed
\end{proof}

\begin{corollary}[Adaptive reuse of preprocessing]
\label{cor:dgs-adaptive}
On the same width-independent set $\mathcal H_n$, one preprocessing supports
\begin{equation}
 Q_n:=\left\lfloor\frac{M_{\rm out}}{2K_n}\right\rfloor
 =\exp\!\left(\frac{3n}{2\ell}+O(1)\right)
\end{equation}
queries. Each width may be a function of the lattice, earlier returned batches, and external randomness independent of the hidden preprocessing randomness. Keep the stream cursors between queries and do not reveal unused entries. The joint transcript is within $2^{-\Omega(n)}$ total variation of an ideal oracle returning, at each requested width $s_i$, a fresh batch from $D_{\cL,s_i}^{\otimes K_n}$. Total expected time, high-probability time, and peak space have the same bounds as Theorem~\ref{thm:dgs-main}, excluding computation performed by the querying strategy itself.
\end{corollary}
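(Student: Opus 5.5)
The plan is to reduce adaptive reuse to $Q_n$ sequential applications of the single-query argument in the proof of Theorem~\ref{thm:dgs-main}, carried out once on a single global coupling.

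\textbf{Step 1: one global coupling.} Fix $\cL\in\mathcal H_n$. By the proof of Theorem~\ref{thm:dgs-main}, the implemented preprocessing (catalogue and outer streams) couples to the ideal objects with error $2^{-c_0n}$. Here the ideal objects are the exact $\mathcal C_{\rm short}$ and mutually independent iid-uniform streams of length $M_{\rm out}$, extended to infinite iid streams. This coupling does not depend on any width, so I pay it once. By Lemma~\ref{lem:execution-transfer}, applied to the whole adaptive continuation, it suffices to analyze the ideal preprocessing. That continuation consists of the querier's strategy, which uses only returned outputs and external randomness, together with all query randomness.

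\textbf{Step 2: ideal adaptive queries are fresh iid batches.} I invoke Lemma~\ref{lem:group-stream-rejection}, whose proof explicitly allows weights that change between queries as functions of earlier outputs. Conditional on the transcript so far, the widths $s_1,\dots,s_i$, and the used stream prefixes, every unused stream suffix still has its independent iid product law. The reason is that only the next entry of a chosen stream is ever exposed, and unused entries are never revealed. Query $i$ therefore runs Algorithm~\ref{alg:batch-dgs} at width $s_i$ on fresh iid streams.

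The three width regimes give the following.
\begin{itemize}
\item For $1/2<s_i<s_{\rm hi}$, the accepted outputs are iid from $\widetilde D_{\cL,s_i}$ by Lemma~\ref{lem:dgs-mixture-law}.
\item For $s_i\ge s_{\rm hi}$, the batch comes from fresh GPV calls.
\item For $s_i\le1/2$, the query returns the zero tuple.
\end{itemize}

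I then define a hybrid sequence. Hybrid $k$ answers the first $k$ queries by the ideal oracle and the rest by the ideal algorithm. Adjacent hybrids differ, conditionally on the history, by at most the per-query error $\epsilon_i$. In the middle regime, $\epsilon_i$ is the proposal-exhaustion error $e^{-\Omega(K_n)}$ plus the bound~\eqref{eq:dgs-product-comparison}. In the other regimes it is $K_ne^{-n/3}$ or $K_ne^{-6n}$, plus the precision error. All of these bounds are uniform in $s$, because the events of $\mathcal H_n$ do not depend on $s$. Summing over the $Q_n$ queries gives total error
\[
Q_n\,2^{-\Omega(n)}=\exp\!\bigl(3n/(2\ell)+O(1)\bigr)\,2^{-\Omega(n)}=2^{-\Omega(n)}.
\]

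\textbf{Step 3: stream budget and resources.} Each middle-regime query proposes at most $2K_n$ times and so consumes at most $2K_n$ entries of any single stream. After $Q_n=\lfloor M_{\rm out}/(2K_n)\rfloor$ queries, no finite prefix is exhausted. The finite-prefix coupling in Lemma~\ref{lem:group-stream-rejection} therefore matches the infinite-stream execution exactly. The computation of $Q_n$ follows from $M_{\rm out}/N=\exp(3n/(2\ell))$ and $K_n=\lfloor N\rfloor$.

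For resources, preprocessing is unchanged. Each query costs $N\exp(O(n/\log n))$ operations for the alias tables and sampling, or $K_n\poly(n)$ for GPV. Over all queries this totals
\[
Q_nN\exp(O(n/\log n))=N\exp(O(n/\log n)),
\]
which is dominated by $W_n^{-1}$. Space stays at $N\exp(O(n/\log n))$ if each batch is output and then discarded.

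\textbf{Main obstacle.} The delicate step is Step~2's claim that adaptivity does not bias the unused stream suffixes. The concern is that the widths depend on earlier outputs, and those outputs depend on exposed entries and on rejection decisions. I would handle this by a filtration argument. The query transcript is measurable with respect to the exposed entries together with independent randomness. Unexposed entries are independent of this $\sigma$-algebra, since the cursor position is a stopping-time-like function of exposed data only. Hence, conditionally on the history, the next query sees fresh iid streams.
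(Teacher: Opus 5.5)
Your proposal is correct and follows the paper's proof. The steps match: a single width-independent coupling of the preprocessing with error $2^{-c_0n}$; freshness of the unused stream suffixes under adaptive widths via Lemma~\ref{lem:group-stream-rejection}; a sequential (hybrid) coupling that sums the uniform per-query errors over $Q_n=\exp(o(n))$ queries; the budget $2K_nQ_n\le M_{\rm out}$ that prevents stream exhaustion; and a per-query cost of $N\exp(O(n/\log n))$ multiplied by $Q_n$. One small omission: for $s\ge s_{\rm hi}$ the per-query error should also include the internal GPV-cap probability, which is $e^{-\Omega(n^2)}$ per call and so changes nothing.
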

\begin{proof}
Couple the preprocessing once to independent ideal streams, with error $2^{-c_0n}$. In the ideal execution, conditional on the exposed history, the unused suffixes remain independent iid streams by Lemma~\ref{lem:group-stream-rejection}. At most $2K_n$ entries of any one stream are consumed per query, so $Q_n$ queries cannot exhaust it. The uniform bounds above give per-query conditional error $e^{-cn}+e^{-\Omega(K_n)}$, including small-width, high-width, finite-precision, and proposal-cap errors. Sequential coupling with the ideal oracle bounds transcript distance by $2^{-c_0n}+Q_n\bigl(e^{-cn}+e^{-\Omega(K_n)}\bigr)=2^{-\Omega(n)}$. No bound is asserted conditional on a fixed realized preprocessing state or an arbitrary rare transcript. Each query costs $N\exp(O(n/\log n))$ operations; multiplication by $Q_n=\exp(O(n/\log n))$ preserves the stated total bounds. \qed
\end{proof}


\begin{thebibliography}{99}

\bibitem{ADRS2015SVP}
D. Aggarwal, D. Dadush, O. Regev, and N. Stephens-Davidowitz,
``Solving the Shortest Vector Problem in $2^n$ Time via Discrete Gaussian Sampling,''
\emph{STOC 2015}, 733--742;
doi:\href{https://doi.org/10.1145/2746539.2746606}{10.1145/2746539.2746606}.

\bibitem{ADS2015CVP}
D. Aggarwal, D. Dadush, and N. Stephens-Davidowitz,
``Solving the Closest Vector Problem in $2^n$ Time---The Discrete Gaussian Strikes Again!,''
\emph{FOCS 2015}, 563--582;
doi:\href{https://doi.org/10.1109/FOCS.2015.41}{10.1109/FOCS.2015.41}.

\bibitem{AS2018}
D. Aggarwal and N. Stephens-Davidowitz,
``Just Take the Average! An Embarrassingly Simple $2^n$-Time Algorithm for SVP (and CVP),''
\emph{SOSA 2018}, OASIcs 61, 12:1--12:19;
doi:\href{https://doi.org/10.4230/OASIcs.SOSA.2018.12}{10.4230/OASIcs.SOSA.2018.12}.

\bibitem{AKS2001}
M. Ajtai, R. Kumar, and D. Sivakumar,
``A Sieve Algorithm for the Shortest Lattice Vector Problem,''
\emph{STOC 2001}, 601--610;
doi:\href{https://doi.org/10.1145/380752.380857}{10.1145/380752.380857}.

\bibitem{G6K2019}
M. R. Albrecht, L. Ducas, G. Herold, E. Kirshanova, E. W. Postlethwaite, and M. Stevens,
``The General Sieve Kernel and New Records in Lattice Reduction,''
\emph{EUROCRYPT 2019}, LNCS 11477, 717--746;
doi:\href{https://doi.org/10.1007/978-3-030-17656-3_25}{10.1007/978-3-030-17656-3\_25}.

\bibitem{Athreya2015}
J. S. Athreya,
``Random Affine Lattices,''
\emph{Contemporary Mathematics} 639 (2015), 169--174;
doi:\href{https://doi.org/10.1090/conm/639/12793}{10.1090/conm/639/12793}.

\bibitem{BaiLaarhovenStehle2016}
S. Bai, T. Laarhoven, and D. Stehl\'e,
``Tuple Lattice Sieving,''
\emph{LMS Journal of Computation and Mathematics} 19(A) (2016), 146--162;
doi:\href{https://doi.org/10.1112/s1461157016000292}{10.1112/s1461157016000292}.

\bibitem{Banaszczyk1993}
W. Banaszczyk,
``New Bounds in Some Transference Theorems in the Geometry of Numbers,''
\emph{Mathematische Annalen} 296 (1993), 625--635;
doi:\href{https://doi.org/10.1007/BF01445125}{10.1007/BF01445125}.

\bibitem{BDGL2016}
A. Becker, L. Ducas, N. Gama, and T. Laarhoven,
``New Directions in Nearest Neighbor Searching with Applications to Lattice Sieving,''
\emph{SODA 2016}, 10--24;
doi:\href{https://doi.org/10.1137/1.9781611974331.ch2}{10.1137/1.9781611974331.ch2}.

\bibitem{BGJ2014}
A. Becker, N. Gama, and A. Joux,
``A Sieve Algorithm Based on Overlattices,''
\emph{LMS Journal of Computation and Mathematics} 17 (ANTS XI special issue), 49--70, 2014;
doi:\href{https://doi.org/10.1112/s1461157014000229}{10.1112/s1461157014000229}.

\bibitem{BDSD2016}
H. Bennett, D. Dadush, and N. Stephens-Davidowitz,
``On the Lattice Distortion Problem,''
\emph{ESA 2016}, LIPIcs 57, 9:1--9:17; arXiv:1605.03613;
doi:\href{https://doi.org/10.4230/LIPIcs.ESA.2016.9}{10.4230/LIPIcs.ESA.2016.9}.

\bibitem{BonnetainEtAl2023}
X. Bonnetain, A. Chailloux, A. Schrottenloher, and Y. Shen,
``Finding Many Collisions via Reusable Quantum Walks: Application to Lattice Sieving,''
\emph{EUROCRYPT 2023}, LNCS 14005, 221--251;
doi:\href{https://doi.org/10.1007/978-3-031-30634-1_8}{10.1007/978-3-031-30634-1\_8}.

\bibitem{ChoEtAl2024}
B. Cho, M. Hhan, T. Kim, J. Lee, and Y. Shen,
``Does Quantum Lattice Sieving Require Quantum RAM?,''
Cryptology ePrint Archive, Paper 2024/1700, 2024;
\url{https://eprint.iacr.org/2024/1700}.

\bibitem{DEL2025}
L. Ducas, L. Engelberts, and J. Loyer,
``Wagner's Algorithm Provably Runs in Subexponential Time for $\mathrm{SIS}^{\infty}$,''
\emph{CRYPTO 2025}, LNCS 16000, 353--384; arXiv:2503.23238;
doi:\href{https://doi.org/10.1007/978-3-032-01855-7_12}{10.1007/978-3-032-01855-7\_12}.

\bibitem{DucasPulles2023}
L. Ducas and L. N. Pulles,
``Does the Dual-Sieve Attack on Learning with Errors even Work?,''
\emph{CRYPTO 2023}, LNCS 14083, 37--69;
doi:\href{https://doi.org/10.1007/978-3-031-38548-3_2}{10.1007/978-3-031-38548-3\_2}.

\bibitem{DucasPullesScores2026}
L. Ducas and L. N. Pulles,
``Accurate Score Prediction for Dual-Sieve Attacks,''
\emph{Journal of Cryptology} 39 (2026), Article 8;
doi:\href{https://doi.org/10.1007/s00145-025-09560-7}{10.1007/s00145-025-09560-7}.

\bibitem{GaoFengHuBDGL2026}
Y. Gao, Y. Feng, and H. Hu,
``Discrete Gaussian Sampling Meets BDGL Decoding: Solving the Shortest Vector Problem in $2^{0.5596n+o(n)}$ Time,''
Cryptology ePrint Archive, Paper 2026/1844, 2026;
\url{https://eprint.iacr.org/2026/1844}.

\bibitem{GaoFengHu2026}
Y. Gao, Y. Feng, and H. Hu,
``Solving the Shortest Vector Problem in $2^{0.7314n+o(n)}$ Time via Discrete Gaussian Sampling on Superlattices,''
Cryptology ePrint Archive, Paper 2026/1587, 2026;
\url{https://eprint.iacr.org/2026/1587}.

\bibitem{GPV2008}
C. Gentry, C. Peikert, and V. Vaikuntanathan,
``Trapdoors for Hard Lattices and New Cryptographic Constructions,''
full version, ECCC TR07-133 (2007), Sections~2--3; preliminary version in
\emph{STOC 2008}, 197--206;
doi:\href{https://doi.org/10.1145/1374376.1374407}{10.1145/1374376.1374407}.

\bibitem{HeroldKirshanovaLaarhoven2018}
G. Herold, E. Kirshanova, and T. Laarhoven,
``Speed-Ups and Time--Memory Trade-Offs for Tuple Lattice Sieving,''
\emph{PKC 2018}, LNCS 10769, 407--436;
doi:\href{https://doi.org/10.1007/978-3-319-76578-5_14}{10.1007/978-3-319-76578-5\_14}.

\bibitem{HhanCoset2026}
M. Hhan,
``Finding a Shortest Vector and More in $2^{n/2+o(n)}$ Time Using
$q$-ary Coset Difference Tree,''
arXiv:2609.02764; Cryptology ePrint Archive, Paper 2026/1859, 2026;
\url{https://eprint.iacr.org/2026/1859}.

\bibitem{Hhan2026}
M. Hhan,
``Solving the Shortest Vector Problem in Time $2^{0.6039n}$ via Mid-Point Hessian,''
arXiv:2608.02478; Cryptology ePrint Archive, Paper 2026/1597, 2026;
\url{https://eprint.iacr.org/2026/1597}.

\bibitem{Kim2026}
J. Kim,
``One Discrete Gaussian Sample in $2^{n/2+o(n)}$ Time,''
arXiv:2608.03220; Cryptology ePrint Archive, Paper 2026/1599, 2026;
\url{https://eprint.iacr.org/2026/1599}.

\bibitem{LaarhovenAngular2015}
T. Laarhoven,
``Sieving for Shortest Vectors in Lattices Using Angular Locality-Sensitive Hashing,''
\emph{CRYPTO 2015}, LNCS 9215, 3--22; Cryptology ePrint Archive, Paper 2014/744;
doi:\href{https://doi.org/10.1007/978-3-662-47989-6_1}{10.1007/978-3-662-47989-6\_1}.

\bibitem{LaarhovenThesis2016}
T. Laarhoven,
\emph{Search Problems in Cryptography: From Fingerprinting to Lattice Sieving},
Ph.D. thesis, Eindhoven University of Technology, 2016;
\url{https://thijs.com/docs/phd-final.pdf}.

\bibitem{LaarhovenCVP2016}
T. Laarhoven,
``Sieving for Closest Lattice Vectors (with Preprocessing),''
\emph{SAC 2016}, LNCS 10532, 523--542, Springer, 2017; arXiv:1607.04789;
doi:\href{https://doi.org/10.1007/978-3-319-69453-5_28}{10.1007/978-3-319-69453-5\_28}.

\bibitem{LaarhovenSpherical2026}
T. Laarhoven,
``Spherical Statistics and Phase Transitions in High-Dimensional Random
Lattices,''
arXiv:2609.13635v1, 2026;
\url{https://arxiv.org/abs/2609.13635v1}.

\bibitem{LaarhovenWalter2021}
T. Laarhoven and M. Walter,
``Dual Lattice Attacks for Closest Vector Problems (with Preprocessing),''
\emph{CT-RSA 2021}, LNCS 12704, 478--502;
doi:\href{https://doi.org/10.1007/978-3-030-75539-3_20}{10.1007/978-3-030-75539-3\_20}.

\bibitem{MicciancioVoulgaris09}
D. Micciancio and P. Voulgaris,
``Faster Exponential Time Algorithms for the Shortest Vector Problem,''
\emph{SODA 2010}, 1468--1480;
doi:\href{https://doi.org/10.1137/1.9781611973075.119}{10.1137/1.9781611973075.119}.

\bibitem{MicciancioVoulgaris2010}
D. Micciancio and P. Voulgaris,
``A Deterministic Single Exponential Time Algorithm for Most Lattice Problems Based on Voronoi Cell Computations,''
\emph{STOC 2010}, 351--358; full version in \emph{SIAM Journal on Computing}
42(3), 1364--1391, 2013;
doi:\href{https://doi.org/10.1137/100811970}{10.1137/100811970}.

\bibitem{Mukhopadhyay2021}
P. Mukhopadhyay,
``Faster Provable Sieving Algorithms for the Shortest Vector Problem and the Closest Vector Problem on Lattices in $\ell_p$ Norm,''
\emph{Algorithms} 14(12) (2021), 362;
doi:\href{https://doi.org/10.3390/a14120362}{10.3390/a14120362}.

\bibitem{mlkem}
National Institute of Standards and Technology,
``Module-Lattice-Based Key-Encapsulation Mechanism Standard,''
FIPS 203, 2024;
doi:\href{https://doi.org/10.6028/NIST.FIPS.203}{10.6028/NIST.FIPS.203}.

\bibitem{mldsa}
National Institute of Standards and Technology,
``Module-Lattice-Based Digital Signature Standard,''
FIPS 204, 2024;
doi:\href{https://doi.org/10.6028/NIST.FIPS.204}{10.6028/NIST.FIPS.204}.

\bibitem{NguyenVidick2008}
P. Q. Nguyen and T. Vidick,
``Sieve Algorithms for the Shortest Vector Problem are Practical,''
\emph{Journal of Mathematical Cryptology} 2(2) (2008), 181--207;
doi:\href{https://doi.org/10.1515/jmc.2008.009}{10.1515/jmc.2008.009}.

\bibitem{PoulyShen2026}
A. Pouly and Y. Shen,
``Solving the Shortest Vector Problem in $2^{0.63269n+o(n)}$ Time on Random Lattices,''
\emph{EUROCRYPT 2026}, LNCS 16544, 92--123;
doi:\href{https://doi.org/10.1007/978-3-032-25327-9_4}{10.1007/978-3-032-25327-9\_4}.

\bibitem{PujolStehle09}
X. Pujol and D. Stehl\'e,
``Solving the Shortest Lattice Vector Problem in Time $2^{2.465n}$,''
Cryptology ePrint Archive, Paper 2009/605, 2009;
\url{https://eprint.iacr.org/2009/605}.

\bibitem{Rogers1955}
C. A. Rogers,
``Mean Values over the Space of Lattices,''
\emph{Acta Mathematica} 94 (1955), 249--287;
doi:\href{https://doi.org/10.1007/BF02392493}{10.1007/BF02392493}.

\bibitem{RogersMoments1955}
C. A. Rogers,
``The Moments of the Number of Points of a Lattice in a Bounded Set,''
\emph{Philosophical Transactions of the Royal Society of London A} 248 (1955), 225--251;
doi:\href{https://doi.org/10.1098/rsta.1955.0015}{10.1098/rsta.1955.0015}.

\bibitem{Siegel1945}
C. L. Siegel,
``A Mean Value Theorem in Geometry of Numbers,''
\emph{Annals of Mathematics} 46(2) (1945), 340--347;
doi:\href{https://doi.org/10.2307/1969027}{10.2307/1969027}.

\bibitem{SD2016DGS}
N. Stephens-Davidowitz,
``Discrete Gaussian Sampling Reduces to CVP and SVP,''
\emph{SODA 2016}, 1748--1764; arXiv:1506.07490;
doi:\href{https://doi.org/10.1137/1.9781611974331.ch121}{10.1137/1.9781611974331.ch121}.

\bibitem{svpchallenge}
SVP Challenge,
online database;
\url{https://www.latticechallenge.org/svp-challenge/}
(accessed 16 September 2026).

\end{thebibliography}
\end{document}